\begin{document}

\title{Low-overhead magic state distillation with color codes}

\author{Seok-Hyung Lee}
\email{seokhyung.lee@sydney.edu.au}
\affiliation{Centre for Engineered Quantum Systems, School of Physics, The University of Sydney, Sydney, New South Wales 2006, Australia}

\author{Felix Thomsen}
\affiliation{Centre for Engineered Quantum Systems, School of Physics, The University of Sydney, Sydney, New South Wales 2006, Australia}

\author{Nicholas Fazio}
\affiliation{Centre for Engineered Quantum Systems, School of Physics, The University of Sydney, Sydney, New South Wales 2006, Australia}

\author{Benjamin J. Brown}
\affiliation{IBM Quantum, T. J. Watson Research Center, Yorktown Heights, New York 10598, USA}
\affiliation{IBM Denmark, Sundkrogsgade 11, 2100 Copenhagen, Denmark}

\author{Stephen D. Bartlett}
\email{stephen.bartlett@sydney.edu.au}
\affiliation{Centre for Engineered Quantum Systems, School of Physics, The University of Sydney, Sydney, New South Wales 2006, Australia}

\begin{abstract}
Fault-tolerant implementation of non-Clifford gates is a major challenge for achieving universal fault-tolerant quantum computing with quantum error-correcting codes.
Magic state distillation is the most well-studied method for this but requires significant resources.
Hence, it is crucial to tailor and optimize magic state distillation for specific codes from both logical- and physical-level perspectives.
In this work, we perform such optimization for two-dimensional color codes, which are promising due to their higher encoding rates compared to surface codes, transversal implementation of Clifford gates, and efficient lattice surgery.
We propose two carefully designed distillation schemes based on the 15-to-1 distillation circuit and lattice surgery, differing in their methods for handling faulty rotations.
Our first scheme employs faulty T-measurement, achieving infidelities of $O(p^3)$ for physical noise strength $p$.
To achieve lower infidelities, our second scheme integrates distillation with `cultivation' (a distillation-free approach to fault-tolerantly prepare magic states through transversal Clifford measurements).
Our second scheme achieves significantly lower infidelities (e.g., $\sim 2 \times 10^{-16}$ at $p = 10^{-3}$), surpassing the capabilities of both cultivation and single-level distillation.
Notably, to reach a given target infidelity, our schemes require approximately two orders of magnitude fewer resources than the previous best magic state distillation schemes for color codes.
\end{abstract}

\maketitle

\section{Introduction \label{sec:introduction}}

A large-scale quantum computer should possess two properties: \textit{universality} and \textit{fault tolerance}. 
Universality refers to the capability of applying any unitary operations on qubits with arbitrary accuracy. 
This is achievable if the computer can implement a universal set of gates, which commonly includes the Hadamard gate ($H$), phase gate ($S = Z^{1/2}$), controlled-\nnot (\cnot) gate, and T gate ($T = Z^{1/4}$) \cite{nielsen2010quantum}. 
While there is flexibility in this choice of gate set, for universality it must contain at least one non-Clifford gate, such as the T gate in this example. 
Fault tolerance is essential due to the noisy physical environments that may corrupt quantum information stored in qubits. 
Stabilizer quantum error-correcting (QEC) codes can be employed for fault-tolerant quantum memories (i.e., idling operations), provided that the physical noise strength is sufficiently low. 
However, implementing every gate in a universal set of gates fault-tolerantly remains a far more challenging task.

Two-dimensional (2D) color codes \cite{bombin2006topological,bombin2013topological} are a family of stabilizer QEC codes defined on a trivalent and 3-colorable lattice of qubits, which can be constructed using only local interactions. 
Compared with surface codes \cite{bravyi1998quantum,dennis2002topological}, color codes use fewer physical qubits per logical qubit at a given code distance \cite{landahl2011faulttolerant} and have richer topological structures \cite{kesselring2022anyon}. 
This allows an arbitrary pair of multi-qubit Pauli operators to be measured in parallel via lattice surgery \cite{thomsen2024low}. 
Additionally, the 7-qubit Steane code, a small example of a color code with distance~3, has been recently demonstrated experimentally \cite{postler2022demonstration,ryananderson2022implementing,bluvstein2024logical,postler2024demonstration}, exhibiting not only memory experiments but also the implementation of nontrivial gates.
Furthermore, a recent experiment \cite{lacroix2024scaling} successfully demonstrated that the distance-5 color code provides better logical error suppression than the distance-3 color code.

With color codes, logical Clifford gates can be fault-tolerantly implemented efficiently using transversal gates \cite{bombin2006topological,kubica2015universal} or lattice surgery \cite{landahl2014quantum,kesselring2022anyon,thomsen2024low}. 
However, the fault-tolerant implementation of logical non-Clifford gates is rather challenging. 
One well-studied method to achieve this is by employing magic state distillation (MSD) \cite{bravyi2005universal,reichardt2005quantum,meier2012magic,fowler2012surface,bravyi2012magic,ogorman2017quantum,beverland2021cost}, a scheme that distills high-quality logical magic states (specific non-stabilizer states) from multiple faulty magic states. 
The distilled magic states are then consumed to implement desired non-Clifford gates on logical qubits. 
For instance, the state $\ketlgc{A} \coloneqq \ketlgc{0} + e^{i\pi/4}\ketlgc{1}$, where $\ketlgc{0}$ and $\ketlgc{1}$ are the logical basis states, can be used to execute the logical T~gate or, more generally, any $\pi/8$-rotation gate $\rot{\lgc{P}}{\pi/8} \coloneqq e^{-i(\pi/8)\lgc{P}}$, where $\lgc{P}$ is a logical Pauli operator on multiple logical qubits.
(Throughout the paper, we omit normalization factors of quantum states unless necessary and use overlines to denote logical states or operators.)

MSD is generally resource-intensive due to its demand for a significant number of logical Clifford gates between multiple logical qubits. 
Therefore, when applying it to a specific quantum error-correcting code, the procedure should be tailored and optimized for that code at both the logical and physical levels, in order to minimize its resource cost.
Such adaptations have been well studied for surface codes, notably by Litinski \cite{litinski2019magic}, reducing its cost by orders of magnitude compared to previous proposals.
The scheme in Ref.~\cite{litinski2019magic} is based on two key ideas: 
(i) Distillation circuits can be designed in a way that many faulty $\pi/8$-rotations are executed on a small number of logical qubits (e.g., 15 $\pi/8$-rotations on five logical qubits in the 15-to-1 MSD protocol), and (ii) ancillary logical qubits for the scheme can have smaller code distances than the code distance $\dout$ of the logical qubit outputting the distilled magic state. 
More specifically, the code distance $\dz$ with respect to $\lgz$ errors on ancillary logical qubits can be chosen to be smaller than $\dout$, since such $\lgz$ errors are detectable by the final $\lgx$ measurements of the circuit. 

In this work, we propose two resource-efficient MSD schemes for 2D color codes based on the 15-to-1 MSD protocol, by adapting and developing Litinski's ideas applied to surface codes.
Our first scheme employs lattice surgery and faulty T-measurement as its main components. 
The distillation circuit consists of 15 $\pi/8$-rotations on five logical qubits (initialized to $\ketlgc{+}^{\otimes 5}$), followed by $\lgx$ measurements on four of them (referred to as \textit{validation qubits}). 
Provided that all the $\lgx$ measurements yield $+1$, the distilled magic state is obtained from the unmeasured logical qubit. 
To execute each $\pi/8$-rotation, we use an auxiliary logical qubit, which undergoes lattice surgery with other logical qubits, followed by a non-fault-tolerant measurement in the basis of ${\ketlgc{0} \pm e^{-i\pi/4}\ketlgc{1}}$, referred to as a faulty T-measurement. 
We carefully design a layout for the scheme from both macroscopic and microscopic perspectives and verify that it is fault-tolerant, meaning it preserves code distances during lattice surgery.

We optimize the resource cost of the scheme using various approaches. 
Notably, we show that, not only are $\lgz$ errors on validation qubits tolerable (c.f.~Ref.~\cite{litinski2019magic}), but any single-location $\lgx$ error on a validation qubit is also tolerable, although it may incur three or more rotation errors. 
Consequently, both of the distances $\dx$ and $\dz$ (for $\lgx$ and $\lgz$ errors, respectively) can be set lower than $\dout$.
Additionally, we leverage the rich structure of color codes for optimization. 
For instance, color codes allow the measurement of an arbitrary pair of commuting Pauli operators in parallel \cite{thomsen2024low}, making it sufficient to use a single ancillary region surrounded by all the logical patches, including two auxiliary patches. 
Furthermore, domain walls for lattice surgery can be shortened by using rectangular patches instead of standard triangular patches. 

\label{para:ourfirstscheme}
Our first scheme using faulty T-measurement has an inherent limitation, as its output logical error rate cannot be lower than approximately $35p_\mr{FT}^3$, where $p_\mr{FT} = O(p)$ denotes the error rate of the faulty T-measurement and $p$ is the physical error rate.
To address this issue, we develop our second MSD scheme that employs a two-level approach. 
In this scheme, we integrate our first scheme with recent distillation-free magic state preparation protocols, such as those described in Refs.~\cite{chamberland2020very,itogawa2024even,gidney2024magic}, referred to as \textit{magic state cultivation} \cite{gidney2024magic}.
These protocols leverage the transversality of logical Clifford gates in color codes for directly preparing magic states fault-tolerantly.
They are highly resource-efficient since they avoid logical operations on multiple logical qubits.
However, they have a fundamental scalability issue due to their heavy reliance on post-selection, which imposes a practical lower bound on achievable infidelity.
To reach lower infidelities, our approach is to use magic states from cultivation as inputs for MSD, with appropriate optimization. 
The growing operation on cultivated magic states can be a bottleneck that significantly hinders fault tolerance, but we demonstrate that this issue can be mitigated by incorporating post-selection during decoding with the concatenated minimum-weight perfect matching (MWPM) decoder \cite{lee2025color} (the circuit-level matching-based decoder for color codes with the best known sub-threshold scaling).


We assess the performance of our schemes based on their output infidelities and resource costs, assuming that errors are corrected using the recently proposed concatenated MWPM decoder.
Compared to previous MSD schemes for color codes, such as the one in Ref.~\cite{beverland2021cost}, our schemes achieve a spacetime cost that is approximately two orders of magnitude lower for a given target infidelity.
For instance, at $p = 10^{-3}$, a magic state with an infidelity of $10^{-9}$ can be prepared using $\sim 15,000$ qubits over $\sim 1400$ time steps, with a failure rate of $\sim 0.2\%$, resulting in an effective spacetime cost of $\sim 2.1 \times 10^7$ (in contrast to $\sim 6.6 \times 10^9$ in Ref.~\cite{beverland2021cost}).
Compared to the latest cultivation scheme \cite{gidney2024magic}, although our schemes do not surpass its high resource efficiency, they can achieve significantly lower infidelities (e.g., $\sim 2 \times 10^{-16}$ at $p = 10^{-3}$) than those attainable with cultivation ($\gtrapprox 10^{-9}$).

This paper is structured as follows: 
In Sec.~\ref{sec:color_code}, we present preliminaries on 2D color codes, including their definitions, error-detecting properties, anyon model (with descriptions of domain walls), and methods for executing logical operations. 
In Sec.~\ref{sec:msd_scheme}, we describe our MSD scheme using faulty T-measurement and lattice surgery, along with our methodology for optimizing its resource cost. 
In Sec.~\ref{sec:higher_quality_magic_states}, we modify the scheme to achieve lower infidelities by integrating cultivation.
In Sec.~\ref{sec:performance_analysis}, we analyze the performance of our schemes based on their output infidelities and resource costs, and compare them with other approaches.
We conclude with final remarks in Sec.~\ref{sec:remarks}.

\section{Two-dimensional color codes \label{sec:color_code}}

In this section, we briefly review the basics of the 2D color codes including definitions, stabilizer structures, and anyon model.

\subsection{Definition and stabilizer structure \label{subsec:color_code_definition}}

\begin{figure}[!t]
	\centering
	\includegraphics[width=\linewidth]{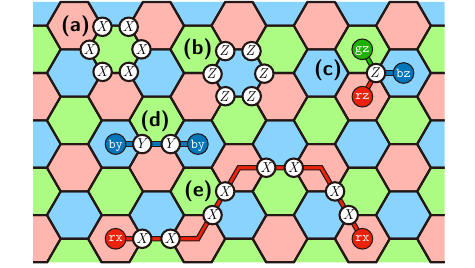}
	\caption{
        Hexagonal color code lattice with $X$- and $Z$-type checks on each face, exemplified in \subfig{a} and \subfig{b}, respectively.
        \subfig{c} A single-qubit Pauli error flips surrounding three checks, creating a triplet of bosons with different colors.
        \subfig{d} A pair of Pauli errors with the same type on an edge flips two checks on the faces that it connects, creating a pair of bosons with the same type.
        \subfig{d} A string operator of a specific type (`red Pauli-$X$' in the example) flips a pair of checks located at its ends, creating a pair of bosons with that type (\rx).
    }
	\label{fig:color_codes}
\end{figure}

A 2D color code lattice \cite{bombin2006topological} is defined as a trivalent and 3-colorable lattice; that is, each vertex of the lattice is connected with three edges such that one of the three colors (red, green, or blue) can be assigned to each face in a way that adjacent faces do not have the same color.
Each edge of the lattice can be given the same color as that of the two faces that are connected by the edge.
The hexagonal (6-6-6) lattice in Fig.~\ref{fig:color_codes} is one of its representative examples and we will consider it throughout this paper.

For a given 2D color code lattice, a 2D color code is defined with a qubit placed on each vertex of the lattice. 
It has two types of checks (or stabilizer generators) $\sgx{f}$ and $\sgz{f}$ for each face $f$, defined as
\begin{align*}
    \sgx{f} \coloneqq \prod_{v \in f} \Xv{v}, \qquad \sgz{f} \coloneqq \prod_{v \in f} \Zv{v},
\end{align*}
where $\Xv{v}$ and $\Zv{v}$ are the Pauli-$X$ and $Z$ operators, respectively, on the qubit placed at $v$, and `$v \in f$' denotes that the vertex $v$ is included in $f$. 
In other words, the code space is the common $+1$ eigenspace of these operators. 
We refer to $\sgx{f}$, $\sgz{f}$, and $\sgx{f}\sgz{f}$ as the \textit{$X$-, $Z$-, and $Y$-type checks} on face $f$, respectively. 
Examples of $X$- and $Z$-type checks are shown in Figs.~\ref{fig:color_codes}(a) and~(b), respectively. 

\subsection{Detection of Pauli errors \label{subsec:error_detection}}

Pauli errors flip the eigenvalues of checks.
For instance, a Pauli error on a qubit flips the checks anticommuting with the error on the surrounding three faces (with different colors), as exemplified in Fig.~\ref{fig:color_codes}(c).
If two qubits connected by an edge simultaneously undergo Pauli errors of the same type, the corresponding checks on the two faces (with the same color) connected by the edge are flipped, as shown in Fig.~\ref{fig:color_codes}(d).
Errors on multiple consecutive edges of the same color can form a string operator, which flips the two checks located at its two endpoints.
For example, as shown in Fig.~\ref{fig:color_codes}(e), a red Pauli-$X$ string operator flips two $Z$-type checks on the red faces at which the string operator terminates.
A general string-net operator is composed of a combination of string operators (with different types) connected by single-qubit errors, which may flip multiple checks located at its ends.
Note that Pauli-$Y$ string operators can be interpreted as pairs of overlapped Pauli-$X$ and $Z$ string operators; thus, a string-net operator may also contain string operators of different Pauli types as well as different colors.

Flips of checks are commonly described using the language of anyon theory. 
The color code phase contains nine nontrivial types (termed \textit{charge labels}) of bosons: \rx, \ry, \rz, \gx, \gy, \gz, \bx, \by, and \bz. 
Each boson has a color label (\rbs, \gbs, or \bbs) and a Pauli label (\xbs, \ybs, or \zbs). 
These nine bosons generate the entire group of anyons. 
A set of flipped checks on a face is associated with a boson that has a color label corresponding to the color of the face and a Pauli label corresponding to the operator needed to flip the checks. 
For example, if only the $X$-type check on a red face is flipped, an \rz boson is created. 
If both the $X$- and $Z$-type checks on a green face are flipped, a \gy boson is created.
Figure~\ref{fig:color_codes} visualizes some examples of bosons created by Pauli errors.
For instance, a red Pauli-$X$ string operator creates two \rx bosons at its ends, or equivalently, it transfers an \rx boson from one end to the other. 
Throughout the paper, we refer to a red Pauli-$X$ string operator simply as an \rx-string operator, and similarly for the other types of string operators. 
The interactions between bosons via string(-net) operators can be expressed in terms of the fusion rules:
\begin{align*}
&\rx \times \rx = \ry \times \ry = \cdots = \idn~~\text{(holds for all bosons)}, \\
&\rx \times \gx \times \bx = \ry \times \gy \times \by = \rz \times \gz \times \bz = \idn, \\
&\rx \times \ry \times \rz = \gx \times \gy \times \gz = \bx \times \by \times \bz = \idn.
\end{align*}
By using these properties, errors can be predicted from check measurement outcomes.

The above discussion is limited to the case when there are no errors during syndrome extraction.
If such errors are considered, we need to measure checks repeatedly and the product of two consecutive check measurement outcomes (called \textit{detectors}) replace the roles of checks.
That is, an $X$ or $Z$ error on a data qubit flips three spatially adjacent detectors and a measurement error of a check flips two temporarily adjacent detectors.
The anyon theory can be natively extended to this spacetime picture; for example, consecutive measurement errors on a red $Z$-type check moves an \rx boson in the temporal direction.

See Ref.~\cite{kesselring2022anyon} for more details on the anyon theory of color codes.

\subsection{Logical qubits encoded in patches \label{subsec:logical_qubit_definitions}}

\begin{figure}[!t]
    \centering
    \includegraphics[width=\columnwidth]{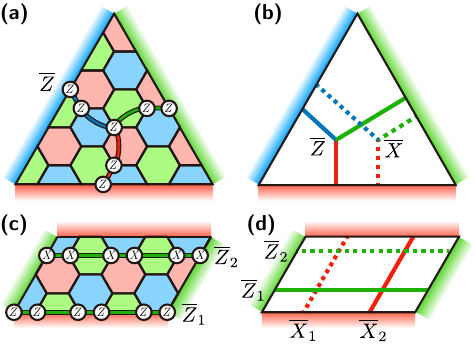}
    \caption{
        \subfig{a} Triangular logical patch with distance 7, which encodes a single logical qubit.
        Its logical Pauli-$Z$ ($X$) operator $\lgz$ ($\lgx$) is the Pauli-$Z$ ($X$) string-net operator terminating at the three boundaries.
        \subfig{b} Schematic diagram of a general triangular color code. Solid (dotted) lines represent Pauli-$Z$ ($X$) string-net operators.
        \subfig{c} Rectangular logical patch with code distances of $d_X=4$ and $d_Z=6$ (respectively for logical Pauli-$X$ and $Z$ errors), which encodes two logical qubits.
        Its two logical Pauli-$Z$ ($X$) operators $\lgz_1$, $\lgz_2$ ($\lgx_1$, $\lgx_2$) are respectively \gz- and \gx-string (\rx- and \rz-string) operators terminating at the two opposite green (red) boundaries.
        \subfig{d} Schematic diagram of a general rectangular color code with red and green boundaries.
        Similarly, solid (dotted) lines represent Pauli-$Z$ ($X$) string operators.
    } 
    \label{fig:color_code_patches}
\end{figure}

Logical qubits can be encoded in a color code in various ways \cite{bombin2006topological,fowler2011twodimensional,brown2017poking} and one representative method is to use a `patch' surrounded by boundaries \cite{bombin2006topological}.
A boundary of a specific color (red, green, or blue) is created by cutting the lattice, ensuring that the boundary is incident to only edges of that color (namely, it touches only faces of the other two colors).
By definition, a string operator of a specific color can terminate at a boundary of that color, or equivalently, in terms of the anyon theory, bosons of that color label can be absorbed by the boundary (this is also known as `condensation' \cite{kesselring2022anyon}).
We consider two of the simplest types of logical patches, triangular and rectangular patches, and refer to the corresponding codes as triangular and rectangular color codes, respectively.

A triangular logical patch is surrounded by three boundaries of different colors, as exemplified in Fig.~\ref{fig:color_code_patches}(a).
It encodes a single logical qubit, whose logical Pauli-$X$ and $Z$ operators $\lgx$, $\lgz$ are respectively Pauli-$X$ and $Z$ string-net operators terminating at the three boundaries.
These logical Pauli operators are also schematically drawn in Fig.~\ref{fig:color_code_patches}(b) for a general triangular logical patch.
Here, solid and dotted lines represent Pauli-$Z$ and $X$ string-net operators, respectively, and we will use these notations throughout this work.
Note that the string-net operators can be moved to one of the three boundaries, meaning that we can define $\lgx$ and $\lgz$ to be supported on qubits placed along the boundary.
The code distance $d$ of the code (i.e., minimum weight of $\lgx$ and $\lgz$) is odd since $\lgx$ and $\lgz$ may have the same support and should anticommute with each other.

A rectangular logical patch is surrounded by two pairs of parallel single-colored boundaries, where the two pairs have different colors, as exemplified in Fig.~\ref{fig:color_code_patches}(c).
It encodes two logical qubits with logical Pauli operators $\lgx_1$, $\lgz_1$ and $\lgx_2$, $\lgz_2$.
$\lgz_1$ and $\lgz_2$ are respectively defined as Pauli-$Z$ and $X$ string operators terminating at a specific pair of opposite boundaries (green in the example), while $\lgx_1$ and $\lgx_2$ are Pauli-$X$ and $Z$ string operators terminating at the other pair of boundaries (red in the example), which are drawn schematically in Fig.~\ref{fig:color_code_patches}(d).
The two logical qubits may have different code distances $d_X$ and $d_Z$ (which are even) for $\lgx$ and $\lgz$ errors, determined by the lengths of the boundaries.
We emphasize that $\lgx_2$ and $\lgz_2$ are defined with opposite labels from their physical Pauli types (i.e., $Z$ for $\lgx_2$ and $X$ for $\lgz_2$), which is to ensure that the two logical qubits have the same pair of code distances $(d_X, d_Z)$ for logical $X$ and $Z$ errors.
If this is not desirable (due to biased physical noise, for example), one can define the patch using Pauli boundaries \cite{kesselring2018boundaries,thomsen2024low} instead of color boundaries.
We will not consider this in our schemes, but modifying them to use rectangular patches with Pauli boundaries would not be difficult and would not significantly change their resource costs or performance.

While triangular patches are sufficient in many cases, rectangular patches can be particularly useful when logical noise is biased (i.e., a specific type of logical Pauli error is dominant or more harmful), such as in MSD (see Sec.~\ref{subsec:distillation_circuit}) or magic state injection \cite{fazio2024logical}, where independent adjustment of $d_X$ and $d_Z$ is beneficial. 
Moreover, using rectangular patches may help reduce resource costs. 
When performing lattice surgery, the region between a logical patch and the ancillary region for lattice surgery can be reduced by half by using a rectangular patch instead of a triangular patch, though this comes at the cost of sacrificing the ability to perform any Pauli measurements. 
See Sec.~\ref{subsec:lattice_surgery} for more details.



\subsection{Logical operations \label{subsec:logical_operations}}

We first consider initializing or measuring a logical qubit fault-tolerantly in a Pauli basis.
To initialize a triangular color code qubit to $\ketlgc{0}$, we just need to initialize every physical qubit to $\ket{0}$ and measure the checks.
Assuming no errors, $Z$-type checks deterministically give $+1$, while the values of $X$-type checks are randomly determined, which does not matter since products of two consecutive check outcomes are used for decoding.
To measure the logical qubit in the $\lgz$ basis, we need to measure every physical qubit in the $Z$ basis, which gives the measurement outcome as the product of the outcomes on qubits along one of the three boundaries.
For each face, the product of the $Z$-measurement outcomes of qubits belonging can be used to infer the value of the final $Z$-type check outcome. In the case that measurements are noisy, the single qubit measurements form a detector together with the previous $Z$-type check outcome.
Note that the above processes correspond to placing $Z$-type temporal boundaries, which condense \rz, \gz, and \bz bosons \cite{kesselring2022anyon}.
Initialization and measurement to other Pauli bases can be done analogously.

For the rectangular color code qubit in Fig.~\ref{fig:color_code_patches}(c), we can measure $\lgz_1$ ($\lgz_2$) by measuring $\lgc{Z} \otimes \lgc{Z}$ ($\lgc{X} \otimes \lgc{X}$) on every green edge.
Operators $\lgz_1$ and $\lgz_2$ can be measured at the same time by measuring every green edge in the Bell basis, which corresponds to placing a green temporal boundary condensing \gx, \gy, and \gz bosons.
Note that we can obtain the values of red and blue checks that commute with these Bell bases.
$\lgx_1$ and $\lgx_2$ can be measured similarly by measuring red edges.

Next, we consider logical Clifford operations.
For triangular color codes, the logical Hadamard, phase, and \cnot gates (which generate the Clifford group) can be implemented transversally.
For example, the logical Hadamard gate $\lgc{H}$ can be done just by applying physical Hadamard gates on all the physical qubits in the patch.
The logical phase gate $\lgc{S}$ is similar but a little more tricky since some qubits undergo $S$ while the other qubits undergo $S^\dagger$.
For the logical \cnot gate, we just need to apply a physical \cnot gate on every pair of corresponding physical qubits in the two logical patches, although the two patches need to be stacked in three-dimensional space if only local interactions are available.
See Ref.~\cite{kubica2015universal} for more details.

The above methods (except for the \cnot gate) may not generally work for other types of logical patches including rectangular ones.
Alternatively, we can use non-destructive multi-qubit Pauli measurements instead of Clifford gates as ingredients for universal quantum computing.
In other words, given a quantum circuit composed of Clifford and T gates, we can commute all the Clifford gates to the end of the circuit (while transforming T gates appropriately) and merge them with final measurements, which yields a series of Pauli measurements.
In color code logical patches, Pauli measurements can be performed by using lattice surgery \cite{landahl2011faulttolerant,kesselring2022anyon,thomsen2024low}, which is a process to merge the boundaries of multiple checks with additional check operators.
We will elaborate on detailed schemes for this in Sec.~\ref{subsec:lattice_surgery}.

\begin{figure}[!t]
	\centering
	\includegraphics[width=\columnwidth]{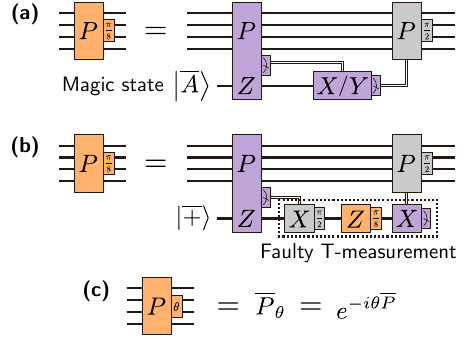}
	\caption{
        Two circuits for implementing the $\pi/8$-rotation gate $\rot{\lgc{P}}{\pi/8} \coloneqq e^{-i(\pi/8)\lgc{P}}$ for a logical multi-qubit Pauli operator $\lgc{P}$, respectively employing \subfig{a} a logical magic state $\ketlgc{A} \coloneqq \ketlgc{0} + e^{i\pi/4}\ketlgc{1}$ and \subfig{b} faulty T-measurement, which are expressed using Pauli measurements (purple boxes) and Pauli rotation gates (orange/gray boxes) defined in \subfig{c}.
        In \subfig{a}, $\ketlgc{A}$ is first prepared on an auxiliary logical qubit.
        $\lgc{P} \otimes \lgz$ is then measured jointly on the input and auxiliary qubits, which determines the basis ($X$ if $+1$ and $Y$ if $-1$) for the following measurement of the auxiliary qubit.
        If this measurement returns $-1$, a Pauli correction of $\lgc{P}$ should be applied.
        In \subfig{b}, $\ketlgc{+} \coloneqq \ketlgc{0} + \ketlgc{1}$ is prepared on an auxiliary logical qubit, followed by a $\lgc{P} \otimes \lgz$ measurement on the input and auxiliary qubits, which returns $\lambda = \pm 1$.
        The auxiliary logical qubit is then measured in the basis of $\qty{\ketlgc{0} \pm e^{-i\lambda\pi/4}\ketlgc{1}}$ via a faulty T-measurement.
        If this measurement returns $-1$, a Pauli correction of $\lgc{P}$ should be applied as well.
    }
	\label{fig:rotation_gate_circuit}
\end{figure}

We lastly need non-Clifford gates.
In particular, the $\pi/8$-rotation gate $\rot{\lgc{P}}{\pi/8}$ should be executable for any logical Pauli operator $\lgc{P}$.
(Throughout this work, we denote $\rot{P}{\theta} \coloneqq e^{-i\theta P}$ for a Pauli operator $P$ and a real number $\theta$.)
Two circuits for this operation \cite{litinski2019game,litinski2019magic} are presented in Figs.~\ref{fig:rotation_gate_circuit}(a) and~(b), which consist of Pauli measurements (represented by purple boxes) and Pauli rotation gates (represented by orange/gray boxes) defined in Fig.~\ref{fig:rotation_gate_circuit}(c).
Here, $\pi/2$-rotation gates, which are simply Pauli gates with a global phase change, are shown as gray boxes to emphasize that they do not need to be applied explicitly but only change the Pauli frame \cite{knill2005quantum}.
The circuit in Fig.~\ref{fig:rotation_gate_circuit}(a) consumes a magic state
\begin{align}
    \ketlgc{A} \coloneqq \ketlgc{0} + e^{i\pi/4} \ketlgc{1} \label{eq:magic_state}
\end{align}
with a joint Pauli measurement $\lgc{P} \otimes \lgz$ (returning $\lambda$), followed by a measurement in the basis of $\lgx$ (for $\lambda = 1$) or $\lgy$ (for $\lambda = -1$).
If the second measurement returns $-1$, a Pauli correction $\lgc{P}$ should be applied to the remaining qubits.
By reversing the time order of this circuit, we obtain the second circuit in Fig.~\ref{fig:rotation_gate_circuit}(b).
Here, the auxiliary qubit is prepared to $\ketlgc{+}$, undergoes a $\lgc{P} \otimes \lgz$ measurement jointly with the input qubits (which returns $\lambda$), and then measured in the basis of non-stabilizer states $\qty{\ketlgc{0} \pm e^{-i\lambda\pi/4}\ketlgc{1}}$, making a Pauli correction on the remaining qubits.
Note that the circuit in Fig.~\ref{fig:rotation_gate_circuit}(a) can be rewritten in a way that the measurement basis for the auxiliary qubit is fixed to $\lgx$ and a Clifford correction is applied to the other qubits.
However, we avoid using this construction since a Clifford correction may flip the angles of the following rotation gates in the MSD circuit, which is not desirable.

These two circuits take an approach of concentrating the `non-Clifford part' into the initialization or measurement of the auxiliary logical qubit, which is thus the core of the circuits that determine their fault tolerance.
A logical magic state can be prepared from physical magic states by using state injection.
In a color code logical patch, this can be done by preparing the physical state on a specific physical qubit in the patch and measuring other qubits appropriately, followed by ordinary check measurements \cite{beverland2021cost}.
Faulty T-measurement is the reverse process: reducing a logical patch into a single physical qubit and then measuring it in the desired basis.
This is one of the key ingredients of our MSD scheme, thus we will describe this in detail in Sec.~\ref{subsec:faulty_T_measurement}.
Both state injection and faulty T-measurement are not fault-tolerant, namely, a single physical error during the protocols may cause a logical failure.
However, MSD protocols can be employed to extract high-quality magic states from multiple trials of non-fault-tolerant $\pi/8$-rotations \cite{bravyi2005universal}, which can be inputted in the circuit of Fig.~\ref{fig:rotation_gate_circuit}(a) for executing fault-tolerant $\pi/8$-rotations.
Alternatively, transversal logical Clifford gates on color codes can be utilized to construct distillation-free magic state preparation protocols \cite{goto2016minimizing, chamberland2019faulttolerant, chamberland2019faulttolerant,chamberland2020very, itogawa2024even}.
We will discuss these methods in more details in Sec.~\ref{sec:higher_quality_magic_states}

\subsection{Domain walls \label{subsec:domain_walls}}

We now briefly review domain walls, which are key ingredients for lattice surgery of color codes.
Domain walls are one-dimensional (1D) subregions along which two topological phases interface.
In simple terms, a domain wall of a color code is a thin region in which checks are deformed from ordinary color-code checks under a specific rule so that bosons approaching the domain wall are affected in a particular way.
Such effects are determined by the charge labels of the bosons ($\rx, \ry, \cdots, \bz$) and the directions in which they approach.
Bosons located in one of the two regions separated by the domain wall can be classified as follows:
\begin{enumerate}
    \item \textbf{Condensed}: The boson is condensed on the domain wall; that is, the domain wall acts as a boundary of the corresponding type for the boson (such as a red boundary for an \rx boson) so that the boson can be absorbed at the domain wall.
    \item \textbf{Deconfined}: The boson can freely move across the domain wall to the other side, but its charge label can be changed.
    \item \textbf{Confined}: The boson is confined to the region; that is, it cannot pass through or be condensed on the domain wall.
\end{enumerate}
Note that bosons can be classified differently for either side of the domain wall.
Based on these features of domain walls, they can be categorized into three types: \textit{opaque}, \textit{transparent}, and \textit{semi-transparent} domain walls \cite{kesselring2022anyon}.

\subsubsection{Opaque domain wall}
An opaque domain wall is a trivial one that does not have any deconfined bosons.
In other words, it is just an empty region between two color codes with boundaries.
Each side of the domain wall is a color or Pauli boundary, which condenses bosons that have the corresponding color or Pauli label (e.g., if it is a red boundary, it condenses \rx, \ry, and \rz bosons).

\subsubsection{Transparent domain wall}
A transparent (or invertible) domain wall \cite{kitaev2012models} allows every boson to be deconfined.
The charge labels of bosons are permuted according to certain rules when they move across the domain wall.
The rule is always symmetric; namely, if a boson \ttt{a} is transformed into $\ttt{a}'$ in one direction, $\ttt{a}'$ is transforms into $\ttt{a}$ in the opposite direction.
Hence, the charge-changing rule can be characterized by a permutation of nine charge labels of bosons.
Note that only $72$ permutations are allowed considering the fusion and braiding rules of bosons \cite{kesselring2018boundaries}.
Each of them can be expressed as a combination of a color permutation (between \rbs, \gbs, and \bbs), a Pauli permutation (between \xbs, \ybs, and \zbs), and an exchange between the color and Pauli labels ($\rbs \leftrightarrow \xbs$, $\gbs \leftrightarrow \ybs$, $\bbs \leftrightarrow \zbs$), which leads to total $6 \times 6 \times 2 = 72$ permutations.


\subsubsection{Semi-transparent domain wall \label{subsubsec:STDW}}
A semi-transparent domain wall allows only some bosons to be deconfined.
Each side of the domain wall condenses bosons with one particular charge label, which can be different for the two sides.
Bosons that have the same color (Pauli) label as the condensed boson are deconfined in the region and referred to as \textit{electric (magnetic)} charges for that side of the domain wall.
Importantly, the charge-changing rule of deconfined bosons applied when they move across the domain wall is either `\ttt{em}-preserving' or `\ttt{em}-exchanging'; namely, if it is \ttt{em}-preserving (\ttt{em}-exchanging), an electric charge of one side is transformed into an electric (magnetic) charge of the other side, and vice versa for a magnetic charge.
Note that it is not a one-to-one correspondence and the transformed charge can be any of two types of electric (magnetic) charges.

For example, let us suppose that a semi-transparent domain wall condenses \rx on side A and \gz on side B and its charge-changing rule is \ttt{em}-exchanging.
Then \ry, \rz, \gx, and \bx (\gx, \gy, \rz, and \bz) bosons are deconfined in side A (B), where the first two are electric charges and the latter two are magnetic charges.
Since it is \ttt{em}-exchanging, an electric charge (\ry or \rz) moving across the domain wall from side A is transformed into any of magnetic charges (\rz and \bz) in the other side, and vice versa for a magnetic charge in side A.
The other bosons \gy, \gz, \by, and \bz (\rx, \ry, \bx, and \by) are confined to the region that side A (B) belongs to.

\begin{figure*}[!t]
	\centering
	\includegraphics[width=\textwidth]{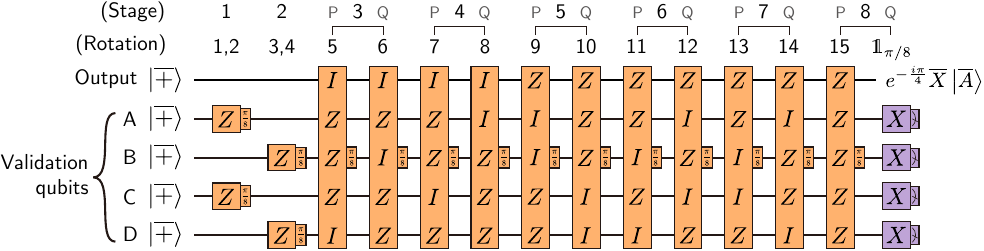}
	\caption{
        15-to-1 magic state distillation circuit.
        Five logical qubits including one output qubit and four validation qubits (A--D) are initialized to $\ketlgc{+} \coloneqq \ketlgc{0} + \ketlgc{1}$ and undergo (non-fault-tolerant) 15 $\pi/8$-rotations via the circuit in Fig.~\ref{fig:rotation_gate_circuit}(b).
        All the validation qubits are then measured in the $\lgx$ basis.
        If all of these measurements give the outcome of $+1$, the distillation succeeds, and the output qubit has a distilled magic state, which should be $e^{-i\pi/4} \lgx \ketlgc{A} = \ketlgc{0} + e^{-i\pi/4} \ketlgc{1}$ when there are no errors.
        In our default scheme, the 15 rotations, labeled by 1--15, are paired up as $\qty{\qty(\Pstage{k}, \Qstage{k})}_{k=1}^{8}$ in order (including a placeholder rotation $\rot{\identity}{\pi/8}$) and executed respectively in stages 1--8.
 }
	\label{fig:msd_circuit}
\end{figure*}

The effects of a semi-transparent domain wall on bosons can be intuitively described by using a boson table \cite{kesselring2018boundaries}
\begin{align*}
    \bosontable,
\end{align*}
where the nine charge labels of bosons are aligned in a $3 \times 3$ table.
On this table, we mark condensed bosons with a bullet ($\condensed$), confined bosons with a cross ($\confined$), electric charges with a square ($\electric$), and magnetic charges with a triangle ($\magnetic$).
With these notations, the domain wall in the above example can be expressed as a pair of boson tables:
\begin{align*}
    \bosontableST[\mr{A}\quad\mr{B}]{r}{x}{g}{z}{1}{2},
\end{align*}
where the colors (orange or sky blue) of the squares and triangles indicate the charge-changing rule of the domain wall; namely, bosons with the same symbol color on both sides can be mapped.

\section{Single-level 15-to-1 magic state distillation scheme \label{sec:msd_scheme}}

In this section, we introduce our 15-to-1 MSD scheme for color codes, which use lattice surgery and faulty T-measurement as its basic ingredients.
We first present the distillation circuit based on the 15-qubit Reed-Muller code \cite{bravyi2005universal,fowler2012surface}, which is modified to contain 15 $\pi/8$-rotations on five logical qubits \cite{litinski2019game,litinski2019magic}.
We then describe its color code implementation from both macroscopic and microscopic perspectives.

We consider the circuit-level noise model with strength $p$, which is defined as follows:
\begin{itemize}
    \item Every measurement outcome is flipped with probability $p$.
    \item Every preparation of a qubit produces an orthogonal state with probability $p$.
    \item Every single- or two-qubit unitary gate (including the idle gate $I$) is followed by a single- or two-qubit depolarizing noise channel of strength $p$. We here regard that, for every time step of the circuit, idle gates $I$ are acted on all the qubits that are not involved in any non-trivial unitary gates or measurements.
\end{itemize}
Here, the single- and two-qubit depolarizing channels of strength $p$ are respectively defined as
\begin{align*}
    &\depchannelone{p}:\; \rho^{(1)} \mapsto (1 - p) \rho^{(1)} + \frac{p}{3} \sum_{P \in \qty{X, Y, Z}} P \rho^{(1)} P, \\ 
    &\depchanneltwo{p}:\; \rho^{(2)} \mapsto (1 - p) \rho^{(2)} + \frac{p}{15} \\ 
    &\qquad\quad \times \sum_{\substack{P_1, P_2 \in \qty{I, X, Y, Z} \\ P_1 \otimes P_2 \neq I \otimes I}} \qty(P_1 \otimes P_2) \rho^{(2)} \qty(P_1 \otimes P_2),
\end{align*}
where $\rho^{(1)}$ and $\rho^{(2)}$ are arbitrary single- and two-qubit density matrices, respectively.

\subsection{Distillation circuit \label{subsec:distillation_circuit}}

We consider the 15-to-1 MSD circuit in Fig.~\ref{fig:msd_circuit} that distills one high-fidelity logical magic state $\ketlgc{0} + e^{-i\pi/4}\ketlgc{1} = e^{-i\pi/4}\lgx\ketlgc{A}$ from 15 faulty $\pi/8$-rotations.
Five logical qubits, which consist of one \textit{output qubit} and four \textit{validation qubits} (respectively A--D), are initialized to $\ketlgc{+}$ and undergo 15 faulty $\pi/8$-rotations.
Each of the rotations is performed via one of the two circuits in Figs.~\ref{fig:rotation_gate_circuit}(a) and~(b).
Although this choice does not make significant difference, we select using the circuit in Fig.~\ref{fig:rotation_gate_circuit}(b) that contains a faulty T-measurement, as the $\lgy$ measurement in Fig.~\ref{fig:rotation_gate_circuit}(a) makes decoding slightly difficult.
($\lgy$ measurement outcomes involve both decoding graphs respectively for $X$ and $Z$ errors, thus the two graphs need to be connected.)
Note that all the $\pi/8$-rotations contain only $\lgz$ in their rotation bases, thus they all commute with each other.
After that, the four validation qubits are measured in the $\lgx$ basis.
If all of these outcomes are $+1$, we conclude that the protocol succeeds and use the marginal state on the output qubit as a distilled magic state $e^{-i\pi/4}\lgx\ketlgc{A}$.
This is an input to the circuit of Fig.~\ref{fig:rotation_gate_circuit}(a) for implementing a $\pi/8$-rotation gate, but the $\ov{P} \otimes \lgz$ measurement outcome should be interpreted reversely due to the extra $\lgx$.
If the protocol fails, we discard the output state and retry the protocol.

The MSD circuit works since the combination of the 15 $\pi/8$-rotations is mathematically identical to the unitary gate $(\rot{\lgz}{-\pi/8})_\mr{out} \otimes \identity_\mr{ABCD}$ in the ideal case \cite{litinski2019magic}.
Not only that, various types of errors in the circuit can be detected by the final $\lgx$ measurements on the validation qubits.
Let us explore this fault tolerance property in more detail.
Errors in the circuit can be categorized into two groups: (i) errors from faulty T-measurements and (ii) memory errors of the five logical qubits.
Note that other types of errors are equivalent to errors in one of these groups.
For example, the $\lgc{P} \otimes \lgz$ measurement in the circuit of Fig.~\ref{fig:rotation_gate_circuit}(b) may give a flipped outcome, which is equivalent to a $\lgx$ error just before the faulty T-measurement.
To spoil the conclusion first, the MSD circuit can tolerate up to two rotation errors (from faulty T-measurements), any $\lgz$ errors on validation qubits, and up to one $\lgx$ error on a validation qubit.

For the first group of errors, if we model the faulty T-measurement as random Pauli noise followed by the perfect T-measurement, the corresponding $\lgx$, $\lgy$, and $\lgz$ errors are respectively converted to $\rot{\lgc{P}}{-\pi/4}$, $\rot{\lgc{P}}{\pi/4}$, and $\rot{\lgc{P}}{\pi/2} (=\lgc{P})$ errors after the circuit is executed \cite{litinski2019magic}.
Importantly, every combination of at most two $\rot{\lgc{P}}{\pi/2}$ errors can be detected by the final $\lgx$ measurements, while some combinations of three $\rot{\lgc{P}}{\pi/2}$ errors are not detectable, such as rotations~5, 7, and~14 in Fig.~\ref{fig:msd_circuit}.
$\rot{\lgc{P}}{\pm\pi/4}$ errors are even less detrimental since $\rot{\lgc{P}}{\pm\pi/4} = (\identity \mp i\lgc{P})/\sqrt{2}$.
Therefore, the output error rate $\infidMSD$ scales like $\infidMSD \sim p_\mr{T}^3$, where $p_\mr{T}$ is the noise strength of that random Pauli noise.

Let us now consider the second group of errors.
A $\lgz$~error on any part of the MSD circuit can be commuted to end of the circuit without changing anything.
If it is on the output qubit, it incurs a logical error on the distilled magic state, whereas it is always detectable if it is on one of the validation qubits.
In other words, the MSD circuit is tolerant to $\lgz$ errors on the validation qubits.
For $\lgx$ errors, situations are more complicated because they may anticommute with some rotation bases.
An $\lgx$~error on a qubit $q$ after the $i$-th rotation can be commuted to the beginning of the circuit and absorbed into the initial $\ketlgc{+}$ state, which changes the angles of several rotations (that are placed before the $(i + 1)$-th rotation and nontrivially involves $q$) to $-\pi/8$.
This can be regarded as correlated $(-\pi/4)$-rotation errors after all the $\pi/8$-rotations are perfectly executed.
Since more than two rotation errors can be correlated, one might expect the errors to be detrimental.
However, this is not the case when $q$ is one of the validation qubits; namely, the errors cannot damage the output state without being detected.
See Appendix~\ref{app:MSD_X_error_tolerance_proof} for the proof.
(To sketch the proof, supposing that the errors make an output logical error, at least one subset $\widetilde{\mathcal{P}}$ of the set of the correlated rotation errors affects the output qubit an odd number of times and affects each validation qubit an even number of times.
Then the weight sum of the elements of $\widetilde{\mathcal{P}}$ is odd, implying that $\abs{\widetilde{\mathcal{P}}}$ is also odd.
However, since each element of $\widetilde{\mathcal{P}}$ involves $q$, $\widetilde{\mathcal{P}}$ affects $q$ an odd number of times, which is a contradiction.)
To summarize, the MSD circuit is tolerant to a single-location $\lgx$ error on one of the validation qubits, while two or more $\lgx$ errors may not be tolerable.
On the other hand, a $\lgx$ error on the output qubit is always detrimental.

To implement the MSD circuit with color codes, we will investigate the following questions throughout the next three subsections: how to perform the faulty T-measurement (Sec.~\ref{subsec:faulty_T_measurement}), how to arrange logical patches encoding the output qubit, four validation qubits, and auxiliary qubits (Sec.~\ref{subsec:layout}), and how to perform lattice surgery for measuring $\lgc{P} \otimes \lgz$ in the circuit (Sec.~\ref{subsec:lattice_surgery}).
We will then describe our scheme comprehensively in Sec.~\ref{subsec:scheme} and calculate its resource costs in Sec.~\ref{subsec:msd_resource_cost}.

\subsection{Faulty $T$-measurement \label{subsec:faulty_T_measurement}}

Faulty T-measurement is a process to measure a logical qubit non-fault-tolerantly in the basis of $\qty{\rot{\lgz}{-\lambda\pi/8} \ketlgc{\pm}} = \qty{\ketlgc{0} \pm e^{-i\lambda\pi/4} \ketlgc{1}}$ for $\lambda \in \qty{0, 1}$, as illustrated in Fig.~\ref{fig:rotation_gate_circuit}(b).
For a triangular logical patch, it is implemented by shrinking the patch into a single physical qubit and measuring it in the corresponding basis of the physical qubit.
In detail, we first measure $\qty{X \otimes X, Z \otimes Z}$ on every red edge of the patch, leaving one physical qubit $q_\mr{corner}$ (located at the corner where the blue and green boundaries meet) unmeasured, as exemplified in Fig.~\ref{fig:faulty_T_measurement} for $d=7$.
Let $m_{XX}^{(e)}, m_{ZZ}^{(e)} \in \qty{\pm 1}$ denote the measurement outcomes of $X \otimes X$ and $Z \otimes Z$, respectively, on a red edge $e$.
We also define
\begin{align}
\begin{split}
    \lambda_{X} &\coloneqq \prod_{e \in \qty{\text{green bdry}}} m_{XX}^{(e)}, \\
    \lambda_{Z} &\coloneqq \prod_{e \in \qty{\text{green bdry}}} m_{ZZ}^{(e)},
\end{split}
    \label{eq:faulty_T_measurement_outcomes}
\end{align}
where $\qty{\text{green bdry}}$ is the set of red edges located along the green boundary (e.g., $\qty{e_1, e_2, e_3}$ in Fig.~\ref{fig:faulty_T_measurement}).
We then measure $q_\mr{corner}$ in the basis of $\qty{\rot{Z}{-\lambda \lambda_Z \pi/8}\ket{\pm}}$.
Given that the measurement outcome corresponds to $\rot{Z}{-\lambda \lambda_Z \pi/8}\ket{\pm}$, the final outcome of the faulty T-measurement is $\pm \lambda_X$.
The scheme works since the qubits on the green boundary support $\lgx$ and $\lgz$, implying that the state of $q_\mr{corner}$ after measuring the red edges is
\begin{align*}
    X^{(1 - \lambda_Z)/2} Z^{(1 - \lambda_X)/2} \qty(\ketbra{0}{\lgc{0}} + \ketbra{1}{\lgc{1}})\ketlgc{\psi}
\end{align*}
for an initial logical state $\ketlgc{\psi}$.

\begin{figure}[!t]
	\centering
	\includegraphics[width=\columnwidth]{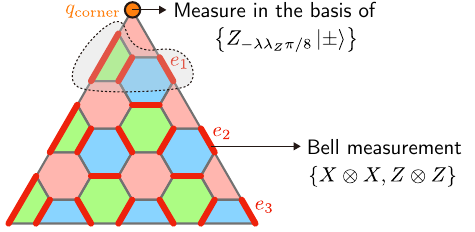}
	\caption{
        Implementation of a faulty T-measurement on a triangular logical patch for measuring the logical qubit in the basis of $\qty{\ketlgc{0} \pm e^{-i\lambda\pi/4} \ketlgc{1}}$ for $\lambda \in \qty{0, 1}$.
        Each red edge ($e$) is measured in the Bell basis, outputting two outcomes $m_{XX}^{(e)}$ and $m_{ZZ}^{(e)}$ for $X \otimes X$ and $Z \otimes Z$, respectively.
        The remaining single qubit $q_\mr{corner}$ (orange dot) is then measured in the basis of $\qty{Z_{-\lambda\lambda_Z \pi/8} \ket{\pm}}$, where $\lambda_Z \coloneqq m_{ZZ}^{(e_1)} m_{ZZ}^{(e_2)} m_{ZZ}^{(e_3)}$.
        The corresponding outcome of the faulty T-measurement is $\pm \lambda_X$, where $\lambda_X \coloneqq m_{XX}^{(e_1)} m_{XX}^{(e_2)} m_{XX}^{(e_3)}$.
        Red edges involved in the least-weight undetectable nontrivial combination of errors are highlighted as an area surrounded by a dashed line.
 }
	\label{fig:faulty_T_measurement}
\end{figure}

\label{para:aforementioned_faulty}
The aforementioned faulty T-measurement scheme has a logical error rate of $O(p)$, originated from physical-level errors on $q_\mr{corner}$, which are equivalent to logical errors on the patch before applying the faulty T-measurement.
We suppose that $q_\mr{corner}$ idles during two time steps for Bell measurements and then is measured in a single time step.
Then, the leading-order terms of $\lgx$, $\lgy$, and $\lgz$ error rates are $(2/3)p$, $(2/3)p$, and $(5/3)p$, respectively, under the circuit-level noise model.
(Here, the final measurement only contributes to the $\lgz$ error rate, as our noise model assumes a probabilistic flip of the measurement outcome. This assumption is stricter than modeling the measurement noise as a depolarizing channel, as $\lgz$ errors are more harmful than $\lgx$ and $\lgy$ errors.)
Note that Bell measurements may cause logical errors as well, but they can be suppressed by using $X$- and $Z$-type checks on green and blue faces (whose values are obtained from the Bell measurement outcomes on red edges).
They can be decoded straightforwardly by MWPM, as each error during a Bell measurement affects at most two checks for each Pauli type.
Since undetectable nontrivial combinations of such errors have weights of at least three as illustrated in Fig.~\ref{fig:faulty_T_measurement}, their contribution on the logical error rates of the faulty T-measurement is $O(p^2)$, which we will ignore in our analysis.

\subsection{Layout \label{subsec:layout}}

We now discuss how to design a layout for implementing the 15-to-1 MSD circuit in Fig.~\ref{fig:msd_circuit}.
We encode the output qubit in a triangular patch (denoted as \patchout) and the four validation qubits in two rectangular patches (denoted as \patchab and \patchcd) that respectively encode qubits~A,~B and qubits~C,~D.
We suppose that the $\lgz$ operators of qubits~A and~C consist of physical $Z$ operators, while those of qubits~B and~D consist of physical $X$ operators; see Fig.~\ref{fig:color_code_patches}(d).
In addition, we need auxiliary logical qubits for faulty T-measurements.
These qubits are involved in the $\lgc{P} \otimes \lgz$ measurement of the circuit in Fig.~\ref{fig:rotation_gate_circuit}(a) for each rotation $\rot{\lgc{P}}{\pi/8}$, which is executed via lattice surgery.
Importantly, a single lattice surgery process can measure two commuting Pauli operators at the same time \cite{thomsen2024low}, thus we employ two triangular patches \patchalpha and \patchbeta that respectively encode auxiliary qubits~\talpha and~\tbeta.
Lastly, lattice surgery requires an ancillary region with single-color boundaries, which is surrounded by the patches.
(See Sec.~\ref{subsec:lattice_surgery} for more details on lattice surgery.)
In summary, the layout consists of three triangular patches (\patchout, \patchalpha, \patchbeta), two rectangular patches (\patchab, \patchcd), and an ancillary region surrounded by the patches.

We use different code distances for the five patches.
Namely, \patchout has a code distance of $\dout$ and both \patchalpha and \patchbeta have a code distance of $\dm$.
Both \patchab and \patchcd have code distances of $\dx$ and $\dz$, which are the smallest weights of undetectable Pauli errors equivalent to $\lgx$ and $\lgz$, respectively.
In addition, the temporal code distance is set to be $\dm$, which is the number of syndrome extraction rounds required for each merging operation of lattice surgery.
We suppose $\dm, \dz < \dout$ throughout the discussion.

We have two reasons for using rectangular patches to encode the validation qubits.
First, $\lgx$ and $\lgz$ errors on the validation qubits have asymmetric effects:
$\lgz$ errors are always detectable, while $\lgx$ errors can be detrimental if two of them occur at the same time, as discussed in Sec.~\ref{subsec:distillation_circuit}.
Note that it is not essential to set $\dx \approx \dout$ unlike in Ref.~\cite{litinski2019magic} since a single $\lgx$ error on a validation qubit is tolerable.
Secondly, considering that all the $\pi/8$-rotations involve only $\lgz$ operators, using rectangular patches can contribute to lowering the resource cost.
Namely, the ancillary region only needs to be adjacent to the boundaries supporting $\lgz$ operators of the validation qubits, thus it is more efficient to use a rectangular patch where two logical qubits share the same boundary for their respective $\lgz$ operators.

\begin{figure*}[!t]
	\centering
	\includegraphics[width=\linewidth]{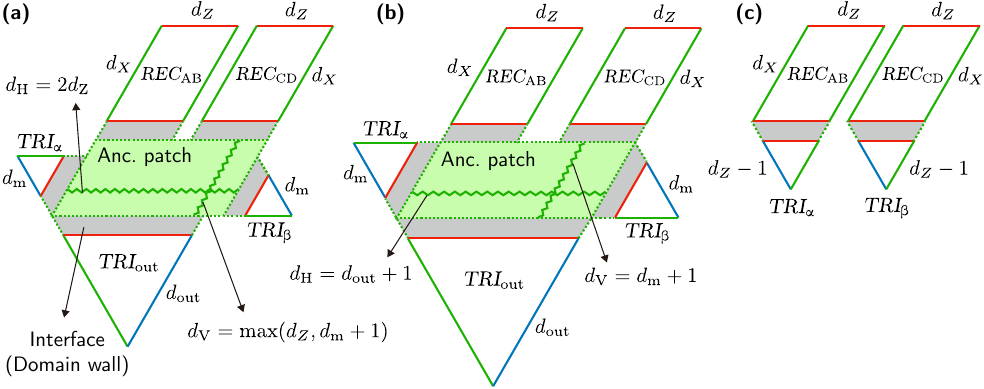}
	\caption{
        Macroscopic layouts of our 15-to-1 MSD scheme for \subfig{a} $2\dz > \dout$ and \subfig{b} $2\dz < \dout$, which requires $\dout - \dz \leq \dm < 2\dz$ to be fault-tolerant.
        Exceptionally, single-qubit rotations (1--4 in Fig.~\ref{fig:msd_circuit}) are executed using the layout in \subfig{c}.
        The layouts in~\subfig{a} and~\subfig{b} consist of five patches (\patchout, \patchab, \patchcd, \patchalpha, and \patchbeta), an ancillary patch with green boundaries, and interface regions (gray areas) that contain domain walls for lattice surgery.
        The ancillary patch and interface regions are collectively called the ancillary region.
        The colors of the solid lines indicate the colors of the corresponding boundaries.
        The horizontal and vertical dimensions of the ancillary patch (in terms of the weights of the shortest green string operators terminating at the respective pairs of opposite boundaries) are respectively $\dH \coloneqq \max(\dout+1, 2\dz)$ and $\dV \coloneqq \max(\dz, \dm+1)$ for both~\subfig{a} and~\subfig{b}.
 }
	\label{fig:msd_layout}
\end{figure*}

We denote the seven logical qubits as $\lgq{out}$, $\lgq{A}$, $\lgq{B}$, $\lgq{C}$, $\lgq{D}$, $\lgq{\talpha}$, and $\lgq{\tbeta}$, respectively, and the corresponding Pauli operators by using the same subscripts (e.g., $\lgzA$ for the $\lgz$ operator of $\lgq{A}$).
Additionally, we use the shorthand notation for a tensor product of these operators such as $\lgztp{out,A,D,\talpha} = \lgztp{OAD\talpha} \coloneqq \lgzout \otimes \lgzA \otimes \lgiB \otimes \lgiC \otimes \lgzD \otimes \lgzalpha \otimes \lgibeta$.

In Figs.~\ref{fig:msd_layout}(a) and~(b), we present the macroscopic pictures of possible layouts respectively for $2\dz > \dout$ and $2\dz < \dout$, where \patchout, \patchalpha, \patchab, \patchcd, and \patchbeta are placed clockwise in order and surrounds an ancillary region with green boundaries.
The $\lgx$ ($\lgz$) operators of \patchab and \patchcd terminate at their red (green) boundaries.
The ancillary region consists of the ancillary patch, which is a rectangular patch with only green boundaries (that does not encode logical qubits), and the interface regions between the ancillary patch and the logical patches, which contain domain walls for lattice surgery.
The ancillary patch is fixed throughout the scheme, whereas the interface regions vary depending on the operators we measure.
For example, interface regions corresponding to patches that are not involved in the measurement are turned off.
The horizontal and vertical dimensions of the ancillary patch are $\dH \coloneqq \max\qty(\dout+1, 2\dz)$ and $\dV \coloneqq \max\qty(\dm+1, \dz)$, respectively, in terms of the weights of the shortest green string operators terminating at the corresponding pairs of boundaries.
We require $\dout - \dz \leq \dm < 2\dz$ for the layout to be distance-preserving; see Condition~\ref{cond:dist_rotation_grouping} in Sec.~\ref{subsec:scheme} for more details.

Exceptionally, single-qubit $\pi/8$-rotations (rotations 1--4 in Fig.~\ref{fig:msd_circuit}) can be executed more efficiently by attaching auxiliary patches (with distances $\dz-1$) directly to the rectangular patches through a thin ancillary region, as shown in Fig.~\ref{fig:msd_layout}(c).
These auxiliary patches can be placed inside the ancillary region of the regular layout in Fig.~\ref{fig:msd_layout}(a) or~(b), thus the space cost does not increase.

\subsection{Lattice surgery \label{subsec:lattice_surgery}}

We now describe lattice surgery for measuring Pauli operators on the logical patches.
Color codes allow to measure a pair of commuting Pauli operators at the same time \cite{thomsen2024low}.
Given a pair of Pauli operators $(\lgc{P}, \lgc{Q})$ to measure, the structure of the ancillary region between the patches is determined appropriately based on certain rules.
Hereafter we only consider the cases where $\lgc{P}$ and $\lgc{Q}$ contain only $\lgi$'s and $\lgz$'s (which are sufficient for performing the MSD circuit in Fig.~\ref{fig:msd_circuit}) and the layouts in Fig.~\ref{fig:msd_layout} are used.
See Appendix~\ref{app:general_lattice_surgery_scheme} for more general cases.

A lattice surgery operation consists of three steps: the initialization of the ancillary region, merging operation, and splitting operation.
Supposing that the red boundary of each patch is in contact with the ancillary region (as depicted in Fig.~\ref{fig:msd_layout}), we first initialize the ancillary region with a red temporal boundary; that is, we prepare the Bell state $\ket{\Phi_+} \coloneqq (\ket{00} + \ket{11})/\sqrt{2}$ on every red edge in the region.
We then merge the patches by measuring checks in the ancillary region, which is repeated $\dm$ times to correct temporal error chains.
Here, checks in the interface regions should be chosen appropriately, ensuring that each of $\lgc{P}$ and $\lgc{Q}$ can be expressed as $R_\rbs \prod_{S \in G_\mr{anc}} S$, where $R_\rbs$ is a \rx-string ($\rz$-string) operator for $\lgc{P}$ ($\lgc{Q}$) and $G_\mr{anc}$ is a certain set of checks in the ancillary region.
(Note that $R_\rbs = \identity$ in every case that we will consider for MSD, but it can be nontrivial in general lattice surgery.)
After that, we split the patches by measuring the ancillary region with a red temporal boundary (i.e., measuring every red edge in the Bell basis), thereby obtaining the measurement outcomes of $\lgc{P}$ and $\lgc{Q}$.
In addition, the Pauli frames of the logical qubits may need to be updated depending on the measurement outcomes of red edges.

\begin{figure}[!t]
	\centering
	\includegraphics[width=\linewidth]{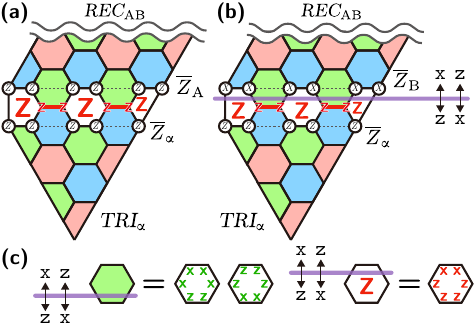}
	\caption{
        Lattice surgery schemes for measuring \subfig{a} $\lgztp{A\talpha}$ and \subfig{b} $\lgztp{B\talpha}$ for $\dz = 6$.
        Additional qubits and checks are placed between the two patches and serve as a semi-transparent domain wall, thereby measuring $\lgztp{A\talpha}$ or $\lgztp{B\talpha}$.
        Each face with a big `Z' written on it has only the $Z$-type check.
        Red edges indicated as thick red lines support two-body checks $Z \otimes Z$.
        In \subfig{b}, the purple thick line indicate a Pauli-permuting domain wall that swaps \xbs and \zbs, which deforms checks on the line as exemplified in \subfig{c}.
 }
	\label{fig:lattice_surgery_simple}
\end{figure}

We first consider a simple example of measuring $\lgztp{A\talpha}$ by using the layout in Fig.~\ref{fig:msd_layout}(c).
For this, in the interface region between \patchab and \patchalpha, we place an \ttt{em}-preserving semi-transparent domain wall that condenses \rz bosons on its both sides, which is expressed in terms of boson tables as
\begin{align*}
    \bosontableST[\text{\patchab}\quad\text{\patchalpha}]{r}{z}{r}{z}{2}{2}.
\end{align*}
Thereby, \gz and \bz ($\colorone{\magnetic}$) in \patchalpha are deconfined and can be mapped to \gz ($\colorone{\magnetic}$) in \patchab, while \rz is condensed in \patchalpha.
Therefore, a Pauli-$Z$ string-net operator in \patchalpha that represents $\lgzalpha$ can be transformed into a \gz-string operator in \patchab that represents $\lgzA$ by multiplying checks.
The microscopic structure of the domain wall is depicted in Fig.~\ref{fig:lattice_surgery_simple}(a) for $\dz = 6$.
In the interface region, the red faces only have $Z$-type checks and the red edges support two-body checks $Z \otimes Z$, together determining the value of $\lgztp{A\talpha}$.
Note that the domain wall contains $\dz-2$ additional data qubits.
Using weight-8 checks instead of weight-6 checks for the domain wall does not require additional data qubits \cite{kesselring2022anyon,thomsen2024low}, but it may significantly increase the time cost for syndrome extraction.

Similarly, $\lgztp{B\talpha}$ can be measured using an \ttt{em}-preserving domain wall that condenses \rx on the side of \patchab and \rz on the side of \patchalpha, which is expressed as
\begin{align*}
    \bosontableST[\text{\patchab}\quad\text{\patchalpha}]{r}{x}{r}{z}{2}{2}.
\end{align*}
The microscopic structure of the domain wall is depicted in Fig.~\ref{fig:lattice_surgery_simple}(b), where the purple line indicates a Pauli-permuting ($\xbs \leftrightarrow \zbs$) transparent domain wall, which deforms checks on the line as exemplified in Fig.~\ref{fig:lattice_surgery_simple}(c)

Let us now consider more complicated cases that use the full layout in Fig.~\ref{fig:msd_layout}(a) or~(b).
Our goal is to measure a pair of Pauli operators $(\lgc{P}, \lgc{Q})$ in parallel, which both consist of only $\lgz$ operators.
For this, we adapt the method in Ref.~\cite{thomsen2024low} to be applicable to rectangular patches as well as triangular patches.
That is, as displayed in Fig.~\ref{fig:lattice_surgery_full}(a) for the case of $\lgc{P} = \lgztp{OAD\talpha}$ and $\lgc{Q} = \lgztp{OBC\tbeta}$ (where solid and dotted lines are respectively Pauli-$Z$ and $X$ string operators), we place appropriate domain walls between the patches and the ancillary region, ensuring that each nontrivial factor of $\lgc{P}$ ($\lgc{Q}$) can be transformed into a \gx-string (\gz-string) operator `jumping over' the corresponding logical patch (i.e., connecting the two green boundaries of the ancillary region adjacent to the logical patch) by multiplying checks.
In addition, if a logical patch is not involved in $\lgc{P}$ ($\lgc{Q}$), we require any \gx-string (\gz-string) operator jumping over the patch to be trivial, i.e., to be a stabilizer.
By doing so, $\lgc{P}$ and $\lgc{Q}$ can be transformed into certain trivial green string operators in the ancillary patch, implying that their values can be determined from check measurement outcomes.

We now discuss the way to determine the types of the domain walls.
If a patch is not involved in both $\lgc{P}$ and $\lgc{Q}$, the corresponding domain wall is opaque; namely, the patch is completely separated from the ancillary patch that has a green boundary.
Other nontrivial cases are as follows:
First, the domain wall adjoining each triangular patch (\patchout, \patchalpha, or \patchbeta) is an \ttt{em}-exchanging semi-transparent one that condenses \rz and $\gbs\wbs$ on the sides of the patch and the ancillary region, respectively, where
\begin{align*}
    \wbs = \begin{cases}
        \xbs & \text{if the qubit is involved in $\lgc{Q}$ but not in $\lgc{P}$}, \\
        \zbs & \text{if the qubit is involved in $\lgc{P}$ but not in $\lgc{Q}$}, \\
        \ybs & \text{if the qubit is involved in both $\lgc{P}$ and $\lgc{Q}$}.
    \end{cases}
\end{align*}
For $\wbs = \xbs$, the domain wall is expressed in terms of boson tables as
\begin{align*}
    \bosontableST[q\quad\text{Anc. region}]{r}{z}{g}{x}{2}{1},
\end{align*}
thus a Pauli-$Z$ string-net operator in the patch (representing $\lgz_q$) can be transformed into a \gz-string operator in the ancillary region and any \gx-string operator jumping over the patch is trivial.
The case of $\wbs = \zbs$ can be interpreted analogously.
For $\wbs = \ybs$, $\lgz_q$ can be transformed into any of \gx- and \gz-string operators in the ancillary region, as the domain wall is expressed as
\begin{align}
    \bosontableST[q\quad\text{Anc. region}]{r}{z}{g}{y}{2}{1}.
    \label{eq:boson_tables_rz_gy}
\end{align}
Domain walls for rectangular patches are more diverse.
Let us consider \patchab as an example.
We denote the restrictions of $\lgc{P}$ and $\lgc{Q}$ on qubits A and B as $\lgc{P}_\mr{AB}, \lgc{Q}_\mr{AB} \in \qty{\identity, \lgzA \otimes \lgiB, \lgiA \otimes \lgzB, \lgzA \otimes \lgzB}$, respectively.
We also define a Pauli label 
\begin{align*}
    \pbs_\mr{AB} \coloneqq \begin{cases}
        \zbs & \text{if } \lgc{P}_\mr{AB} = \lgz_\mr{A} \otimes \lgi_\mr{B}, \\
        \xbs & \text{if } \lgc{P}_\mr{AB} = \lgi_\mr{A} \otimes \lgz_\mr{B}, \\
        \ybs & \text{otherwise}, \\
    \end{cases}
\end{align*}
and similarly $\qbs_\mr{AB}$ by replacing $\lgc{P}_\mr{AB}$ in the definition with $\lgc{Q}_\mr{AB}$.
If $\identity \neq \lgc{P}_\mr{AB} \neq \lgc{Q}_\mr{AB} \neq \identity$ (implying $\pbs_\mr{AB} \neq \qbs_\mr{AB}$), the domain wall is a Pauli-permuting transparent one that maps $\pbs_\mr{AB}$ and $\qbs_\mr{AB}$ in \patchab to $\xbs$ and $\zbs$ in the ancillary region, respectively.
Otherwise, it is an \ttt{em}-exchanging semi-transparent domain wall that condenses bosons $\ttt{a}$ and $\ttt{b}$ on the sides of \patchab and the ancillary region, respectively, where
\begin{align}
    \qty(\ttt{a}, \ttt{b}) = \begin{cases}
        \qty(\rbs\pbs_\mr{AB}, \gy) & \text{if } \lgc{P}_\mr{AB} = \lgc{Q}_\mr{AB} \neq \identity, \\
        \qty(\rbs\pbs_\mr{AB}, \gz) & \text{if } \lgc{P}_\mr{AB} \neq \lgc{Q}_\mr{AB} = \identity, \\
        \qty(\rbs\qbs_\mr{AB}, \gx) & \text{if } \lgc{Q}_\mr{AB} \neq \lgc{P}_\mr{AB} = \identity. \\
    \end{cases}
    \label{eq:rec_patch_domain_wall_condensed_bosons}
\end{align}
It is straightforward to show that the above configuration works properly from the fact that $\lgc{P}_\mr{AB}$ and $\lgc{Q}_\mr{AB}$ can be respectively represented by $\gbs\pbs_\mr{AB}$- and $\gbs\qbs_\mr{AB}$-string operators connecting the two green boundaries of \patchab.

\begin{figure*}[!t]
	\centering
	\includegraphics[width=\textwidth]{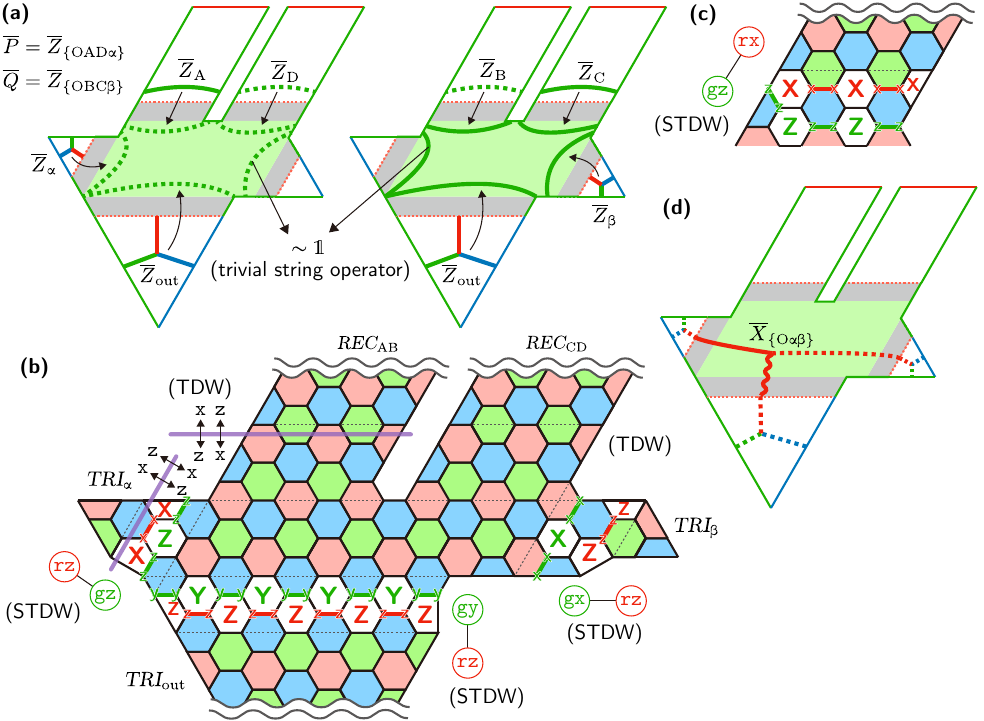}
	\caption{
        \subfig{a} Lattice surgery for measuring $\lgc{P} = \lgztp{OAD\talpha}$ and $\lgc{Q} = \lgztp{OBC\tbeta}$ in parallel.
        Pauli-$X$ ($Z$) string operators are visualized as thick dotted (solid) lines.
        Domain walls are placed between the logical patches and ancillary region so that $\lgc{P}$ and $\lgc{Q}$ are respectively equivalent to certain \gx- and \gz-string operators in the ancillary patch.
        Note that the domain wall adjoining \patchalpha (\patchbeta) condenses \gz (\gx) on the side of the ancillary region, implying that any \gz-string (\gx-string) operator connecting its both sides is trivial.
        \subfig{b} Microscopic structure of the layout of the above process for $\dout = 9$, $\dz=6$, and $\dm=3$.
        irregular checks in the domain walls are explicitly presented (see Fig.~\ref{fig:lattice_surgery_simple} for the notations).
        TDW and STDW stand for transparent and semi-transparent domain walls, respectively.
        For each STDW, bosons condensing on its two sides are specified by a pair of circles connected by a line; e.g., the STDW adjoining \patchout condenses \rz and \gy on the sides of \patchout and the ancillary region, respectively.
        Domain walls adjoining rectangular patches are transparent in this example, but other measurements may require STDWs for them as shown in \subfig{c}.
        \subfig{d} String-net operator representing a new logical operator $\lgxtp{O\textalpha\textbeta}$ that replaces $\lgxout$ during the merging operation, where the wavy line indicates a Pauli-$Y$ string operator.
        After the splitting operation, the value of its portion belonging to the ancillary region is determined. 
        If it is $-1$, a Pauli correction $\lgzout$ is applied.
 }
	\label{fig:lattice_surgery_full}
\end{figure*}

As an example, in Fig.~\ref{fig:lattice_surgery_full}(b), we describe the microscopic structure of the layout to measure $\lgc{P} = \lgztp{OAD\talpha}$ and $\lgc{Q} = \lgztp{OBC\tbeta}$ for $\dout = 9$, $\dz = 6$, and $\dm=3$.
Checks are deformed appropriately to form necessary domain walls described above \cite{kesselring2022anyon}.
The domain walls adjoining the rectangular patches are transparent in this example, thus in Fig.~\ref{fig:lattice_surgery_full}(c) we additionally show how they can be structured when they are semi-transparent.

Lastly, it is important to note that the Pauli frames of the logical qubits need to be updated appropriately after the lattice surgery finishes.
This is necessary only when using the full layout as Fig.~\ref{fig:lattice_surgery_full}, not when using the simple layout as Fig.~\ref{fig:lattice_surgery_simple}.
Let us again consider the example of measuring $\lgc{P} = \lgztp{OAD\talpha}$ and $\lgc{Q} = \lgztp{OBC\tbeta}$.
During the merging operation, original $\lgx$'s of the logical qubits are no longer logical operators as they anticommute with $\lgc{P}$ or $\lgc{Q}$, thus we replace them with new logical operators $\lgxtp{O\textalpha\textbeta}$, $\lgxtp{A\talpha}$, $\lgxtp{B\tbeta}$, $\lgxtp{C\tbeta}$, and $\lgxtp{D\talpha}$, which anticommute with $\lgztp{O}$, $\lgztp{A}$, $\lgztp{B}$, $\lgztp{C}$, and $\lgztp{D}$, respectively.
(Note that $\lgztp{\talpha}$ and $\lgztp{\tbeta}$ are no longer independent from other logical operators, thus considering these five pairs of logical operators is sufficient.)
To identify a Pauli correction applied on the output qubit, we consider $\lgxtp{O\textalpha\textbeta}$, which can be represented by a string-net operator connecting the original representations of $\lgxout$, $\lgxalpha$, and $\lgxbeta$, as visualized in Fig.~\ref{fig:lattice_surgery_full}(d).
Here, the red wavy line indicates an \ry-string operator.
Note that we should be careful about the behavior of the string-net near each domain wall.
For example, its \rx-string part in \patchout can be connected with its \ry-string part in the ancillary region through the domain wall; see Eq.~\eqref{eq:boson_tables_rz_gy}.
The value of its portion belonging to the ancillary region can be determined by measuring red edges fault-tolerantly in the splitting operation.
If it is $-1$, we apply a Pauli correction $\lgzout$.

In general, for each logical qubit $q$ (encoded in a patch $\labelname{PAT}_q$) that is not qubit~\talpha or~\tbeta, we determine a Pauli correction on $q$ by the following method:
If both $\lgc{P}$ and $\lgc{Q}$ involve $q$, we choose a red string-net operator $R$ belonging to the ancillary region that consists of \rz-, \rx-, and \ry-string parts, which are connected with \patchalpha, \patchbeta, and $\labelname{PAT}_q$, respectively.
If only $\lgc{P}$ ($\lgc{Q}$) involves $q$, we choose an \rz-string (\rx-string) operator $R$ belonging to the ancillary region that connects \patchalpha (\patchbeta) and $\labelname{PAT}_q$.
If the value of $R$ is $-1$ as a result of measuring the red edges, we apply a Pauli correction $\lgz$ on $q$.

\subsection{Scheme \label{subsec:scheme}}

We finally describe the overall procedure of our MSD scheme by combining the above ingredients.
The 15 $\pi/8$-rotations in the MSD circuit of Fig.~\ref{fig:msd_circuit} are executed pairwise through eight \emph{stages} by adding one placeholder rotation $\rot{\identity}{\pi/8} = \identity$.
We denote the pair of rotations for the $k$-th stage as $\qty(\rot{\Pstage{k}}{\pi/8}, \rot{\Qstage{k}}{\pi/8})$, where $\Pstage{k}$ and $\Qstage{k}$ are logical Pauli operators.
The first two stages are allocated for four single-qubit rotations on validation qubits A, B, C, and D, which are rotations 1--4 in Fig.~\ref{fig:msd_circuit}, namely, $\Pstage{1} = \lgztp{A}$, $\Qstage{1} = \lgztp{C}$, $\Pstage{2} = \lgztp{B}$, and $\Qstage{2} = \lgztp{D}$.
These are executed by using the simple layout in Fig.~\ref{fig:msd_layout}(c) instead of the regular one in Fig.~\ref{fig:msd_layout}(a) or~(b).
For example, to perform $\rot{\qty(\lgztp{A})}{\pi/8}$, we initialize qubit~$\mralpha$ to $\ket{\lgc{+}}$, measure $\lgztp{A\mralpha}$ via lattice surgery (which takes $\dm$ rounds), and perform a faulty T-measurement on qubit~$\mralpha$.
In the $k$-th stage where $k \geq 3$, we initialize qubits~\talpha and~\tbeta to $\ket{\ov{+}}$ and simultaneously measure 
\begin{align}
\begin{split}
    \PstageLS{k} &\coloneqq \Pstage{k}_{\mr{out,A,B,C,D}} \otimes \lgz_\mralpha \otimes \lgi_\mrbeta, \\ 
    \QstageLS{k} &\coloneqq \Qstage{k}_{\mr{out,A,B,C,D}} \otimes \lgi_\mralpha \otimes \lgz_\mrbeta
\end{split}
    \label{eq:dist_operators_to_measure}
\end{align}
via lattice surgery (which takes $\dm$ rounds), followed by faulty T-measurements on \patchalpha and \patchbeta.
We should carefully track the Pauli corrections made by the lattice surgery and faulty T-measurements.
The former may make corrections of $\lgz$ operators depending on the measurement outcomes of red edges in the ancillary region.
The latter may make corrections of $\Pstage{k}$, $\Qstage{k}$, or $\Pstage{k}\Qstage{k}$ depending on the faulty T-measurement outcomes.
After performing all eight stages, we check whether the $X$-measurement outcomes after applying the accumulated Pauli corrections are all equal to $+1$.
If then, we conclude that the distillation succeeds and a distilled logical magic state $\lgx\ket{\lgc{A}}$ (or $\lgy\ket{\lgc{A}}$ if there is an odd number of corrections on the output qubit) is produced from \patchout.
Otherwise, the entire process is retried.


We cannot arbitrarily determine the six pairs of 11 rotations $\qty{\qty(\rot{\Pstage{k}}{\pi/8}, \rot{\Qstage{k}}{\pi/8})}_{k=3}^8$, due to the following two factors:
(i) The number of errors that cause logical errors on the output state varies depending on this configuration.
(ii) Some configurations damage the fault tolerance of the layout in Fig.~\ref{fig:msd_layout}(a) or~(b); namely, they may result in the existence of undetectable string operators equivalent to a logical operator of a patch but shorter than the corresponding code distance.
To spoil the conclusion first, the configuration presented in Fig.~\ref{fig:msd_circuit} is one of the optimal ones considering these two.

Elaborating on the first factor, memory errors on the output qubit can be reduced by preparing the output qubit as late as possible.
This can be achieved by executing rotations 1--8 (that do not involve the output qubit) first, followed by the remaining rotations.
Note that this configuration has an additional advantage of allowing enough time to consume the output magic state even if the next MSD starts immediately.
Furthermore, among four possible locations of $\lgxout$ errors (right after stages 5, 6, 7, and 8), we can make two of them (right after stages~5 and~6) not harmful.
A $\lgxout$ error right after stage 5 is always unharmful since it makes only two correlated rotation errors, which are detectable.
A $\lgxout$ error right after stage 6 can be made unharmful by setting stages~5 and~6 to include four rotations in two among three pairs $(\lgztp{OAB}, \lgztp{OCD})$, $(\lgztp{OAC}, \lgztp{OBD})$, and $(\lgztp{OAD}, \lgztp{OBC})$, which are pairs of weight-3 rotations acting on disjoint sets of validation qubits.
By doing so, no subset of these four rotations acts on the output qubit an odd number of times and acts on each validation qubit an even number of times, meaning that a $\lgxout$ error right after stage 6 is unharmful (see Appendix~\ref{app:MSD_X_error_tolerance_proof} for details on why this argument works).

Let us now consider the second factor on the fault tolerance of the layout.
During lattice surgery, each logical Pauli operator of a patch that commutes with the operators to measure becomes equivalent (under stabilizer multiplication) with certain string operators in the ancillary region.
We demand these string operators to have weights not smaller than the corresponding code distance of the patch; namely, the layout should be distance-preserving.
The following conditions should be satisfied for this:
\begin{condition}
    (i) $\dout - \dz \leq \dm < 2\dz$, (ii) $\lgztp{OCD} \in \qty{\Pstage{k}}_{i=1}^6$, and (iii) if $\dz > 2\dm + 2$, $\rot{\qty(\lgz_\qty{s})}{\pi/8}$ and $\rot{\qty(\lgz_\qty{t})}{\pi/8}$ are not paired for each $(s,t)$ in (OAC, OBC), (OAD, OBD), (OAC, OAD), (OBC, OBD), (OCD, OABCD), and (OAB, OABCD).
\label{cond:dist_rotation_grouping}
\end{condition}
\noindent The first condition pertains to the layout itself, rather than the configuration of rotations, but we mention it here since it relates to this second factor.
In Appendix~\ref{app:layout_fault_tolerance}, we verify that the layout is distance-preserving if and only if these conditions are met.
For example, $\dout - \dz \leq \dm$ is required because, when measuring $\lgztp{OAB\talpha}$ or $\lgztp{OAB\tbeta}$, $\lgzout$ is equivalent to a green string operator in the ancillary region with weight $\dz + \dm$.
Other conditions can be derived similarly.

The configuration presented in Fig.~\ref{fig:msd_circuit} satisfies the above requirements, thus we set this as our default configuration.

\subsection{Resource costs \label{subsec:msd_resource_cost}}

We lastly evaluate the space and time costs of our MSD scheme.
The space cost of the scheme is quantified by the maximal number of physical qubits simultaneously required while executing the scheme.
Here, physical qubits include not only data qubits but also syndrome qubits, which are used to extract check measurement outcomes.
We suppose that each check has one syndrome qubit, which enables the simultaneous measurements of all checks during a single syndrome extraction round.
The time cost is quantified by the number of time steps required for each attempt of the scheme.
Note that a single syndrome extraction round can be done in eight time steps by selecting the entangling gate schedule appropriately \cite{beverland2021cost}.

A triangular patch with distance $d$ contains $(3d^2 + 1)/4$ data qubits and $(3d^2 -3)/4$ syndrome qubits, with a total of
\begin{align*}
    \ntri{d} \coloneqq \frac{3d^2 - 1}{2}
\end{align*}
qubits.
A rectangular patch with distances $d_1$ and $d_2$ contains $3d_1 d_2 / 2 - d_1 - d_2 + 2$ data qubits and $3d_1 d_2 / 2 - d_1 - d_2$ syndrome qubits, with a total of
\begin{align*}
    \nrec{d_1}{d_2} \coloneqq 3 d_1 d_2 - 2d_1 - 2d_2 + 2
\end{align*}
qubits.
The ancillary patch (surrounded by green boundaries) contains
\begin{align*}
    n_\mr{anc.patch} \coloneqq 3\dH \dV - 2\dH - 2\dV + 2
\end{align*}
qubits, where $\dH = \max(\dout + 1, 2\dz)$ and $\dV = \max(\dz, \dm + 1)$.
Lastly, the interface regions covering domain walls contain at most
\begin{align*}
    n_\mr{int} \coloneqq{}& 2 \cdot \qty[2\dout + 2\qty(3\dz - 2) + 2\dm + (3\dm + 1) ] \\ 
    ={}& 4\dout + 12\dz + 10\dm - 6
\end{align*}
qubits, excluding those that already belong to the ancillary patch.
Therefore, the space cost of the scheme is
\begin{align}
\begin{split}
    \spacecost{org}{\dout,\dx,\dz,\dm} \coloneqq \ntri{\dout} + 2\nrec{\dx}{\dz} \\
    + 2\ntri{\dm} + n_\mr{anc.patch} + n_\mr{int}
\end{split}
    \label{eq:org_space_cost}
\end{align}

To evaluate the time cost, we consider the following:
(i)~Each round consists of $\tround = 8$ time steps.
(ii)~For each stage, the merging operation takes $\dm$ rounds.
(iii)~After each merging operation, the measurement and reinitialization of the ancillary region together takes one round. Note that it actually takes four time steps (as we need two time steps to make Bell measurements), but we use a full round for it to synchronize syndrome extraction. If the merging operation is for the final stage, the reinitialization of the ancillary region is for the next MSD.
(iv)~Two time steps are required to initialize logical qubits, but it can be performed in parallel with initializing the ancillary region for the first stage. Similarly, logical qubits can be measured in parallel with measuring the ancillary region for the final stage.
(v) Each faulty T-measurement takes three time steps (i.e., two for Bell measurements and one for measuring $q_\mr{corner}$), but it can be performed in parallel during the extra single round in (iii) before starting the next stage.
Therefore, the time cost of the scheme is
\begin{align}
    \timecost{org}{\dm} \coloneqq 8\tround(\dm + 1),
    \label{eq:org_time_cost}
\end{align}
time steps.

\section{Production of higher-quality magic states \label{sec:higher_quality_magic_states}}

The MSD scheme in Sec.~\ref{sec:msd_scheme} cannot produce magic states with logical error rates lower than $\sim 35p_\mr{FT}^3$ even if there are no memory errors, where $p_\mr{FT} = O(p)$ is the error rate of the faulty T-measurement.
A conventional method to reach lower logical error rates is to concatenate the scheme with itself or another MSD scheme, namely, to input magic states produced from the scheme into MSD again.
However, here we take an alternative approach building on several recent results, combining distillation with recently-proposed distillation-free magic state preparation protocols, which we will refer to as `\textit{magic state cultivation}' following the terminology introduced in Ref.~\cite{gidney2024magic}.

Magic state cultivation is an alternative approach to distillation for generating logical magic states by leveraging the transversality of Clifford operations in color codes.
This method relies on the fact that the eigenstates of certain Clifford operators can serve as magic states; for example, $\ket{A}$ is a $+1$ eigenstate of $(X + Y)/\sqrt{2}$.
By measuring a logical qubit multiple times in the basis of such eigenstates (through noisy transversal controlled-Clifford gates) and post-selecting based on these measurement outcomes as well as the check outcomes, the final state is highly likely to be projected onto the desired magic state.
Notably, cultivation does not involve any multi-qubit logical operations, leading to its high resource efficiency compared to distillation.
For example, the scheme in Ref.~\cite{gidney2024magic} achieves an output infidelity of $10^{-9}$ at a spacetime cost of $\sim 10^6$ for $p=10^{-3}$ (see Fig.~\ref{fig:msd_performance_comparison} for more detailed comparison).

Several cultivation schemes have been proposed \cite{chamberland2020very,itogawa2024even,gidney2024magic}, differing in their detailed methods, such as circuits for measuring Clifford logical operators and color code checks.
The scheme in Ref.~\cite{chamberland2020very} employs flag qubits to detect all sets of faults at up to $(d-1)/2$ locations.
In this scheme, the logical Clifford operator and check measurement circuits are designed such that the flag and ancillary qubits switch roles depending on the operator to measure.
In Ref.~\cite{itogawa2024even}, the authors propose a cultivation scheme that uses only nearest-neighbor two-qubit gates on a square grid and enables efficient teleportation into surface codes.
The most recent work, Ref.~\cite{gidney2024magic}, refines these ideas by introducing a gradual increase in code size and advanced techniques for color codes (such as the superdense syndrome extraction circuit \cite{gidney2023new}), significantly improving resource efficiency and fault tolerance.
Importantly, unlike the two earlier works, the authors carefully address the process of growing the output magic state to a large code distance, showing that this step has a substantial impact on the fault tolerance and can be the most challenging part in the entire process.

Despite the high resource efficiency of cultivation, it has a fundamental limitation in scalability due to the extensive use of post-selection.
Specifically, the retry cost grows exponentially with the code distance, making it difficult to achieve output infidelities below a certain bound.
For instance, the scheme in Ref.~\cite{gidney2024magic} has been explored only for patches with $d \leq 5$, where the output infidelity is lower-bounded at $\sim 10^{-9}$ when $p=10^{-3}$ (noting that the discard rate reaches 99\% for $d=5$).
A promising approach to achieving lower output infidelities is combining cultivation and distillation.
That is, cultivation can be used to inject high-quality magic states into distillation, replacing non-fault-tolerant injection or the faulty T-measurement.
Notably, our MSD scheme, based on color codes, can be seamlessly integrated with cultivation by adjusting the single-level scheme described in Sec.~\ref{sec:msd_scheme}.
We elaborate on this cultivation-MSD scheme through the following subsections.

\subsection{Magic state cultivation \label{subsec:cultivation}}

Cultivation, the primary ingredient in our construction, is treated as a black box.
This process can be implemented using any of the schemes in Refs.~\cite{chamberland2020very,itogawa2024even,gidney2024magic} or with an improved scheme that may be proposed in the future.
It is performed on a triangular color code patch of distance $\dcult$, directly preparing a logical magic state $\ketlgc{A}$ with infidelity $q_\mr{cult}$ and a success rate of $q_\mr{cult}^\mr{succ}$.
(Note that the schemes in Refs.~\cite{chamberland2020very,itogawa2024even} prepare an eigenstate of the Hadamard gate rather than $\ketlgc{A}$; this difference can be addressed by adjusting the measurement bases in the rotation gate circuits in Fig.~\ref{fig:rotation_gate_circuit}.)
The space and time costs for a single successful cultivation attempt (without considering retrying) are denoted as $n_\mr{cult}$ and $t_\mr{cult}$, representing the number of physical qubits and the number of time steps, respectively.
For simplicity, we assume that the success of cultivation is determined immediately upon the completion of the process, although, in practice, it is more efficient to retry immediately after detecting a failure in the middle of the circuit.

\subsection{Growing operation \label{subsec:growing_operation}}

\begin{figure}[!t]
	\centering
	\includegraphics[width=\linewidth]{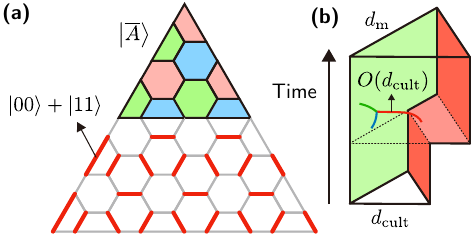}
	\caption{
        \subfig{a} Example of a growing operation of a triangular patch from $\dcult=5$ to $\dm=9$, which is done by preparing Bell states $\ket{00} + \ket{11}$ on additional red edges.
        \subfig{b} Spacetime picture of the operation, which can be interpreted as extending the red spatial boundary of the patch to a red temporal boundary.
        The blue spatial boundary (front face) is not colored for visibility.
        An example of a nontrivial string-net operator with weight $O(\dcult)$ is placed inside the diagram.
        }
	\label{fig:growing_operation}
\end{figure}

The distance $\dcult$ must be set small enough to minimize the retry cost; however, it is typically insufficient to store the generated magic state.
For instance, the scheme described in Ref.~\cite{gidney2024magic} can prepare a magic state with an infidelity of $6 \times 10^{-7}$ on a distance-3 patch when $p = 10^{-3}$, which corresponds to the error rate of a color code patch with a distance of around 19~\cite{lee2025color}.
Thus, it is crucial to grow the patch to a sufficiently large code distance, $d_\mr{m}$, immediately after completing cultivation.
As illustrated in Fig.~\ref{fig:growing_operation}(a), this can be performed by preparing Bell states, $\ket{00} + \ket{11}$, on additional red edges (simultaneously with the completion of cultivation) and performing regular check measurements thereafter.
In the spacetime picture, this operation can be interpreted as extending the red spatial boundary of the patch to a red temporal boundary, as shown in Fig.~\ref{fig:growing_operation}(b).

However, a critical obstacle is that the fault tolerance of the growing operation depends on $\dcult$, rather than $\dm$, due to error strings terminating at the red temporal boundary, as shown in Fig.~\ref{fig:growing_operation}(b).
Thus, even just a single round of the growing operation may significantly damage the cultivated magic state.
To address this issue, we employ post-selection during decoding for the growing operation. 
Specifically, we consider performing the growing operation and subsequent $\dm$ rounds of syndrome extraction, decoded jointly.
If the confidence of the prediction exceeds a preset threshold, we accept the final state; otherwise, we abort it and restart cultivation from the beginning.

One way to quantify confidence is by using the \textit{logical gap} (or \textit{complementary gap}) \cite{gidney2023yoked,bombin2024fault,smith2024mitigating}, defined as the minimum log-likelihood weight difference between the correction and an alternative correction in a different logical class. 
A larger logical gap indicates greater confidence in the prediction, thus we abort a trial when the logical gap is below a certain threshold $c_\mr{gap}$.
For surface codes, the logical gap can be computed by running the minimum-weight perfect matching (MWPM) decoder \cite{dennis2002topological} multiple times with different preassigned logical values \cite{gidney2023yoked,bombin2024fault,smith2024mitigating}, potentially raising the noise threshold to at most 50\% \cite{smith2024mitigating}.
For color codes, we calculate the logical gap using the \textit{concatenated MWPM decoder} \cite{lee2025color} (one of the best-performing circuit-level decoders for color codes in sub-threshold scaling) in a similar way by varying the preassigned logical values.
Note that, unlike surface codes, the concatenated MWPM decoder does not guarantee the least-weight correction, so the calculated logical gap is an approximation.

\begin{figure}[!t]
	\centering
	\includegraphics[width=\linewidth]{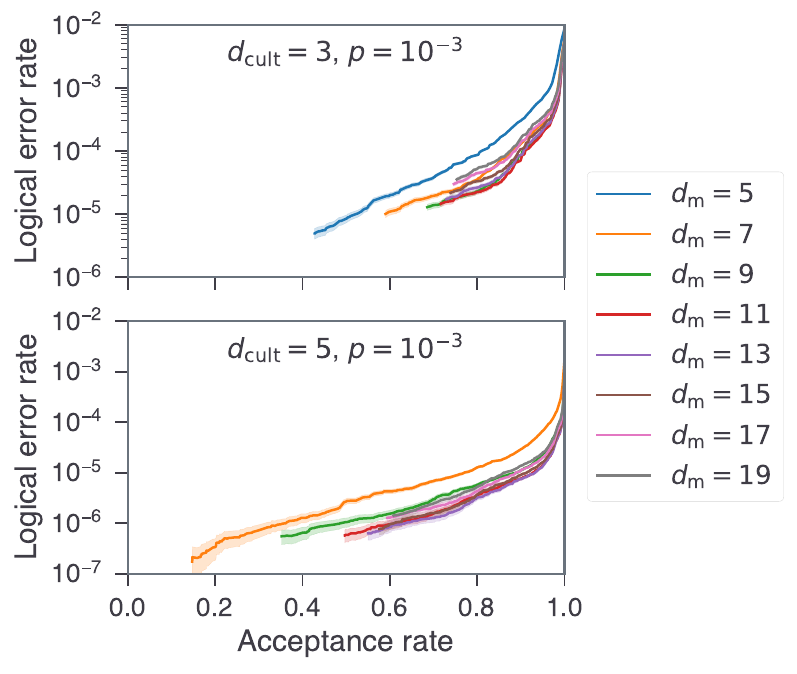}
	\caption{
        Simulations of the growing operation from distance $\dcult$ to $\dm$ with post-selection. 
        The summation of $\lgx$ and $\lgz$ failure rates are plotted against the acceptance rates for $\dcult \in \qty{3, 5}$, $\dm > \dcult$, and $p=10^{-3}$, obtained by varying the logical gap threshold $c_\mr{gap}$.
        The shaded regions indicate the 99\% confidence intervals.
    }
	\label{fig:growing_operation_plog_vs_pacc}
\end{figure}

To analyze the effect of post-selection, we run circuit-level simulations of the growing operation followed by $\dm$ rounds of syndrome extraction.
In Fig.~\ref{fig:growing_operation_plog_vs_pacc}, the logical failure rates (i.e., the summations of $\lgx$ and $\lgz$ failure rates) and the acceptance rates obtained by varying the logical gap threshold are presented for $p=10^{-3}$ and $\dcult \in \qty{3, 5}$.
(We say, e.g., $\lgx$ fails if a $\lgy$ or $\lgz$ error occurs.)
These results demonstrate that the post-selection method works very well with the concatenated MWPM decoder.
For instance, for $\dcult=3$ and $\dm \geq 7$, the logical error rate can be reduced from $\sim 10^{-2}$ to $\sim 10^{-4}$ ($3 \times 10^{-5}$) by aborting only 10\% (20\%) of trials.
The details of the simulation method and additional numerical results are provided in Appendix~\ref{app:growing_operation_simulations}.

It is worth noting that the logical gap approach does not seem to work well with the Möbius decoder \cite{sahay2022decoder,gidney2023new} (another matching-based color code decoder with performance comparable to the concatenated MWPM decoder), as investigated in Ref.~\cite{gidney2024magic}.
The reason for this discrepancy remains unclear, and it is uncertain whether this is a fundamental limitation of the Möbius decoder or an issue that could be overcome.

\subsection{Cultivation-MSD scheme \label{subsec:combined_scheme}}

\begin{figure}[!t]
	\centering
	\includegraphics[width=\linewidth]{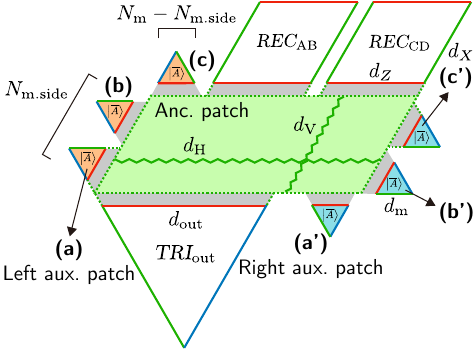}
	\caption{
        Layout for the cultivation-MSD scheme.
        The layout in Fig.~\ref{fig:msd_layout}(a) or~(b) is modified to contain $2\Nm$ triangular patches ($\Nm = 3$ in the figure), where magic states $\ketlgc{A}$ are prepared via cultivation and the growing operation.
        Among them, $N_\mr{m.side} \coloneqq \min(\Nm, \lceil(\dz / (\dm + 1))\rceil)$ patches are placed on each of the left and right sides of the ancillary region ($N_\mr{m.side} = 2$ in the figure), while the other patches are placed next to \patchout or \patchab.
        They are divided into two groups, left and right auxiliary patches, which are labeled as \subfig{a}, \subfig{b}, \subfig{c} and \subfig{a'}, \subfig{b'}, \subfig{c'}, respectively.
        The horizontal and vertical dimensions of the ancillary patch are respectively $\dH$ and $\dV$ defined in Eq.~\eqref{eq:combined_scheme_anc_region_dim}.
 }
	\label{fig:msd_layout_twolevel}
\end{figure}

After preparing magic states via cultivation (followed by the growing operation), we input them into MSD.
The layout for the cultivation-MSD scheme is presented in Fig.~\ref{fig:msd_layout_twolevel}, which is a modified version of the original layout in Fig.~\ref{fig:msd_layout}(a) or~(b) to include $2\Nm$ auxiliary triangular patches with distance $\dm$.
To minimize the ancillary region, we place
\begin{align*}
    N_\mr{m.side} \coloneqq \min\qty(\Nm, \ceil{\frac{\dz}{\dm + 1}})
\end{align*}
auxiliary patches on each of the left and right sides of the ancillary region, while the others are placed next to \patchab or \patchout.
Note that $\Nm = 3$ and $N_\mr{m.side} = 2$ in the example of Fig.~\ref{fig:msd_layout_twolevel}.
We group them into `left' and `right' auxiliary patches, which are respectively labeled as (a), (b), (c) and (a'), (b'), (c') in Fig.~\ref{fig:msd_layout_twolevel}. 
The horizontal and vertical dimensions of the ancillary patch are respectively
\begin{align}
\begin{split}
    \dH &\coloneqq \max\qty(2\dz, \dout + 1) \\ 
    &\quad + \qty(\Nm - N_\mr{m.side})\qty(\dm + 1), \\ 
    \dV &\coloneqq \max\qty[\dz, N_\mr{m.side}\qty(\dm + 1)].
\end{split}
\label{eq:combined_scheme_anc_region_dim}
\end{align}


Cultivation is executed in each auxiliary patch as frequently as possible, ensuring that successive executions within each of the left and right groups are separated by at least $T_\mr{m}$ rounds, where
\begin{align*}
    T_\mr{m} \coloneqq \ceil{\frac{t_\mr{cult}}{\tround\Nm}}.
\end{align*}
For instance, in Fig.~\ref{fig:msd_layout_twolevel}, after initiating cultivation in patch~(a), $T_\mr{m}$ rounds must pass before cultivation can begin in patch~(b) or~(c).
Once cultivation is completed successfully, the patch undergoes the growing operation, followed by $\dm$ rounds of syndrome extraction.
(Exceptionally, if $\dm = \dcult$, these additional steps are omitted.)
If the logical gap obtained from decoding exceeds a preset threshold, the created magic state can then be used for distillation.
If either the cultivation or the growing operation fails, the resulting state in the patch is discarded, and cultivation is retried (subject to the aforementioned condition on successive executions).
Additionally, if lattice surgery involving other auxiliary patches is in progress when the magic state is created, the state is also discarded instead of waiting until the lattice surgery ends.
(This is to prevent additional logical errors on the magic state caused by idling, though it is not strictly necessary.)

We wait until two magic states, one from each of the left and right groups, are prepared and grown successfully.
The earlier-generated magic state should idle while waiting for the second state. 
(If a new magic state is generated while another within the same group is idling, the earlier one is discarded.)
These magic states are then consumed to perform a pair of $\pi/8$-rotations using the circuit shown in Fig.~\ref{fig:rotation_gate_circuit}(a).
This process is repeated throughout all eight stages of the MSD.
Note that, unlike the single-level scheme, single-qubit rotations in the stages~1 and~2 of Fig.~\ref{fig:msd_circuit} are implemented by using the same layout as other rotations.

The space cost of the scheme is
\begin{multline}
    \spacecost{comb}{\dout, \dx, \dz, \dm, \dcult, \Nm} \\
    \coloneqq \ntri{\dout} + 2\nrec{\dx}{\dz} + n_\mr{anc.patch} + n_\mr{int} \\
    + 2\Nm\Bigl[ n_\mr{cult} + \qty(1 - \delta_{\dm, \dcult})\qty{\ntri{\dm} - \ntri{\dcult}}  \Bigr],
\label{eq:combined_space_cost}
\end{multline}
where $\delta_{i,j}$ is the Kronecker delta defined to be 1 if $i=j$ and 0 otherwise, and
\begin{align*}
    n_\mr{anc.patch} &\coloneqq 3\dH \dV - 2\dH - 2\dV + 2, \\
    n_\mr{int} &\coloneqq 4\dout + 12\dz + 10\Nm\dm + 2\Nm - 8
\end{align*}
are the space costs of the ancillary patch and the interface regions for domain walls, respectively.
Denoting the expected number of rounds between the initiation of successive stages as $T_\mr{intv}$, the expected time cost of the scheme is
\begin{align}
    \timecost{comb}{\dm, \Nm, p} \coloneqq 8 \tround T_\mr{intv}.
    \label{eq:combined_time_cost}
\end{align}
In our numerical analysis that will be described in the next section, we estimate $T_\mr{intv}$ by simulating the aforementioned procedure of the scheme for 1000 stages.
Note that, through the simulation, we additionally estimate the average number of rounds $T_\mr{idle}$ that the auxiliary patches idle before they are consumed, which is used for analyzing errors.
The code we used for this simulation is available on GitHub \cite{github:msd-magic-state-prep-cycle-simulation}.
See Sec.~\ref{subsec:analysis_results} and Table~\ref{table:msd_performance} for explicit examples on the estimated values of $T_\mr{intv}$ and $T_\mr{idle}$.

\section{Performance analysis \label{sec:performance_analysis}}

\label{para:inthissection}
In this section, we numerically analyze the performance of our MSD schemes, including both single-level and cultivation-MSD schemes, in terms of their output infidelities and success probabilities.
For cultivation, we assume that the state-of-the-art scheme in Ref.~\cite{gidney2024magic} is employed.

\label{para:insteadofdirectly}
Instead of directly simulating the entire MSD circuit at once, we first calculate the probabilities of logical errors in individual patches (including timelike error strings) via Monte Carlo simulations and then carefully track their effects on the final magic state. 
Although this approach may be less reliable than fully simulating MSD, we adopt this for the following reasons:
(i) The target output infidelity is extremely low (e.g.,  $<10^{-9}$), and the system size is very large, making direct Monte Carlo simulations computationally infeasible.
(ii) Existing color code decoders \cite{sarvepalli2012efficient,delfosse2014decoding,maskara2019advantages,chamberland2020triangular,beverland2021cost,delfosse2021almostlinear,sahay2022decoder,sabo2022trellis,kubica2023efficient,zhang2023facilitating,takada2023highly,berent2023decoding, lee2025color} are primarily designed for idling gates and may require modifications to handle logical operations (particularly irregular checks in domain walls).
While such modifications are essential for practical MSD implementation, this work focuses on estimating achievable performance rather than detailing decoder adjustments.
We assume that irregular checks for lattice surgery do not significantly degrade decoding performance (see Sec.~\ref{subsec:analysis_method} for its partial justification).
(iii)~This modular approach allows different decoders or cultivation protocols to be tested by simply adjusting a few parameters in the logical error tracking step.
For this, we provide analytical expressions for infidelities and success probabilities in terms of logical error rates, which may assist other researchers in evaluating their decoders or cultivation protocols.

\subsection{Method \label{subsec:analysis_method}}

Our analysis method is described in Appendix~\ref{app:numerical_analysis_method} in detail.
We outline this here as follows.

We first simulate $4d$ rounds of syndrome extraction of triangular and rectangular patches with code distances $d \leq 21$ under circuit-level noise of $p \in [10^{-4}, 10^{-3}]$, decoded via the concatenated MWPM decoder.
The syndrome extraction circuit is carefully chosen to minimize the logical failure rate (see Appendix~\ref{app:entangling_gate_schedule} for more details on the selection method and the resulting circuit).
For rectangular patches, we assume that they have $\lgz$ ($\lgx$) failure rates proportional to $\dz$ ($\dx$), which makes it sufficient to simulate only the cases of $\dx=\dz=d$.
From these simulations, we estimate per-round logical failure rates of logical patches.
In addition, we simulate stability experiments \cite{gidney2022stability} by performing $T (\leq 14)$ rounds of syndrome extraction of a patch encoding no logical qubits, which give `per-area' logical failure rates caused by timelike errors.

For generalizing the outcomes to other regimes of $p$ and $d$ (or $T$), the computed values for the per-round/area logical failure rates are fitted into the ansatz
\begin{align}
    p_\mr{fail}(p, d) = \alpha \qty(\frac{p}{p_\mr{th}})^{\beta d + \eta}\qty[1 + \epsilon \qty(\frac{p}{p_\mr{th}})^{\zeta d^\lambda}]
    \label{eq:ansatz}
\end{align}
with seven parameters $p_\mr{th}$, $\alpha$, $\beta$, $\eta$, $\epsilon$, $\zeta$, $\lambda$, where $p_\mr{th}$, $\alpha$, $\beta$, and $\zeta$ are positive.
(For stability experiments, $d$ is replaced with $T$.)
Note that, for a more precise prediction, the ansatz contains a sub-leading order term with respect to $p/p_\mr{th}$, selected from several candidates via cross-validation to prevent overfitting.

In Appendix~\ref{app:decoder_simulations}, we detail the simulation method and the ansatz selection process, with the numerical results and the corresponding parameter estimates.

We denote the logical Pauli error rates of triangular and rectangular patches as $\plogtri{P}$ and $\plogrec{P}$, respectively, for a Pauli operator $P$.
To determine $\plogtri{X/Y/Z}$ from logical failure rates, we assume that $\plogtri{X} = \plogtri{Z}$ and $\plogtri{Y} = \pyratio \plogtri{X}$, where $\pyratio$ is a small non-negative number quantifying the contribution of Pauli-$Y$ string errors.
Similarly, for a rectangular patch, we assume $\plogrec{X_1} = \plogrec{X_2}$, $\plogrec{X_1 X_2} = \pyratio \plogrec{X_1}$, $\plogrec{Z_1} = \plogrec{Z_2}$, and $\plogrec{Z_1 Z_2} = \pyratio \plogrec{Z_1}$
The same assumption apply to the growing operation ($\pgrow{X} = \pgrow{Z}$ and $\pgrow{Y} = \pyratio \pgrow{X}$) and to cultivation, where only the output infidelity is known ($\pcult{X} = \pcult{Z} = q_\mr{cult}/(2 + \pyratio)$ and $\pcult{Y} = \pyratio \pcult{X}$).
Although $\pyratio$ may appear to be an artificial coefficient, we will later show in that it has a negligible impact on the MSD performance.

The next step is to map every possible logical Pauli error (caused by an error string in a logical patch or the ancillary region) during MSD to an equivalent noise channel acted on the output and validation qubits immediately before the final $\lgx$ measurements of the validation qubits.
The noise channel is of the form
\begin{align*}
    \Lambda_{\lgc{U},p_\mr{err}}: \; \lgc{\rho} \mapsto (1 - p_\mr{err})\lgc{\rho} + p_\mr{err} \lgc{U} \lgc{\rho} \lgc{U}^\dagger,
\end{align*}
where $\lgc{\rho}$ is the logical state of the output and validation qubits, $p_\mr{err}$ is the logical error rate (expressed in terms of the above notations on logical Pauli error rates), and $\lgc{U}$ is a product of $\pi/2$- or $(\pm \pi/4)$-rotations.
See Appendix~\ref{app:determining_noise_channels} for a exhaustive list of the possible error sources and the corresponding noise channels.

Denoting the set of noise channels as $\{\Lambda_{\lgc{U}_i, p_i}\}_{i=1}^{m}$, the unnormalized output state after the final measurements is
\begin{align*}
    \lgc{\rho}_\mr{out} = (\identity_\mr{out} \otimes \bra{\lgc{+}\lgc{+}\lgc{+}\lgc{+}}_\mr{ABCD}) \Lambda_\mr{noise} \qty(\ketbra{\lgc{\psi}_\mr{init}}),
\end{align*}
where
\begin{align*}
    \Lambda_\mr{noise} &\coloneqq \Lambda_{\lgc{U}_1,p_1} \circ \cdots \circ \Lambda_{\lgc{U}_m,p_m}, \\
    \ket{\lgc{\psi}_\mr{init}} &\coloneqq \ket{\lgc{A}_-}_\mr{out} \otimes \ket{\lgc{+}\lgc{+}\lgc{+}\lgc{+}}_\mr{ABCD}, \\
    \ket{\lgc{A}_-} &\coloneqq \frac{1}{\sqrt{2}} \qty(\ketlgc{0} + e^{-i\pi/4}\ketlgc{1}).
\end{align*}
The success probability $q_\mr{succ}$ of the scheme and the output infidelity $\infidMSD$ are then respectively given as
\begin{align*}
    q_\mr{succ} = \Tr\qty(\lgc{\rho}_\mr{out}), \quad
    \infidMSD = 1 - \frac{1}{q_\mr{succ}} \bra{\lgc{A}_-} \lgc{\rho}_\mr{out} \ket{\lgc{A}_-}
\end{align*}
In Appendix~\ref{app:MSD_performance_expressions}, we present analytic expressions of $q_\mr{succ}$ and $\infidMSD$ as functions of the physical error rate $p$, the code distances, and the logical error rates of several patches.

\begin{figure*}[!t]
	\centering
	\includegraphics[width=\textwidth]{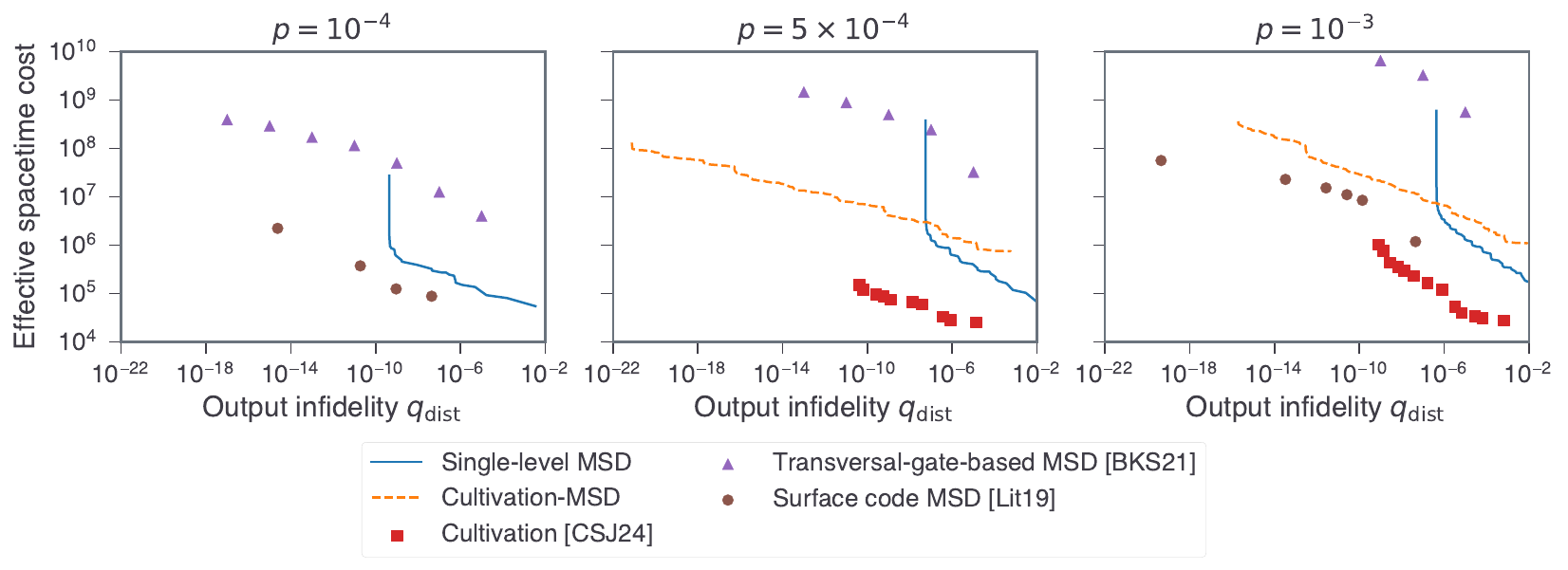}
	\caption{
        Cost comparison of various magic state preparation schemes. 
        The effective spacetime costs (i.e., spacetime costs divided by success probabilities $q_\mr{succ}$) are plotted against the output infidelities $\infidMSD$ for the single-level and cultivation-MSD schemes for various combinations of parameters, when $p \in \qty{10^{-4}, 5 \times 10^{-4}, 10^{-3}}$ and $\pyratio = 1/10$.
        For comparison, the costs of the following schemes are additionally displayed: (i) a variant of the cultivation-MSD scheme in which the growing operation is performed for a single round without post-selection, (ii) the cultivation protocol without MSD integration in Ref.~\cite{gidney2024magic}, (iii) the transversal-gate-based color code MSD scheme in Ref.~\cite{beverland2021cost}, and (iv) the surface code MSD schemes in Ref.~\cite{litinski2019magic}.
        We suppose that a single round of syndrome extraction takes 8 (6) time steps for color (surface) codes.
 }
	\label{fig:msd_performance_comparison}
\end{figure*}

\label{para:we_now_clarify}
We now specify several assumptions underlying the above method:
\begin{enumerate}
    \item We assume that the presence of irregular checks (e.g., two-body checks) within domain walls for lattice surgery has a negligible effect on the decoder’s performance (that is, we treat irregular checks as regular checks when estimating logical error rates).
    Although this has not been rigorously verified, we anticipate that this assumption may hold to a sufficient degree, as domain walls preserve the code distance and do not increase the required connectivity of data and ancillary qubits.
    In particular, irregular checks can reduce the number of entangling gates acting on each data qubit from six to five [see Fig.~\ref{fig:lattice_surgery_full}(b)], and since this number is a key factor in determining the `effective' error rate of the data qubit, fault tolerance might be even better within domain walls than in the bulk.
    Furthermore, a recent study~\cite{ramette2024fault} indicates that noise along a 1D region within a 2D code does not substantially affect performance, providing additional support for this assumption.
    \item We assume that, during lattice surgery, logical qubits that are being measured do not have logical errors anticommuting with any operators to measure.
    For example, when measuring $\Pstage{k} = \lgztp{OAB}$ and $\Qstage{k} = \lgztp{OCD}$, we ignore $\lgx$ and $\lgy$ errors on the output and validation qubits.
    This assumption is reasonable because error strings that lead to such logical errors have larger weights than the original code distances of the patches, as illustrated in Fig.~\ref{fig:lattice_surgery_full}(d).
    \item Due to the previous item, error strings within the ancillary region can incur only $\lgz_q$ errors on each logical qubit $q$ being measured.
    We estimate the corresponding $\lgz_q$ error rate by considering a hypothetical rectangular patch, with one boundary aligned along the boundary of the logical patch encoding $q$ and the other extending across the entire ancillary region (see Appendix~\ref{subsec:errors_ancillary_region} for more details).
    In this way, we can reasonably expect to avoid underestimating the $\lgz_q$ error rate, as the hypothetical patch always contains at least as many minimum-weight error strings equivalent to $\lgz_q$ than the ancillary region does, and their weights remain the same in both patches.
    \item For the cultivation-MSD scheme, we assume that stages begin at fixed intervals of $T_\mr{intv}$ rounds. 
    As defined in Sec.~\ref{subsec:combined_scheme}, $T_\mr{intv}$ is actually the expected value of these intervals.
\end{enumerate}

\subsection{Results \label{subsec:analysis_results}}

\begin{table*}[t!]
    \centering
    \begin{ruledtabular}
    \begin{tabular}{cccccccccc}
    Scheme & \begin{tabular}[c]{@{}c@{}}Output infidelity\\ $\infidMSD$\end{tabular} & \begin{tabular}[c]{@{}c@{}}Failure rate \\ $1 - \succprobMSD$\end{tabular} & \begin{tabular}[c]{@{}c@{}}Space cost $n$\\ (Qubits)\end{tabular} & \begin{tabular}[c]{@{}c@{}}Time cost $t$\\ (Time steps)\end{tabular} & \begin{tabular}[c]{@{}c@{}}Effective spacetime cost\\ ($nt/\succprobMSD$)\end{tabular} & $T_\mr{m}$ & $T_\mr{intv}$ & $T_\mr{idle}$ \\ \hline
    \hline \multicolumn{9}{c}{\subfig{a} $p = 5 \times 10^{-4}$} \\
    sng-$(11, 8, 6, 5)$ & $1.52 \times 10^{-5}$ & $4.71 \times 10^{-2}$ & $833$ & $384$ & $3.36 \times 10^{5}$ & \\
    sng-$(19, 10, 12, 7)$ & $1.02 \times 10^{-7}$ & $1.99 \times 10^{-2}$ & $2401$ & $512$ & $1.25 \times 10^{6}$ & \\
    cmb-$(23, 14, 16, 7, 3, 4, 5.03)$ & $1.13 \times 10^{-9}$ & $3.69 \times 10^{-3}$ & $5347$ & $759$ & $4.08 \times 10^{6}$ & 2 & 11.9 & 0.6 \\
    cmb-$(31, 18, 20, 11, 3, 3, 10.05)$ & $1.11 \times 10^{-12}$ & $1.00 \times 10^{-4}$ & $8825$ & $1298$ & $1.15 \times 10^{7}$ & 2 & 20.3 & 1.2 \\
    cmb-$(41, 22, 28, 13, 3, 4, 13.41)$ & $1.09 \times 10^{-15}$ & $3.02 \times 10^{-5}$ & $1.59 \times 10^{4}$ & $1348$ & $2.14 \times 10^{7}$ & 2 & 21.1 & 1.1 \\
    cmb-$(49, 30, 34, 15, 5, 4, 18.12)$ & $1.24 \times 10^{-18}$ & $4.09 \times 10^{-6}$ & $2.52 \times 10^{4}$ & $2272$ & $5.73 \times 10^{7}$ & 3 & 35.5 & 3.6 \\
    cmb-$(59, 34, 40, 19, 5, 6, 23.23)$ & $1.06 \times 10^{-21}$ & $4.19 \times 10^{-7}$ & $4.02 \times 10^{4}$ & $2378$ & $9.57 \times 10^{7}$ & 2 & 37.2 & 3.0 \\
    cmb-$(63, 40, 44, 19, 5, 8, 23.23)$ & $7.75 \times 10^{-22}$ & $4.21 \times 10^{-7}$ & $6.05 \times 10^{4}$ & $2073$ & $1.25 \times 10^{8}$ & 2 & 32.4 & 2.2 \\
    \hline \multicolumn{9}{c}{\subfig{b} $p = 10^{-3}$} \\
    sng-$(19, 8, 12, 7)$ & $1.21 \times 10^{-5}$ & $7.17 \times 10^{-2}$ & $2265$ & $512$ & $1.25 \times 10^{6}$ & \\
    sng-$(25, 12, 16, 11)$ & $1.03 \times 10^{-6}$ & $3.77 \times 10^{-2}$ & $4181$ & $768$ & $3.34 \times 10^{6}$ & \\
    cmb-$(29, 16, 20, 9, 3, 5, 6.09)$ & $1.04 \times 10^{-7}$ & $1.34 \times 10^{-2}$ & $9081$ & $925$ & $8.52 \times 10^{6}$ & 1 & 14.5 & 0.7 \\
    cmb-$(39, 22, 26, 13, 3, 4, 7.60)$ & $1.03 \times 10^{-9}$ & $1.96 \times 10^{-3}$ & $1.50 \times 10^{4}$ & $1391$ & $2.10 \times 10^{7}$ & 2 & 21.7 & 1.4 \\
    cmb-$(51, 28, 36, 15, 3, 4, 10.66)$ & $1.00 \times 10^{-11}$ & $8.37 \times 10^{-4}$ & $2.60 \times 10^{4}$ & $1595$ & $4.16 \times 10^{7}$ & 2 & 24.9 & 1.5 \\
    cmb-$(63, 36, 46, 17, 5, 6, 16.75)$ & $1.02 \times 10^{-13}$ & $1.95 \times 10^{-4}$ & $4.58 \times 10^{4}$ & $3020$ & $1.38 \times 10^{8}$ & 2 & 47.2 & 5.1 \\
    cmb-$(71, 36, 48, 23, 5, 8, 20.62)$ & $1.01 \times 10^{-15}$ & $2.60 \times 10^{-5}$ & $6.70 \times 10^{4}$ & $3513$ & $2.35 \times 10^{8}$ & 2 & 54.9 & 5.4 \\
    cmb-$(81, 44, 58, 23, 5, 8, 20.62)$ & $2.00 \times 10^{-16}$ & $2.65 \times 10^{-5}$ & $9.07 \times 10^{4}$ & $3513$ & $3.19 \times 10^{8}$ & 2 & 54.9 & 5.4
    \end{tabular}
    \end{ruledtabular}
    \caption{
        Output infidelities, failure rates, and resource costs of the single-level and cultivation-MSD schemes for various combinations of parameters at \subfig{a} $p=5 \times 10^{-4}$, and \subfig{b} $p = 10^{-3}$.
        Each variant of the single-level and cultivation-MSD schemes is labeled as `sng-$(\dout, \dx, \dz, \dm)$' or `cmb-$(\dout, \dx, \dz, \dm, \dcult, \Nm, c_\mr{gap})$'.
        For the cultivation-MSD scheme, the values of $T_\mr{m}$ (minimum number of rounds between successive executions of the CN protocol), $T_\mr{intv}$ (average number of rounds between successive stages), and $T_\mr{idle}$ (average number of rounds that auxiliary patches idle before being consumed) are additionally specified.
        The values of $T_\mr{intv}$ and $T_\mr{idle}$ are estimated by simulating the procedure of the cultivation-MSD scheme described in Sec.~\ref{subsec:combined_scheme} for 1000 stages.
    }
    \label{table:msd_performance}
\end{table*}

In Fig.~\ref{fig:msd_performance_comparison}, we plot the effective spacetime costs (i.e., the spacetime costs divided by the success probabilities) and output infidelities of the schemes for various parameter combinations when $p \in \qty{10^{-4}, 5 \times 10^{-4}, 10^{-3}}$ and $\pyratio = 1/10$.
Note that the single-level scheme has four parameters $(\dout, \dx, \dz, \dm)$ and the cultivation-MSD scheme has seven parameters $(\dout, \dx, \dz, \dm, \dcult, \Nm, c_\mr{gap})$, where $\dout - \dz \leq \dm < 2\dz$, $\dcult \leq \dm$, and $c_\mr{gap}$ is the logical gap threshold.
To highlight the effect of post-selection during the growing operation, we also plot the cost of a variant of the cultivation-MSD scheme where growing is performed for a single round without post-selection.
The cultivation-MSD scheme is not presented for $p = 10^{-4}$ due to the high computational cost of simulating the growing operation (see Appendix~\ref{app:growing_operation_simulations}).
However, based on the other two cases, it is reasonable to expect that its cost would exhibit similar behavior to the green line (cultivation-MSD with post-selection-free growing) for $q_\mr{dist} \gtrapprox 10^{-14}$ and eventually achieve very low infidelities under $10^{-22}$.

Figure~\ref{fig:msd_performance_comparison} shows that, if the target infidelity is within the range that the single-level scheme can reach (i.e., $\infidMSD \gtrapprox 35(7p/3)^3$), it is generally more preferable than the cultivation-MSD scheme.
If the target infidelity is lower than this, the cultivation-MSD scheme should be used, which can achieve infidelities higher than $7.7 \times 10^{-22}$ at $p = 5 \times 10^{-4}$ and $2.0 \times 10^{-16}$ at $p = 10^{-3}$.
We additionally note that the output infidelities do not significantly depend on $\pyratio$, as demonstrated in Appendix~\ref{app:pyratio_dependency}. 
Specifically, the difference in infidelity for the two extreme cases, $\pyratio = 0$ and $\pyratio = 1$, is at most only a factor of $\sim 1.5$.

In Table~\ref{table:msd_performance}, we list several data points in Fig.~\ref{fig:msd_performance_comparison} with more details on their failure probabilities and individual resource costs, where single-level and cultivation-MSD schemes are labeled as `sng-$(\dout, \dx, \dz, \dm)$' and `cmb-$(\dout, \dx, \dz, \dm, \dcult, \Nm, c_\mr{gap})$', respectively.

\label{para:for_comparison}
For comparison, Fig.~\ref{fig:msd_performance_comparison} additionally presents the costs of other schemes: the original cultivation scheme in Ref.~\cite{gidney2024magic}, the color code MSD scheme based on transversal gates in Ref.~\cite{beverland2021cost}, and the surface code schemes in Ref.~\cite{litinski2019magic}.
The cultivation protocol is notably resource-efficient as it does not require encoded operations; however, it is challenging to reach very low infidelities (e.g., $\infidMSD \lessapprox 10^{-10}$ when $p = 10^{-3}$ and $\infidMSD \lessapprox 10^{-11}$ when $p = 5 \times 10^{-4}$) since its success probability exponentially decreases as the fault distance grows.
In contrast, our cultivation-MSD scheme offers a clear advantage in this regard, enabling the achievement of much lower infidelities suitable for implementing a wide range of quantum algorithms.
Compared to the MSD scheme described in Ref.~\cite{beverland2021cost}, which employs transversal operations between stacked patches, our schemes show a significant reduction in spacetime cost, improving by about two orders of magnitude.
In comparison to surface code schemes, our color code schemes exhibit higher spacetime costs for a given target infidelity.
This difference is estimated to be less than one order of magnitude.

\begin{figure}[!t]
	\centering
	\includegraphics[width=\linewidth]{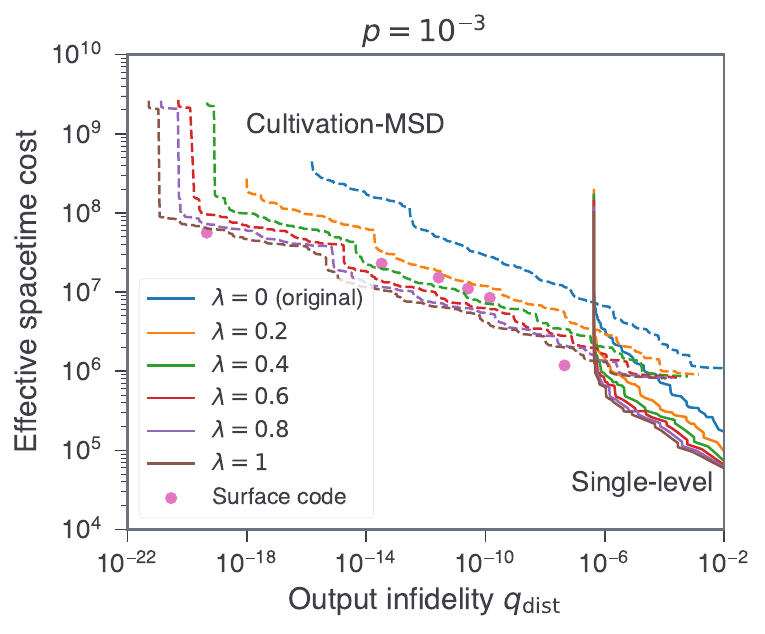}
	\caption{
        Costs of our MSD schemes under improved thresholds at $p=10^{-3}$, compared with those of the surface code schemes from Ref.~\cite{litinski2018lattice}.
        We assume that the threshold ($p_\mr{th}$) for each setting (including triangular/rectangular patch memory experiments and stability experiments) is improved to $(1 - \lambda) p_\mr{th} + \lambda(1\%)$, where $\lambda$ is a tuning parameter, while keeping all other parameters in the ansatz unchanged.
        Solid and dashed lines represent the single-level and cultivation-MSD schemes, respectively.
    }
	\label{fig:pth_adj_cost_analysis}
\end{figure}

\label{para:themainreason}
The main reason that MSD is more costly with color codes than with surface codes (despite their lower encoding rate and efficient lattice surgery capabilities) is their weaker fault tolerance under circuit-level noise. 
Specifically, the \textit{scaling threshold} (the parameter $p_\mathrm{th}$ in the sub-threshold ansatz given by Eq.~\eqref{eq:ansatz}) is estimated as $0.2\%$–$0.6\%$ for color codes under the concatenated MWPM decoder (see Appendix~\ref{subsec:ansatz_parameter_estimates}), whereas surface codes achieve a threshold of nearly $1\%$~\cite{fowler2009high,stephens2014fault,heim2016optimal,higgot2023improved}.
Nevertheless, circuit-level decoders for color codes continue to improve~\cite{beverland2021cost,gidney2023new,lee2025color}, motivating an analysis of how much further decoding methods need to improve for color codes to become preferable over surface codes in MSD applications.

In Fig.~\ref{fig:pth_adj_cost_analysis}, we roughly estimate the spacetime costs of our MSD schemes at $p=10^{-3}$, assuming that the scaling threshold ($p_\mr{th}$) for each setting used in our decoder simulations (see Appendix~\ref{app:decoder_simulations}) improves to $(1 - \lambda) p_\mr{th} + \lambda(1\%)$ with a tuning parameter $\lambda$, while all other parameters in the ansatz of Eq.~\eqref{eq:ansatz} are unchanged.
For the growing operation, we assume that the error rate at any $c_\mr{gap}$ is reduced by the same proportion as at $c_\mr{gap}=0$, while the acceptance rate remains invariant.
The figure indicates that at least $\lambda \approx 0.4$ is necessary for the cultivation-MSD scheme to be more resource-efficient than the surface code scheme for certain target infidelities.
This corresponds to thresholds of $p_\mr{th} = 0.54\%$ for triangular patches, $0.65\%$ ($\lgz$ failure) and $0.58\%$ ($\lgx$ failure) for rectangular patches, and $0.77\%$ for stability experiments (improved from the current thresholds of 0.24\%, 0.42\%, 0.31\%, and 0.62\%, respectively) \footnote{
    Note that these thresholds are \textit{scaling} thresholds, not \textit{crossing} thresholds.
    That is, they correspond to the parameter $p_\mathrm{th}$ in the sub-threshold ansatz given by Eq.~\eqref{eq:ansatz}, rather than the intersection points of curves for different code distances.
    These two thresholds generally differ, as the ansatz breaks down near threshold.
    Although the crossing threshold is commonly used as a standard definition in the literature, the scaling threshold is more relevant in the low-error regime as we consider in our analysis.
}.

\label{para:integrated_cultivation_growing}
Finally, we note that the performance analysis presented above relies on treating the cultivation and growing operations as separate modules. 
While this modular approach allows for flexibility in substituting different cultivation protocols, the interface between these two stages could potentially impact the overall performance. 
We investigate this effect in Appendix~\ref{app:integrated_cult_growing_simulations} by performing integrated simulations of cultivation followed immediately by the growing operation. 
These simulations account for detectors spanning the interface and their use in cultivation post-selection.
The results indicate a moderate increase in the spacetime cost of the cultivation-MSD scheme (by up to a factor of approximately 1.7 for the case $p = 10^{-3}$, $\dcult = 5$) compared to the separate analysis.
This increase, while not entirely negligible, remains relatively minor considering that the overall spacetime costs shown in Fig.~\ref{fig:msd_performance_comparison} span several orders of magnitude (from $10^4$ to $10^{10}$).
In addition, we estimate that the minimum achievable output infidelity could increase by roughly 3--30 times, which is again not negligible but still allows the scheme to remain sufficiently competitive, given that the original infidelity value was already extremely low at $\sim 2 \times 10^{-16}$.
See Appendix~\ref{app:integrated_cult_growing_simulations} for a detailed discussion and the corresponding simulation results.

\section{Remarks \label{sec:remarks}}

In this work, we proposed two magic state distillation (MSD) schemes for 2D color codes based on the 15-to-1 MSD circuit in Fig.~\ref{fig:msd_circuit}.
We presented the end-to-end descriptions of the schemes, from the definitions of logical qubits to the arrangement of logical-level operations, providing a comprehensive guide for their implementation.
The \textit{single-level MSD scheme}, covered in Sec.~\ref{sec:msd_scheme}, implements each $\pi/8$-rotation by employing the faulty T-measurement, i.e., measuring logical qubits non-fault-tolerantly in the basis of $\qty{\ketlgc{0} \pm e^{-i\pi/4} \ketlgc{1}}$.
The \textit{cultivation-MSD scheme}, covered in Sec.~\ref{sec:higher_quality_magic_states}, first fault-tolerantly prepares magic states via cultivation \cite{chamberland2020very,itogawa2024even,gidney2024magic}, grows the patches to accommodate the low error rate of the cultivated magic states, and finally consumes them to implement the $\pi/8$-rotations.
Both schemes can be performed via multiple stages of lattice surgery, which measures Pauli operators of multiple logical qubits by extending the lattice and connecting the patch boundaries.
The single-level scheme has a limited output infidelity of $\sim 35(7p/3)^3$ when $p$ is the physical error rate, while the cultivation-MSD scheme can achieve much lower infidelities (e.g., $\gtrapprox 2 \times 10^{-16}$ when $p=10^{-3}$) at the cost of higher resource overheads.

We optimized the resource costs of our MSD schemes through various approaches:
First, each lattice surgery operation measures two commuting Pauli operators in parallel, which is possible thanks to the rich topological structure of the color codes \cite{thomsen2024low}.
Thus, the 15 rotations can be conducted pairwise through eight stages.
Note that the way of pairing the rotations needs to be carefully chosen, as it affects the fault tolerance of MSD (see Sec.~\ref{subsec:scheme}).
In addition, logical qubits measured during MSD (such as the validation qubits A--D in Fig.~\ref{fig:msd_circuit}) can be encoded in patches with smaller code distances than the patch for outputting a magic state, which is an approach used in Ref.~\cite{litinski2019magic} for surface codes.
This is possible since memory errors in those qubits can be detected by the MSD circuit, while a logical error in the output patch directly affects the distilled magic state.
Moreover, the validation qubits are encoded in rectangular patches in Fig.~\ref{fig:color_code_patches}(b) instead of ordinary triangular patches in Fig.~\ref{fig:color_code_patches}(a).
Each rectangular patch encodes two logical qubits with their logical $Z$ ($\lgz$) operators supported on the same boundary (called its $Z$-boundary), while their $\lgx$ operators are supported on another boundary ($X$-boundary).
Since lattice surgery for the MSD circuit involves only $\lgz$ operators, we can let the ancillary region adjacent to only the $Z$-boundaries of the two rectangular patches, which is more resource-efficient than using four triangular patches.
Another advantage of using rectangular patches is that code distances ($\dx$, $\dz$) for $\lgx$ and $\lgz$ errors can be optimized separately, considering that they have different effects during MSD.
However, contrary to common belief, we revealed that the MSD circuit is tolerant to single-location $\lgx$ errors as well as $\lgz$ errors on validation qubits, thus $\dx$ and $\dz$ do not have to be significantly different as exemplified in Table~\ref{table:msd_performance}.

\label{para:forthecultivation}
For the cultivation-MSD scheme, a crucial problem was that the growing operation acted as a bottleneck, degrading a magic state generated through cultivation, as its fault tolerance depends on the code distance before growing.
We resolved this issue by employing post-selection based on the logical gap computed by the concatenated MWPM decoder \cite{lee2025color}.
The impact of post-selection is noteworthy: By aborting only 20\% of trials, the logical error rate can be suppressed by nearly three orders of magnitude when growing from $\dcult=3$ to $\dm\geq7$, as shown in Fig.~\ref{fig:growing_operation_plog_vs_pacc}.

By thoroughly investigating logical errors during MSD, we assessed the performance of our schemes based on their output infidelities and resource costs for various parameter combinations, as illustrated in Fig.~\ref{fig:msd_performance_comparison} and Table~\ref{table:msd_performance}. 
Notably, our schemes have about two orders of magnitude lower spacetime costs compared to the previous color code MSD scheme \cite{beverland2021cost}.
Furthermore, we verified that the cultivation-MSD scheme can reach sufficiently low output infidelities of $\sim 2 \times 10^{-16}$ when $p=10^{-3}$ and $\sim 7.8 \times 10^{-22}$ when $p=5 \times 10^{-4}$, which are challenging to achieve with either single-level MSD or cultivation alone.

\label{para:despitethisimprovement}
Despite this improvement, our color code schemes still do not surpass the performance of surface code MSD, exhibiting less than an order of magnitude higher spacetime overhead.
In order for color code schemes to reduce these overheads, we require improvements in color code decoders to achieve a higher circuit-level threshold (currently around 0.2\%--0.6\%), aiming closer to that of surface codes ($\sim 1\%$).
Specifically, by parametrizing the improvement in the \textit{scaling threshold} (the parameter $p_\mr{th}$ in the sub-threshold logical failure rate ansatz given by Eq.~\eqref{eq:ansatz}) as $p_\mr{th} \rightarrow (1 - \lambda) p_\mr{th} + \lambda (1\%)$, we estimated that achieving at least $\lambda = 0.4$ is necessary for our schemes to outperform the surface code scheme in Ref.~\cite{litinski2019magic}.
In practice, this requires a decoder achieving a scaling threshold of $\gtrapprox 0.54\%$ in memory experiments with triangular patches (improved from the current threshold of $0.24\%$).
Although this remains challenging, we emphasize that it may not be strictly necessary to reach the $1\%$ threshold to realize practical benefits.
The current \textit{cross thresholds} (the intersection points of logical failure rate curves at different code distances) of color code decoders are already around $0.5\%$ \cite{beverland2021cost,gidney2023new,zhang2024facilitating,lee2025color}, thus reducing the gap between the scaling and cross thresholds would be an important next step.
Note that the surface code scheme could also benefit from integration with cultivation, potentially widening the performance gap between surface and color codes.
However, converting a color code into a surface code (e.g., via the `grafting' technique proposed in Ref.~\cite{gidney2024magic}) introduces additional technical complexity and may require further simplification.
Exploring this would be a promising direction for future work as well.

\label{para:in_addition_our}
In addition, our analysis assumes that irregular checks for lattice surgery have a negligible effect on decoding performance.
Although this assumption was partially justified in Sec.~\ref{subsec:analysis_method}, developing refined color code decoders that explicitly account for domain walls would be a valuable next step. 
Such decoders would enable simulations of the complete end-to-end MSD circuit rather than relying on the modular approach used in our analysis, leading to a more accurate evaluation of MSD performance.

\label{para:moreover_utilizing}
Moreover, utilizing transversal Clifford gates for MSD is another intriguing avenue for investigation.
While lattice surgery typically requires $O(d)$ rounds, transversal gates can be executed in $O(1)$ rounds \cite{zhou2024algorithmic}, which enables a significant reduction in resource costs if Clifford gates between multiple logical qubits are implemented transversally rather than via lattice surgery.
However, two major challenges arise:
(i) Transversal gates between multiple logical patches inherently require long-range connectivity. 
Although stacking logical patches can partially alleviate this issue, gates involving distant logical patches may still demand a large number of swap operations unless hardware directly supports long-range connections.
(ii) Logical patches must have identical sizes to facilitate transversal gates, imposing constraints on the optimization of the MSD layout.
Nevertheless, leveraging transversal gates for MSD remains a promising direction for future research.
For instance, one might consider applying transversal gates to implement two-qubit Clifford operations between ancillary logical qubits of equal size, while using lattice surgery for gates connecting output and ancillary qubits with different code distances.
Furthermore, developing MSD schemes that fully exploit transversal gates could be particularly advantageous if hardware supports long-range connections, as explored in Refs.~\cite{zhou2024algorithmic,fazio2025low}.

\section*{Acknowledgements}

We thank Samuel C. Smith, Andrew Li, Lucas English, Sam Roberts, and Craig Gidney for helpful discussions and comments. 
This work is supported by the Australian Research Council via the Centre of Excellence in Engineered Quantum Systems (EQUS) project number CE170100009.  This article is based upon work supported by the Defense Advanced Research Projects Agency (DARPA) under Contract No.\ HR001122C0063. Any opinions, findings and conclusions or recommendations expressed in this article are those of the author(s) and do not necessarily reflect the views of the Defense Advanced Research Projects Agency (DARPA).

\nocite{github:color-code-msd-data}

\appendix

\section{Proof of fault tolerance for logical $X$ errors in the MSD circuit \label{app:MSD_X_error_tolerance_proof}}

\begin{figure*}[!t]
	\centering
	\includegraphics[width=\linewidth]{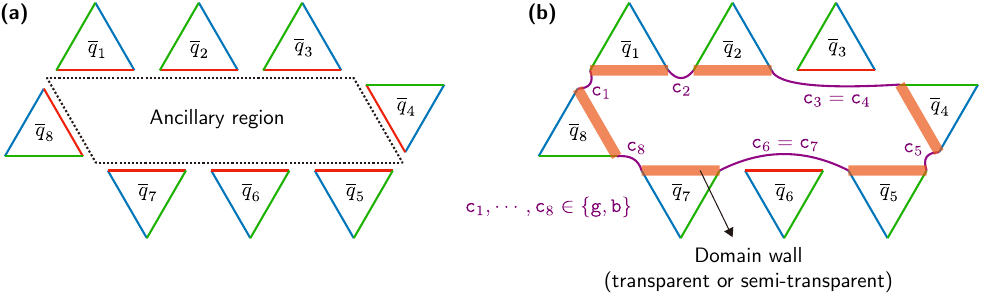}
	\caption{
        \subfig{a} Layout for lattice surgery, where $N=8$ triangular patches surround an ancillary region adjacent to their red boundaries.
        In each face, its red, green, and blue boundaries are placed clockwise.
        The eight logical qubits encoded in the patches are labeled as $\lqdata{1}, \lqdata{2}, \cdots, \lqdata{8}$.
        \subfig{b} Layout during the merging operation of lattice surgery, where $\lqdata{3}$ and $\lqdata{6}$ are not involved in the operation.
        The triangular patches are connected via the ancillary region with domain walls placed between the patches and the ancillary region.
        The ancillary region has boundaries (purple lines) between adjacent patches that are nontrivially involved in the operation.
        The colors of these boundaries are labeled as $\cbs_1, \cdots, \cbs_8 \in \qty{\gbs, \bbs}$, where $\cbs_3 = \cbs_4$ and $\cbs_6 = \cbs_7$ in this case.
    }
	\label{fig:general_lattice_surgery}
\end{figure*}

In this appendix, we prove that any single $\lgx$ error on a validation qubit in middle of the 15-to-1 MSD circuit in Fig.~\ref{fig:msd_circuit} does not make a logical error on the output qubit, assuming that there are no other logical errors.
Throughout this appendix, all the states and operators are in the logical level and overlines are omitted for readability.
We denote the set of qubits as $Q \coloneqq \qty{q_\mr{out}, q_\mr{A}, q_\mr{B}, q_\mr{C}, q_\mr{D}}$.

Denoting the $i$-th $\pi/8$-rotation of the MSD circuit as $\rot{P}{\pi/8}^{(i)}$ for each $i \in \qty{1, \cdots, 15}$, an $X$ error on a qubit $q$ after applying $\rot{P}{\pi/8}^{(n)}$ can be commuted to the beginning of the circuit and absorbed into the initial $\ket{+}$ state, which makes an error
\begin{align*}
    E_{q,n} \coloneqq \prod_{P \in \mathcal{P}_{q,n}} P_{-\pi/4},
\end{align*}
where
\begin{align}
    \mathcal{P}_{q,n} \coloneqq \left\{ P^{(j)} \; \middle\vert \; j \leq n \land q \in \supp P^{(j)}\right\}. \label{eq:P_qn_def}
\end{align}

We prove the following claim:
\begin{claim}
    For an arbitrary qubit state $\ket{\psi} \in \mathcal{H}_2$, a validation qubit $q \in \qty{q_\mr{A}, q_\mr{B}, q_\mr{C}, q_\mr{D}}$, and an integer $n \in \qty{1, \cdots, 15}$,
    \begin{multline}
        \qty(\identity_\mr{out} \otimes \bra{{+}{+}{+}{+}}_\mr{ABCD}) E_{q,n} \qty(\ket{\psi}_\mr{out} \otimes \ket{{+}{+}{+}{+}}_\mr{ABCD}) \\
        \propto \ket{\psi}_\mr{out}.  \label{eq:MSD_X_error_tolerance_proof}
    \end{multline} 
\end{claim}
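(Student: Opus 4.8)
The plan is to turn the operator identity in Eq.~\eqref{eq:MSD_X_error_tolerance_proof} into a finite sum over $Z$-type Paulis, annihilate most terms with the validation-qubit projector, and then exclude the single remaining dangerous term by a parity count. First I would expand the absorbed error. Since every rotation $\rot{P^{(j)}}{\pi/8}$ in Fig.~\ref{fig:msd_circuit} has a $\lgz$-type base, all elements of $\mathcal{P}_{q,n}$ mutually commute, and using $\rot{P}{-\pi/4} = (\identity + iP)/\sqrt{2}$ I obtain
\begin{align*}
    E_{q,n} = 2^{-\abs{\mathcal{P}_{q,n}}/2} \sum_{\mathcal{S} \subseteq \mathcal{P}_{q,n}} i^{\abs{\mathcal{S}}} \prod_{P \in \mathcal{S}} P,
\end{align*}
where each $\prod_{P \in \mathcal{S}} P$ is again a $\lgz$-type Pauli supported exactly on those qubits that lie in an odd number of the rotations of $\mathcal{S}$.

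Next I would apply $\bra{{+}{+}{+}{+}}_\mr{ABCD}$. Because $\bra{+} Z \ket{+} = 0$, a term survives the projection only when $\prod_{P \in \mathcal{S}} P$ acts trivially on all four validation qubits, i.e.\ when each of $q_\mr{A}, q_\mr{B}, q_\mr{C}, q_\mr{D}$ lies in an even number of the rotations in $\mathcal{S}$. For such surviving $\mathcal{S}$ the leftover operator on the output qubit is either $\identity_\mr{out}$ or $\lgz_\mr{out}$, according to the parity with which $q_\mr{out}$ appears. The claim thus reduces to showing that no subset $\widetilde{\mathcal{P}} \subseteq \mathcal{P}_{q,n}$ produces $\lgz_\mr{out}$; granting this, every surviving term is a scalar multiple of the identity and the projected state is a scalar multiple of $\ket{\psi}_\mr{out}$, with proportionality constant $2^{-\abs{\mathcal{P}_{q,n}}/2}\sum_{\mathcal{S}} i^{\abs{\mathcal{S}}}$ summed over the surviving subsets.

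The decisive step, which I expect to be the real obstacle, is the parity argument excluding such a $\widetilde{\mathcal{P}}$. It rests on the structural fact that in the 15-to-1 circuit every rotation has odd weight: a rotation involves $q_\mr{out}$ precisely when it acts on an even number of validation qubits, so its total support size $\mathrm{wt}(P^{(j)})$ is odd in all fifteen cases. Assume for contradiction that some $\widetilde{\mathcal{P}} \subseteq \mathcal{P}_{q,n}$ yields $\lgz_\mr{out}$, i.e.\ touches $q_\mr{out}$ an odd number of times and each validation qubit an even number of times. Counting the total qubit--rotation incidences $\sum_{P \in \widetilde{\mathcal{P}}} \mathrm{wt}(P)$ qubit-by-qubit gives (odd) $+$ (four even contributions) $=$ odd, whereas counting rotation-by-rotation, with every $\mathrm{wt}(P)$ odd, gives a value congruent to $\abs{\widetilde{\mathcal{P}}} \pmod 2$; hence $\abs{\widetilde{\mathcal{P}}}$ is odd. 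But by Eq.~\eqref{eq:P_qn_def} every element of $\mathcal{P}_{q,n}$ contains $q$ in its support, so $\widetilde{\mathcal{P}}$ touches $q$ exactly $\abs{\widetilde{\mathcal{P}}}$ times, which is odd. This contradicts $q$ being touched an even number of times, completing the exclusion.

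Finally I would note that the same incidence count clarifies why the argument genuinely requires $q$ to be a validation qubit: for $q = q_\mr{out}$ there is no constraint forcing the output to be touched an even number of times, so the contradiction evaporates, consistent with the circuit's known sensitivity to $\lgx$ errors on the output qubit. Assembling the surviving identity terms then yields the proportionality constant and establishes Eq.~\eqref{eq:MSD_X_error_tolerance_proof}.
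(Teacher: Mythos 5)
Your proposal is correct and follows essentially the same route as the paper's proof in Appendix~\ref{app:MSD_X_error_tolerance_proof}: expand $E_{q,n}$ as a sum over subsets of $\mathcal{P}_{q,n}$, observe that only $Z$-free (on the validation qubits) terms survive the $\bra{{+}{+}{+}{+}}$ projection, and exclude a $\lgz_\mr{out}$-producing subset by the same double-counting parity argument using the odd weight of every rotation and the fact that every element of $\mathcal{P}_{q,n}$ contains $q$. The only cosmetic difference is that you phrase the exclusion directly while the paper wraps the whole argument in a proof by contradiction.
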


\begin{proof}
    Since $\rot{P}{-\pi/4} = (\identity + iP)/\sqrt{2}$ for any Pauli operator $P$ and every $P^{(j)}$ contains only $Z$ operators, we have
    \begin{align}
        E_{q,n} \propto{}& \prod_{P \in \mathcal{P}_{q,n}} (\identity + iP) = \sum_{\widetilde{\mathcal{P}} \subseteq \mathcal{P}_{q,n}} \qty[ i^\abs{\widetilde{\mathcal{P}}} \prod_{P \in \widetilde{\mathcal{P}}} P ] \nonumber \\ 
        ={}& \sum_{\widetilde{\mathcal{P}} \subseteq \mathcal{P}_{q,n}} \qty[ i^\abs{\widetilde{\mathcal{P}}} \prod_{\substack{q' \in Q \\ \abs{\widetilde{\mathcal{P}}_{q'}}:\; \mr{odd}}} Z_{q'} ], \label{eq:E_qn}
    \end{align}
    where an empty product $\prod_{P \in \emptyset} P$ is defined as $\identity$ and
    \begin{align}
        \widetilde{\mathcal{P}}_{q'} \coloneqq \qty{P \in \widetilde{\mathcal{P}} \; \middle\vert \; q' \in \supp P}. \label{eq:abs_q_def}
    \end{align}
    It is straightforward to see that
    \begin{align*}
        \widetilde{\mathcal{P}}_{q} ={}& \widetilde{\mathcal{P}},\\ 
        \sum_{q' \in Q} \abs{\widetilde{\mathcal{P}}_{q'}} ={}& \sum_{P \in \widetilde{\mathcal{P}}} \abs{\supp P}.
    \end{align*}
    
    Assume for contradiction that the claim does not hold.
    Then the summation in Eq.~\eqref{eq:E_qn} contains at least one term involving only $Z_\mr{out}$ (and a constant factor), since all other terms except $\identity$ vanish when computing the left-hand side of Eq.~\eqref{eq:MSD_X_error_tolerance_proof}.
    Therefore, there exists a subset $\widetilde{\mathcal{P}} \subseteq \mathcal{P}_{q,n}$ such that $\abs{\widetilde{\mathcal{P}}_{q_\mr{out}}}$ is odd and $\abs{\widetilde{\mathcal{P}}_{q'}}$ is even for every $q' \in \qty{q_\mr{A}, q_\mr{B}, q_\mr{C}, q_\mr{D}}$, which entails that $\sum_{q' \in Q} \abs{\widetilde{\mathcal{P}}_{q'}} = \sum_{P \in \widetilde{\mathcal{P}}} \abs{\supp P}$ is odd.
    Since $\abs{\supp P^{(j)}}$ is odd for every $j$, $\abs{\widetilde{\mathcal{P}}} = \abs{\widetilde{\mathcal{P}}_{q}}$ is also odd, implying that $q$ should be the output qubit $q_\mr{out}$, which contradicts the precondition that $q$ is a validation qubit.
    This contradiction shows that the assumption is false, and hence the claim holds.
\end{proof}


\section{General lattice surgery scheme \label{app:general_lattice_surgery_scheme}}

In this appendix, we describe the macroscopic process of lattice surgery for measuring an arbitrary pair of commuting logical Pauli operators of triangular color codes.

\begin{figure*}[!t]
	\centering
	\includegraphics[width=\linewidth]{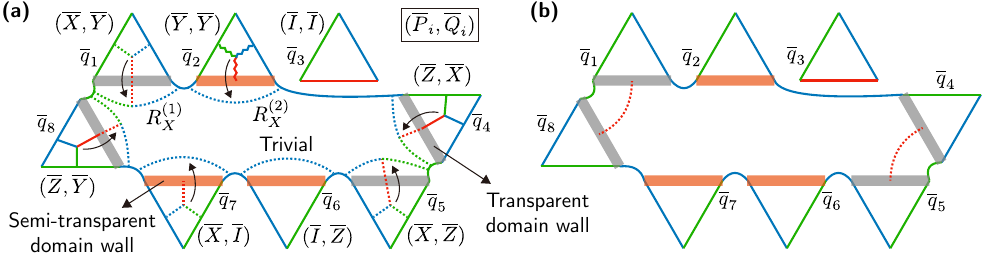}
	\caption{
        Example of a lattice surgery operation for measuring $\lgc{P} = \lgx_1\lgy_2\lgi_3\lgz_4\lgx_5\lgi_6\lgx_7\lgz_8$ and $\lgc{Q} = \lgy_1\lgy_2\lgi_3\lgx_4\lgz_5\lgz_6\lgi_7\lgy_8$.
        Solid, dotted, and wavy lines indicate Pauli-$Z$, $X$, and $Y$ string operators, respectively.
        \subfig{a} The string-net operator in the patch of $\lqdata{i}$, representing $\lgc{P}_i$, is transformed into a string-net operator $R_X^{(i)}$ inside the ancillary region via a transparent (gray) or semi-transparent (orange) domain wall, determined by Table~\ref{table:domain_wall_types}.
        Exceptionally, $\lqdata{3}$ and $\lqdata{6}$ are not involved in $\lgc{P}$, thus $R_X^{(3)}$ and $R_X^{(6)}$ are trivial (i.e., equivalent to identity).
        \subfig{b} The product of $R_X^{(i)}$'s is equivalent to a product of \rx-string operators connecting transparent domain walls within the ancillary region, meaning that the value of $\lgc{P}$ can be determined from the measurement outcomes of checks and red edges.
        The value of $\lgc{Q}$ can be determined similarly.
 }
	\label{fig:general_lattice_surgery_example}
\end{figure*}

We consider $N$ triangular logical patches surrounding an ancillary region adjacent to the red boundaries of the patches, as exemplified in Fig.~\ref{fig:general_lattice_surgery}(a) for $N=8$.
In each patch, the red, green, and blue boundaries are placed clockwise.
The $N$ logical qubits encoded in the patches are labeled as $\lqdata{1}, \cdots, \lqdata{N}$ in a clockwise order starting from an arbitrary patch.
We aim to measure two commuting Pauli operators $\lgc{P} = \bigotimes_{i=1}^N \lgc{P}_i$ and $\lgc{Q} = \bigotimes_{i=1}^N \lgc{Q}_i$ on these logical qubits, where $\lgc{P}_i, \lgc{Q}_i \in \qty{\lgi, \lgx, \lgy, \lgz}$ are logical Pauli operators on $\lqdata{i}$.
The procedure of the lattice surgery operation is as follows:
\begin{enumerate}
    \item Initialize the ancillary region with a red temporal boundary; that is, prepare the Bell state $\ket{\Phi_+} \coloneqq (\ket{00} + \ket{11})/\sqrt{2}$ on every red edge in the region.
    \item \textbf{(Merging operation)} Merge the triangular patches with the ancillary region by measuring appropriate checks, as shown in Fig.~\ref{fig:general_lattice_surgery}(b).
    The interior of the ancillary region has ordinary color-code checks.
    Domain walls are placed between this interior and the triangular patches, which are visualized as orange regions, where checks are deformed from ordinary color-code ones.
    Exceptionally, if $\lgc{P}_i = \lgc{Q}_i = \lgi$, the patch for $\lqdata{i}$ is not merged with the ancillary region, such as $\lqdata{3}$ and $\lqdata{6}$ in Fig.~\ref{fig:general_lattice_surgery}(b).
    The colors of the boundaries (purple lines) between the triangular patches  should be carefully chosen.
    Denoting them as $\cbs_1, \cdots, \cbs_N \in \qty{\gbs, \bbs}$ in a cyclic way from the boundary between $\lqdata{N}$ and $\lqdata{1}$, we determine them by the following rule:
    For every $i \in \qty{1, \cdots, N}$,
    \begin{align}
    \begin{split}
        \begin{cases}
            \cbs_{i+1} = \cbs_i & \text{if } \comm{\lgc{P}_i}{\lgc{Q}_i}=0, \\
            \cbs_{i+1} \neq \cbs_i & \text{otherwise},
        \end{cases}
    \end{split}
    \label{eq:rules_determining_boundary_colors}
    \end{align}
    where $\cbs_{N + 1} \coloneqq \cbs_1$.
    \item Wait for $d$ code cycles to correct timelike error strings in the ancillary region.
    \item \textbf{(Splitting operation)} Separate the triangular patches from the ancillary region by measuring it with a red temporal boundary. 
    The measurement outcomes of $\lgc{P}$ and $\lgc{Q}$ and the updated Pauli frames of the logical qubits are determined from the measurement outcomes of checks and red edges in the ancillary region.
\end{enumerate}

Note that, since $\lgc{P}$ and $\lgc{Q}$ commute, there are even flips of colors between neighboring boundaries connecting triangular patches, thus we can always find $\qty{\cbs_i}$ that satisfies the rule in Eq.~\eqref{eq:rules_determining_boundary_colors}.

The domain wall adjacent to a triangular patch $\lqdata{i}$ (nontrivially involved in the lattice surgery operation) is selected so that $\lgc{P}_i$ ($\lgc{Q}_i$) is equivalent to a Pauli-$X$ ($Z$) string-net operator $R_X^{(i)}$ ($R_Z^{(i)}$) in the ancillary region, which terminates at the adjacent boundaries of the ancillary region and may also terminate at the domain wall.
In other words, $\lgc{P}_i$ ($\lgc{Q}_i$) can be transformed into $R_X^{(i)}$ ($R_Z^{(i)}$) by multiplying checks in the domain wall.
See Fig.~\ref{fig:general_lattice_surgery_example}(a) for an example showing the transformation of $\lgc{P}$ when $\lgc{P} = \lgx_1\lgy_2\lgi_3\lgz_4\lgx_5\lgi_6\lgx_7\lgz_8$ and $\lgc{Q} = \lgy_1\lgy_2\lgi_3\lgx_4\lgz_5\lgz_6\lgi_7\lgy_8$.
As a result, $\lgc{P}$ ($\lgc{Q}$) is equivalent to $R_{X} \coloneqq \prod_{i=1}^N R_{X}^{(i)}$ ($R_{Z} \coloneqq \prod_{i=1}^N R_{Z}^{(i)}$) under stabilizer multiplication, which can be transformed further into a product of \rx-string (\rz-string) operators connecting domain walls in the ancillary region, as shown in Fig.~\ref{fig:general_lattice_surgery_example}(b).
Hence, we can determine the value of $\lgc{P}$ and $\lgc{Q}$ from the measurement outcomes of checks and red edges in the ancillary region including the domain walls.
The microscopic structure of the layout should be considered to specify exactly which checks and red edges are involved in this, which will not be covered in this appendix.


\begin{table*}[t!]
    \centering
    \begin{ruledtabular}
    \begin{tabular}{cccc}
        $\qty(\lgc{P}_i, \lgc{Q}_i)$ & \begin{tabular}[c]{@{}c@{}} Boundary colors \\ ($\cbs_i, \cbs_{i+1} \in \qty{\gbs, \bbs}$) \end{tabular} & Domain wall type & \begin{tabular}[c]{@{}c@{}} Effects on bosons \\ ($\pbs_i$, $\qbs_i$: Pauli charge labels for $\lgc{P}_i$, $\lgc{Q}_i$) \end{tabular} \\ \midrule
        $\comm{\lgc{P}_i}{\lgc{Q}_i} \neq 0$ & $\cbs_i \neq \cbs_{i+1}$ & Transparent & \begin{tabular}[c]{@{}c@{}} Permute charge labels with rules ($\lqdata{i}$ on left / Anc. on right):   \\ $\qty{\gbs \leftrightarrow \cbs_i, ~ \bbs \leftrightarrow \cbs_{i+1}, ~ \pbs_i \leftrightarrow \xbs, ~ \qbs_i \leftrightarrow \zbs}$. \end{tabular} \\
        $\lgc{P}_i = \lgc{Q}_i \neq \lgi$ & $\cbs_i = \cbs_{i+1}$ & Semi-transparent & Condense $(\rbs\pbs_i, \cbs_i\ybs)$; \ttt{em}-exchanging. \\
        $\lgc{P}_i \neq \lgc{Q}_i = \lgi$ & $\cbs_i = \cbs_{i+1}$ & Semi-transparent & Condense $(\rbs\pbs_i, \cbs_i\zbs)$; \ttt{em}-exchanging. \\
        $\lgc{Q}_i \neq \lgc{P}_i = \lgi$ & $\cbs_i = \cbs_{i+1}$ & Semi-transparent & Condense $(\rbs\qbs_i, \cbs_i\xbs)$; \ttt{em}-exchanging. \\
        $\lgc{P}_i = \lgc{Q}_i = \lgi$ & $\cbs_i = \cbs_{i+1}$ & Opaque & Condense \rx, \ry, \rz on $\lqdata{i}$ side and $\cbs_i\xbs$, $\cbs_i\ybs$, $\cbs_i\zbs$ on Anc. side. \\
    \end{tabular}
    \end{ruledtabular}
    \caption{Determination rule of the domain wall placed between the $i$-th triangular patch ($\lqdata{i}$) and the ancillary region (Anc.) during a lattice surgery operation for measuring $\lgc{P} = \bigotimes_{j=1}^N \lgc{P}_j$ and $\lgc{Q} = \bigotimes_{j=1}^N \lgc{Q}_j$.
    We say that a semi-transparent domain wall condenses $(\cbs\wbs,\cbs'\wbs')$ for two charge labels $\cbs\wbs$ and $\cbs'\wbs'$ if it condenses $\cbs\wbs$ and $\cbs'\wbs'$ on its sides facing $\lqdata{i}$ and the ancillary region, respectively.
    }
    \label{table:domain_wall_types}
\end{table*}

For the above argument to be valid, we should appropriately choose the type of each domain wall (for $\lqdata{i}$), which depends on $\lgc{P}_i$ and $\lgc{Q}_i$.
We present this determination rule in Table~\ref{table:domain_wall_types} for five different cases: (i) $\comm{\lgc{P}_i}{\lgc{Q}_i} \neq 0$, (ii) $\lgc{P}_i = \lgc{Q}_i \neq \lgi$, (iii) $\lgc{P}_i \neq \lgc{Q}_i = \lgi$, (iv) $\lgc{Q}_i \neq \lgc{P}_i = \lgi$, and (v) $\lgc{P}_i = \lgc{Q}_i = \lgi$.
Here, we denote the Pauli charge label corresponding to $\lgc{P}_i$ and $\lgc{Q}_i$ as $\pbs_i$ and $\qbs_i$, respectively (e.g., $\pbs_i = \xbs$ for $\lgc{P}_i = \lgx$). 
Let us examine these cases one by one, except the trivial case (v) where $\lqdata{i}$ and the ancillary region are disconnected (namely, an opaque domain wall is placed).

\subsection{$\comm{\lgc{P}_i}{\lgc{Q}_i} \neq 0$ ($\cbs_i \neq \cbs_{i+1}$)}


In this case, we place a transparent domain wall such that a boson crossing it from $\lqdata{i}$ to the ancillary region undergoes a charge permutation $\sigma(\cbs\wbs) = \cbs'\wbs'$, where
\begin{align*}
    \cbs' = \begin{cases}
        \cbs_i & \text{for } \cbs = \gbs, \\
        \cbs_{i+1} & \text{for } \cbs = \bbs, \\
        \rbs & \text{otherwise},
    \end{cases}\qquad
    \wbs' = \begin{cases}
        \xbs & \text{for } \wbs = \pbs_i, \\
        \zbs & \text{for } \wbs = \qbs_i, \\
        \ybs & \text{otherwise}.
    \end{cases}
\end{align*}
For instance, when $\qty(\lgc{P}_i, \lgc{Q}_i) = \qty(\lgy, \lgx)$ and $\qty(\cbs_i, \cbs_{i+1}) = \qty(\bbs, \gbs)$, $\lgc{P}_i=\lgy$ can be represented as a Pauli-$Y$ string-net operator connecting the three boundaries of the patch.
This operator can be transformed by multiplying stabilizers into a Pauli-$X$ string-net operator $R_X^{(i)}$ belonging to the ancillary region, as $\sigma(\ry) = \rx$, $\sigma(\gy)=\bx$, and $\sigma(\by)=\gx$.
Similarly, we can find a Pauli-$Z$ string-net operator $R_Z^{(i)} \sim \lgc{Q}_i= \lgx$, where `$U_1 \sim U_2$' for operators $U_1$ and $U_2$ denotes that $U_1$ and $U_2$ are equivalent under stabilizer multiplication.

\subsection{$\lgc{P}_i = \lgc{Q}_i \neq \lgi$ ($\cbs_i = \cbs_{i+1}$)}


In this case, we place a semi-transparent domain wall between $\lqdata{i}$ and the ancillary region.
This domain wall condenses $\rbs\pbs_i$ and $\cbs_i\ybs$ on the sides facing $\lqdata{i}$ and the ancillary region, respectively, and its charge-changing rule is \ttt{em}-exchanging.
For example, when $\lgc{P}_i = \lgc{Q}_i = \lgz$ and $\cbs_i = \cbs_{i+1} = \bbs$, the domain wall is characterized by the boson tables as
\begin{align*}
    \bosontableST[\lqdata{i}\quad\text{Anc.}]{r}{z}{b}{y}{2}{1},
\end{align*}
which is a notation defined in Sec.~\ref{subsubsec:STDW}.
Here, \gz and \bz ($\colorone{\magnetic}$) in $\lqdata{i}$ are deconfined and can be mapped to any of \bx and \bz ($\colorone{\electric}$) in the ancillary region, while \rz is condensed in $\lqdata{i}$.
Therefore, if we move a Pauli-$Z$ string-net operator that represents $\lgc{P}_i = \lgc{Q}_i = \lgz$ from the triangular patch to the ancillary region, we can obtain any of \bx- and \bz-string operators ($R_X^{(i)}$, $R_Z^{(i)}$) connecting the adjacent blue boundaries.
(Note that $R_X^{(i)} \sim R_Z^{(i)}$.)
Additionally, we can ensure that $\lgx$ and $\lgy$ cannot be moved outside the patch in a similar way since \gx, \bx, \gy, and \by are confined in the patch.

\subsection{$\lgc{P}_i \neq \lgc{Q}_i = \lgi$ or $\lgc{Q}_i \neq \lgc{P}_i = \lgi$ ($\cbs_i = \cbs_{i+1}$)}


If $\lgc{Q}_i$ is identity but $\lgc{P}_i$ is not, we place an \ttt{em}-exchanging semi-transparent domain wall that condenses $\rbs\pbs_i$ and $\cbs_i\zbs$ on its sides facing $\lqdata{i}$ and the ancillary region, respectively.
For example, when $\qty(\lgc{P}_i, \lgc{Q}_i) = \qty(\lgy, \lgi)$ and $\cbs_i = \cbs_{i+1} = \gbs$, the domain wall is characterized by
\begin{align*}
    \bosontableST[\lqdata{i}\quad\text{Anc.}]{r}{y}{g}{z}{2}{1}.
\end{align*}
Here, \gy and \by ($\colorone{\magnetic}$) are deconfined from $\lqdata{i}$ and can be mapped to any of \gx and \gy ($\colorone{\electric}$), while \ry is condensed on the $\lqdata{i}$ side of the domain wall.
Hence, we can find a \gx-string operator $R_X^{(i)}$ connecting the adjacent green boundaries that is equivalent to $\lgy$ of $\lqdata{i}$.
On the other hand, any \gz-string operator $R_Z^{(i)}$ connecting the adjacent green boundaries is trivial, as \gz is condensed on the ancillary region side of the domain wall.
Similar to the previous case, $\lgx$ and $\lgz$ are confined inside the patch.
For the opposite case where $\lgc{P}_i$ is identity but $\lgc{Q}_i$ is not, we can do similarly with a domain wall that condenses $\rbs\qbs_i$ and $\cbs_i\xbs$ on its two sides.

\section{Fault tolerance of the MSD layout \label{app:layout_fault_tolerance}}

In this appendix, we prove that our MSD layouts in Figs.~\ref{fig:msd_layout}(a) and~(b) are distance-preserving if and only if the conditions in Condition~\ref{cond:dist_rotation_grouping} are met.

\subsection{Notations}

\begin{figure}[!t]
    \centering
    \includegraphics[width=\columnwidth]{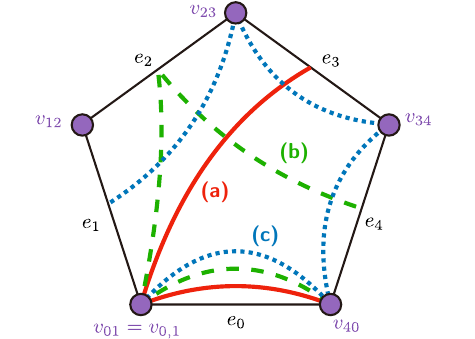}
    \caption{
        Schematics of the ancillary region and \subfig{a}-\subfig{c} three examples of cyclic piecewise green string (CPGS) operators.
    }
    \label{fig:dist_block_layout_conditions}
\end{figure}

We first define notations.
We schematize the ancillary region as a pentagon with edges labeled by $e_0$, $e_1$, $e_2$, $e_3$, and $e_4$ (in clockwise order), each of which is adjacent to one of the five patches (\patchout, \patchab, \patchcd, \patchalpha, and \patchbeta), as shown in Fig.~\ref{fig:dist_block_layout_conditions}.
We denote the set of these five edges as $E \coloneqq \qty{e_0, e_1, e_2, e_3, e_4}$.
We extend the series cyclically as $e_{i + 5l} = e_i$ for every integer $l$.
The patch adjacent to $e_i$ and the corresponding domain wall is referred to as \patch{i} and \dw{i}, respectively.
The vertex between $e_i$ and $e_j$ (where $j = i + 5l \pm 1$ for an integer $l$) is denoted as $v_{i,j}$ and a set of vertices $V \coloneqq \qty{v_{0,1}, v_{1,2}, v_{2,3}, v_{3,4}, v_{4,0}}$ is defined.
(We also sometimes omit commas in $v_{i,j}$, e.g., $v_{01} \coloneqq v_{0,1}$.)
Every vertex or edge represents a specific 1D subregion in the ancillary region:
An edge $e_i$ represents the ancillary region side of \dw{i} and a vertex $v_{i,j}$ represents the green boundary connecting \dw{i} and \dw{j}.
(Hereafter we identify each edge or vertex with the subregion represented by it.)
We also define the $i$-th code distance
\begin{align*}
    d_i \coloneqq \begin{cases}
        \dout & \text{if \patch{i} is \patchout}, \\ 
        \dz & \text{if \patch{i} is \patchab or \patchcd}, \\
        \dm & \text{otherwise}. \\
    \end{cases}
\end{align*}


We now consider a specific pair of Pauli operators $\qty(\lgc{P}, \lgc{Q})$ to measure.
Since every vertex $v_{i,i+1}$ is a green boundary, the set of bosons condensed by the vertex is
\begin{align*}
    \condset{v_{i,i+1}} \coloneqq \qty{\gx, \gy, \gz}.
\end{align*}
On the other hand, each edge condenses all or some of the green bosons depending on the Pauli operator measured on the corresponding patch.
Namely, the set of bosons condensed by $e_i$ is given as
\begin{align}
    \condset{e_i}= \condsete{i} \coloneqq \begin{cases}
        \qty{\gx, \gy, \gz} & \text{if}~ \lgc{P}_i = \lgc{Q}_i = \identity, \\ 
        \qty{\gx} & \text{if}~ \lgc{Q}_i \neq \lgc{P}_i = \identity, \\ 
        \qty{\gz} & \text{if}~ \lgc{P}_i \neq \lgc{Q}_i = \identity, \\ 
        \qty{\gy} & \text{if}~ \lgc{P}_i = \lgc{Q}_i \neq \identity, \\ 
        \emptyset & \text{otherwise},
    \end{cases}
    \label{eq:condset}
\end{align}
where $\lgc{P}_i,\lgc{Q}_i \in \qty{\lgi, \lgz}^{\otimes m_i}$ are respectively the restrictions of $\lgc{P}$ and $\lgc{Q}$ on the $m_i \in \qty{1, 2}$ logical qubit(s) encoded in \patch{i}; see Fig.~\ref{fig:lattice_surgery_full}(b) for an example.
For instance, if $\text{\patch{0}}=\text{\patchout}$, $\text{\patch{2}}=\text{\patchab}$, and $\lgc{P} = \lgztp{OAC\talpha}$, we get $\lgc{P}_0 = \lgc{P}_\mr{out} = \lgz$ and $\lgc{P}_2 = \lgc{P}_\mr{AB} = \lgz \otimes \lgi$.
Note that, if \patch{i} is \patchalpha (\patchbeta), $\condsete{i} = \qty{\gz}$ ($\qty{\gx}$).
For an index set $I = \qty{i_1, i_2, \cdots, i_l} \subset \mathbb{Z}$, we also define
\begin{align*}
    \condsete{I} = \condsete{i_1, i_2, \cdots, i_l} &\coloneqq \begin{cases}
        \qty{\gx, \gy, \gz} & \text{if } I = \emptyset, \\ 
        \bigcap_{i \in I} \condsete{i} & \text{otherwise}.
    \end{cases}
\end{align*}
We define a cyclic order of the set $V \cup E$ as 
\begin{multline*}
    v_{4,0} \prec e_0 \prec v_{0,1} \prec e_1 \prec v_{1,2} \prec e_2 \\
    \prec v_{2,3} \prec e_3 \prec v_{3,4} \prec e_4 \prec v_{4,0}.
\end{multline*}
Unlike a normal linear ordering, a cyclic order is only well-defined between three or more elements, such as $v_{4,0} \prec e_2 \prec v_{3,4}$.
For $o, o' \in V \cup E$, we define
\begin{align*}
    \indexbtn{o, o'} \coloneqq \qty{i \in \qty{0, 1, 2, 3, 4} \mid o \prec e_i \prec o'}.
\end{align*}
We say that $o$ and $o'$ are topologically connected with respect to a \gw boson (denoted by $o \sim_\wbs o'$) if and only if
\begin{align*}
    \gw \in \condset{o} \cap \condset{o'} \cap \qty[\condsete{\indexbtn{o, o'}} \cup \condsete{\indexbtn{o', o}}],
\end{align*}
meaning that an endpoint of a \gw-string operator at $o$ can be freely moved to $o'$.


For two objects $o_1, o_2 \in V \cup E$ and a Pauli label $\wbs \in \qty{\xbs, \ybs, \zbs}$, we define a set $\stropset{\wbs}{o_1}{o_2}$ to contain all \gw-string operators terminating \emph{only at} $o_1$ and $o_2$.
Here, we use the term `only at' to clarify that $\stropset{\wbs}{o_1}{o_2}$ does not include `piecewise' string operators such as $R_\mr{ac} R_\mr{cb}$ for $R_\mr{ac} \in \stropset{\wbs}{o_1}{o_3}$ and $R_\mr{cb} \in \stropset{\wbs}{o_3}{o_2}$ whose endpoints on $o_3$ are not the same.
Note that $\stropset{\wbs}{o_1}{o_2}$ is non-empty if and only if $\gbs\wbs \in \condset{o_1} \cap \condset{o_2}$.
We define
\begin{align*}
    \distpauli{\wbs}{o_1}{o_2} &\coloneqq \begin{dcases}
        \min_{R \in \stropset{\wbs}{o_1}{o_2}} \wt{R} & \text{if } \stropset{\wbs}{o_1}{o_2} \neq \emptyset, \\ 
        \infty & \text{otherwise,}
    \end{dcases} \\ 
    \dist{o_1,o_2} &\coloneqq \min_{\wbs \in \qty{\xbs, \ybs, \zbs}} \distpauli{\wbs}{o_1}{o_2},
\end{align*}
where $\wt{\cdot}$ denotes the weight of the given operator.
Note that $\distpauli{\wbs}{o_1}{o_2}$ is either $\dist{o_1,o_2}$ or $\infty$ for each \wbs.
We also define a `primed' distance as
\begin{align}
    \distprime{o_1, o_2} \coloneqq \min_{\substack{o'_1 \in \qty{o_1} \cup \ep{o_1} \\ o'_2 \in \qty{o_2} \cup \ep{o_2}}} \dist{o'_1, o'_2},
    \label{eq:primed_distance}
\end{align}
where $\ep{o}$ is the set of the endpoints of $o$, which is empty if $o$ is a vertex and $\qty{v_{i,i\pm1}}$ if $o$ is an edge $e_i$.
We denote a chain sum of primed or unprimed distances simply as
\begin{multline*}
    \Delta^{(\prime)}\qty(o_1, o_2, o_3, o_4, \cdots, o_l) \coloneqq \Delta^{(\prime)}\qty(o_1,o_2) + \Delta^{(\prime)}\qty(o_2,o_3) \\
    + \Delta^{(\prime)}\qty(o_3,o_4) + \cdots + \Delta^{(\prime)}\qty(o_{l-1},o_l)
\end{multline*}


We recall that some of the notations defined above, such as $\condset{\cdot}$, $\stropset{\wbs}{\cdot}{\cdot}$, and $\Delta^{(\prime)}(\cdot, \cdot)$, depend on the pair of Pauli operators $(\lgc{P}, \lgc{Q})$ to measure.
Therefore, when using this notation, we always suppose a specific distillation stage explicitly or implicitly.

\subsection{Condition for a layout to be distance-preserving}

The following is a sufficient condition for the layout to be distance-preserving:
\begin{condition}[\textbf{Fault tolerance of the ancillary region}]
    For every distillation stage and each $\wbs \in \qty{\xbs, \ybs, \zbs}$, if $\gw \notin \condsete{i} \cup \condsete{i-2,i-1,i+1,i+2}$ (that is, $v_{i,i-1}$ and $v_{i,i+1}$ are not topologically connected with respect to \gw), any operator equivalent to an operator in $\stropset{\wbs}{v_{i,i-1}}{v_{i,i+1}}$ should have a weight not less than $d_i$ to prevent any possible logical errors on \patch{i}.
\end{condition}
To exhaustively address every operator equivalent to an operator in $\stropset{\wbs}{v_{i,i-1}}{v_{i,i+1}}$, we consider \emph{cyclic piecewise green string (CPGS) operators}, which are trivial operators that can be written as products of nontrivial green string operators with the same Pauli label.
More formally, an $L$-segmented CPGS operator (for $L \geq 2$) has a form of
\begin{align}
    R_\mr{cyc} = \prod_{j=0}^{L-1} \strop{\wbs}{o_j}{\tilde{o}_j}, \label{eq:Rcyc_def}
\end{align}
where $\wbs \in \qty{\xbs, \ybs, \zbs}$, $o_j,\tilde{o}_j \in V \cup E$, $\strop{\wbs}{o_j}{\tilde{o}_j} \in \stropset{\wbs}{o_j}{\tilde{o}_j}$, and $o_j \not\sim_\wbs \tilde{o}_j \sim_\wbs o_{j+1}$ for each $j$ (denoting $o_L \coloneqq o_0$).
Objects $o_1, \tilde{o}_1, \cdots, o_{L-1}, \tilde{o}_{L-1}$ are called the endpoints of the CPGS operator.
For example, when only $e_1$ among the five edges condenses $\gbs\zbs$, $\strop{\zbs}{v_{4,0}}{v_{0,1}} \strop{\zbs}{v_{1,2}}{v_{2,3}} \strop{\zbs}{v_{2,3}}{v_{4,0}}$ is a 3-segmented CPGS operator.
It is worth noting that $o_j$ and $\tilde{o}_j$ cannot be adjacent edges (such as $e_0$ and $e_1$) or an edge and one of its endpoints (such as $e_0$ and $v_{0,1}$).
Without loss of generality, we can regard $\qty[o_0, \tilde{o}_0, o_1, \tilde{o}_1, \cdots, o_{L-1}, \tilde{o}_{L-1}, o_0]$ to be placed clockwise in order (namely, $o_0 \prec \tilde{o}_0 \prec o_1 \prec \cdots \prec \tilde{o}_{L-1} \prec o_0$); see Appendix~\ref{subapp:justification_clockwise_CPGS} below for its justification.

For $R_\mr{cyc}$ in Eq.~\eqref{eq:Rcyc_def} to exist, we need $\stropset{\wbs}{o_j}{\tilde{o}_j}$ to be non-empty and $o_j \not\sim_\wbs \tilde{o}_j \sim_\wbs o_{j+1}$ for each $j$.
Hence, a necessary and sufficient condition for this is
\begin{widetext}
\begin{align}
    \gw \in &\bigcap_{j=0}^{L-1} \big[ \overbrace{\condset{o_j} \cap \condset{\tilde{o}_j}}^{\text{(a)}} \cap \overbrace{\qty(\condsete{\indexbtn{\tilde{o}_j, o_{j+1}}} \cup \condsete{\indexbtn{o_{j+1}, \tilde{o}_j}})}^{\text{(b)}} \setminus \overbrace{\qty( \condsete{\indexbtn{o_j, \tilde{o}_j}} \cup \condsete{\indexbtn{\tilde{o}_j, o_j}})}^{\text{(c)}} \big] \nonumber \\
    =& \bigcap_{j=0}^{L-1} \qty[\condset{o_j} \cap \condset{\tilde{o}_j} \cap \condsete{\indexbtn{\tilde{o}_j, o_{j+1}}} \setminus \condsete{\indexbtn{o_j, \tilde{o}_j}}].
\label{eq:Rcyc_exist_cond}
\end{align}
\end{widetext}
Here we require part (a) for $\stropset{\wbs}{o_j}{\tilde{o}_j}$ to be non-empty, part (b) for $\tilde{o}_j \sim_\wbs o_{j+1}$, and part (c) for $o_j \not\sim_\wbs \tilde{o}_j$.
The equality holds since $\indexbtn{o_j, \tilde{o}_j} \subseteq \indexbtn{o_{j+1}, \tilde{o}_{j}}$ thus $\condsete{\indexbtn{o_{j+1}, \tilde{o}_{j}}} \setminus \condsete{\indexbtn{o_j, \tilde{o}_j}} = \emptyset$, and also, $\indexbtn{o_{j+1}, \tilde{o}_{j+1}} \subseteq \indexbtn{\tilde{o}_j, o_j}$ thus $\condsete{\indexbtn{\tilde{o}_j, o_j}} \subseteq \condsete{\indexbtn{o_{j+1}, \tilde{o}_{j+1}}}$.

For a given list of endpoints $\qty[o_1, \tilde{o}_1, \cdots, o_{L-1}, \tilde{o}_{L-1}]$, the corresponding CPGS operator exists for at least one Pauli label if and only if the right-hand side of Eq.~\eqref{eq:Rcyc_exist_cond} is not empty, which is equivalent to
\begin{align}
    \bigcap_{j=0}^{L-1} \qty[\condset{o_j} \cap \condset{\tilde{o}_j} \cap \condsete{\indexbtn{\tilde{o}_j, o_{j+1}}}] \nsubseteq \bigcup_{j=0}^{L-1} \condsete{\indexbtn{o_j, \tilde{o}_j}}.
    \label{eq:CPGS_exist_cond}
\end{align}
Provided that a CPGS operator $R_\mr{cyc}$ with $o_0 = v_{i-1,i}$ and $\tilde{o}_0 = v_{i,i+1}$ exists, $\strop{\wbs}{v_{i-1, i}}{v_{i,i+1}}$ is equivalent to $R_\mr{cyc} \strop{\wbs}{v_{i-1,i}}{v_{i,i+1}} = \prod_{j=1}^{L-1} \strop{\wbs}{o_j}{\tilde{o}_j}$, thus we need to impose a requirement
\begin{align}
    \sum_{j=1}^{L-1} \dist{o_j,\tilde{o}_j}  \geq d_i.
    \label{eq:distance_condition}
\end{align}
In addition, we can equivalently use the primed distance defined in Eq.~\eqref{eq:primed_distance} as
\begin{align}
    \sum_{j=1}^{L-1} \distprime{o_j,\tilde{o}_j}  \geq d_i.
    \label{eq:primed_distance_condition}
\end{align}
since, if an edge $e_j$ is an endpoint of $R_\mr{cyc}$, then $e_j \sim_\wbs v_{j,j\pm1}$, thus Eq.~\eqref{eq:distance_condition} also should hold if $e_j$ is replaced with $v_{j,j\pm1}$.

In summary, for given pairs of Pauli operators $\qty{\qty(\PstageLS{k}, \QstageLS{k})}_{k=3}^8$ measured in the last six distillation stages, the following is a sufficient condition for the layout to be distance-preserving:
\begin{condition}
    For each distillation stage that measures $\PstageLS{k}$ and $\QstageLS{k}$ (where $k \geq 3$), each $i \in \qty{0, 1, 2, 3, 4}$, and each possible odd-length sequence of vertices and edges $\left[o_0, \tilde{o}_0, o_1, \tilde{o}_1, \cdots, o_{L-1}, \tilde{o}_{L-1}, o_L \right]$ that are arranged clockwise in the pentagon and fulfills $o_0=o_L=v_{i-1,i}$ and $\tilde{o}_0=v_{i,i+1}$, if Eq.~\eqref{eq:CPGS_exist_cond} holds with respect to $\qty(\condsete{0}, \condsete{1}, \condsete{2}, \condsete{3}, \condsete{4})$ determined by $\qty(\PstageLS{k}, \QstageLS{k})$, then Eq.~\eqref{eq:distance_condition} or equivalently Eq.~\eqref{eq:primed_distance_condition} should be satisfied.
    \label{cond:dist_layout_condition}
\end{condition}

In Fig.~\ref{fig:dist_block_layout_conditions}, we present three examples of potential CPGS operators for $i=0$, which respectively give the following requirements:
\begin{align*}
        \text{(a):}\quad& \condsete{34} \subseteq \condsete{0} \cup \condsete{12} \Rightarrow \distprime{v_{01},e_3} \geq d_0, \\ 
        \text{(b):}\quad&\condsete{24} \subseteq \condsete{0} \cup \condsete{1} \cup \condsete{3} \Rightarrow \distprime{v_{01}, e_2, e_4} \geq d_0, \\ 
        \text{(c):}\quad&\condsete{1} \subseteq \condsete{0} \cup \condsete{2} \cup \condsete{3} \cup \condsete{4} \Rightarrow \distprime{e_1,v_{23},v_{34},v_{40}} \geq d_0.
\end{align*}

\begin{widetext}

\subsection{Verifying that the layouts in Fig.~\ref{fig:msd_layout} are distance-preserving}

We now verify that the layout of Fig.~\ref{fig:msd_layout}(a) or~(b) is distance-preserving if Condition~\ref{cond:dist_rotation_grouping} is met. 
We set \patchout, \patchalpha, \patchab, \patchcd, and \patchbeta respectively as \patch{0} to \patch{4}.
First thing to note is the primed distances between vertices and edges in the layout:
\begin{itemize}
    \item $\distprime{o,o'} = 0$ if $o$ and $o'$ are either neighboring edges or an edge and its endpoint (trivial case).
    \item $\distprime{o,o'} = \dm + 1$ if $o \in \qty{v_{j-1, j}, e_{j-1}}$ and $o' \in \qty{v_{j,j+1}, e_{j+1}}$ for $j \in \qty{1, 4}$.
    \item $\distprime{o,o'} = \dz$ if $o \in \qty{v_{j-1, j}, e_{j-1}}$ and $o' \in \qty{v_{j,j+1}, e_{j+1}}$ for $j \in \qty{2, 3}$.
    \item $\distprime{v_{40}, e_2} = \distprime{v_{40}, v_{23}} = \distprime{e_0, v_{23}} = \max\qty(\dm+1, \dz)$.
    \item $\distprime{o,o'} \geq \dout + 1$ otherwise.
\end{itemize}
\begin{subequations}
We suppose that the layout satisfies
\begin{align}
    &\dm, \dz < \dout, \label{eq:dist_block_distance_cond_1} \\ 
    &\dout - \dz \leq \dm \leq 2\dz, \label{eq:dist_block_distance_cond_2}
\end{align}
\label{eq:dist_block_distance_conds}%
\end{subequations}
where the second one corresponds to item (i) of Condition~\ref{cond:dist_rotation_grouping}, which will be justified later.
In addition, we recall that
\begin{align}
    \condsete{1} = \qty{\gz}, \quad \condsete{4} = \qty{\gx}, \label{eq:aux_qubit_condsets}
\end{align}
and, for each $k \geq 3$,
\begin{align*}
    \Pstage{k}, \Qstage{k} \in \mathcal{Z} \coloneqq \{ &\lgztp{OAB}, \lgztp{OAC}, \lgztp{OAD}, \lgztp{OBC}, \lgztp{OBD}, \lgztp{OCD}, \\
    &\lgztp{ABC}, \lgztp{ABD}, \lgztp{ACD}, \lgztp{BCD}, \lgztp{OABCD} \}.
\end{align*}

Although Condition~\ref{cond:dist_layout_condition} gives a vast amount of requirements, most of them are trivially true.
We list the remaining (potentially) nontrivial requirements as follows:
\begin{subequations}
\begin{align}
    i=0:~ &\condsete{234} \nsubseteq \condsete{0} \cup \condsete{1} \Rightarrow \distprime{v_{01}, e_2} = \dm + 1 \geq \dout, \label{eq:dist_layout_req_i0_234} \\ 
    &\condsete{134} \nsubseteq \condsete{0} \cup \condsete{2} \Rightarrow \distprime{e_1, e_3} = \dz \geq \dout, \label{eq:dist_layout_req_i0_134} \\
    &\condsete{12} \nsubseteq \condsete{0} \cup \condsete{34} \Rightarrow \distprime{e_2, v_{40}} = \max\qty(\dm+1, \dz) \geq \dout, \label{eq:dist_layout_req_i0_12} \\
    &\condsete{123} \nsubseteq \condsete{0} \cup \condsete{4} \Rightarrow \distprime{e_3, v_{40}} = \dm + 1 \geq \dout, \label{eq:dist_layout_req_i0_123} \\
    &\condsete{23} \nsubseteq \condsete{0} \cup \condsete{1} \cup \condsete{4} \Rightarrow \distprime{v_{01}, e_2} + \distprime{e_3, v_{40}} = 2(\dm + 1) \geq \dout, \label{eq:dist_layout_req_i0_23} \\ 
    &\condsete{14} \nsubseteq \condsete{0} \cup \condsete{2} \cup \condsete{3} \Rightarrow \distprime{e_1, v_{23}, e_4} = 2\dz \geq \dout, \label{eq:dist_layout_req_i0_14} \\ 
    i=1:~
    &\condsete{340} \nsubseteq \condsete{1} \cup \condsete{2} \Rightarrow \distprime{v_{12}, e_3} = \dz \geq \dm, \label{eq:dist_layout_req_i1_340} \\ 
    &\condsete{240} \nsubseteq \condsete{1} \cup \condsete{3} \Rightarrow \distprime{e_2, e_4} = \dz \geq \dm, \label{eq:dist_layout_req_i1_240} \\ 
    i=2:~ &\condsete{301} \nsubseteq \condsete{2} \cup \condsete{4} \Rightarrow \distprime{e_3, e_0} = \dm + 1 \geq \dz, \label{eq:dist_layout_req_i2_301} \\ 
    &\condsete{30} \nsubseteq \condsete{2} \cup \condsete{4} \cup \condsete{1} \Rightarrow \distprime{e_3, e_0, v_{12}} = 2(\dm + 1) \geq \dz, \label{eq:dist_layout_req_i2_30} \\ 
    i=3:~ &\condsete{012} \nsubseteq \condsete{3} \cup \condsete{4} \Rightarrow \distprime{v_{34}, e_0} = \dm + 1 \geq \dz, \label{eq:dist_layout_req_i3_012} \\
    &\condsete{402} \nsubseteq \condsete{3} \cup \condsete{1} \Rightarrow \distprime{e_0, e_2} = \dm + 1 \geq \dz, \label{eq:dist_layout_req_i3_402} \\
    &\condsete{02} \nsubseteq \condsete{3} \cup \condsete{4} \cup \condsete{1} \Rightarrow \distprime{v_{34}, e_0, e_2} = 2(\dm + 1) \geq \dz, \label{eq:dist_layout_req_i3_02} \\ 
    i=4:~ &\condsete{013} \nsubseteq \condsete{4} \cup \condsete{2} \Rightarrow \distprime{e_1, e_3} = \dz \geq \dm, \label{eq:dist_layout_req_i4_013} \\ 
    &\condsete{012} \nsubseteq \condsete{4} \cup \condsete{3} \Rightarrow \distprime{e_2, v_{34}} = \dz \geq \dm. \label{eq:dist_layout_req_i4_012}
\end{align}
\label{eq:dist_layout_requirements}
\end{subequations}

We can derive `(ii) $\lgztp{OCD} \in \qty{\Pstage{k}}_{k=1}^6$' in Condition~\ref{cond:dist_rotation_grouping} from Eq.~\eqref{eq:dist_layout_req_i0_12} as follows:
Since its consequent is false, its presequent should be false for every distillation stage.
From Eqs.~\eqref{eq:dist_operators_to_measure}, \eqref{eq:condset}, and~\eqref{eq:aux_qubit_condsets}, we obtain
\begin{align*}
    \condsete{12} \subseteq \condsete{0} \cup \condsete{34} &\equiv  \gz \notin \condsete{2}\setminus\condsete{0} \equiv \neg\qty(\lgc{Q}_\mr{out}^{(k)} \neq \lgc{Q}_\mr{AB}^{(k)} = \identity),
\end{align*}
where `$\neg$' is logical negation.
The last proposition always holds except the case where $\Qstage{k} = \lgztp{OCD}$.

Likewise, we can derive `(iii) if $\dz > 2\dm + 2$, $\rot{\qty(\lgz_\qty{s})}{\pi/8}$ and $\rot{\qty(\lgz_\qty{t})}{\pi/8}$ are not paired where $(s,t)$ is any of (OAC, OBC), (OAD, OBD), (OAC, OAD), (OBC, OBD), (OCD, OABCD), (OAB, OABCD)' in Condition~\ref{cond:dist_rotation_grouping} from Eqs.~\eqref{eq:dist_layout_req_i2_30} and~\eqref{eq:dist_layout_req_i3_02}:
If $\dz > 2\dm + 2$, the presequent of Eq.~\eqref{eq:dist_layout_req_i2_30} should be false.
Since
\begin{align*}
    \condsete{03} \subseteq \condsete{2} \cup \qty{\gx, \gz} \equiv \gy \notin \condsete{03}\setminus\condsete{2} \equiv \neg \qty(\Pstage{k}_\mr{out} = \Qstage{k}_\mr{out} \land \Pstage{k}_\mr{CD} = \Qstage{k}_\mr{CD} \land \Pstage{k}_\mr{AB} \neq \Qstage{k}_\mr{AB}),
\end{align*}
where `$\land$' is logical AND, we need to avoid measuring $\lgz_\qty{s}$ and $\lgz_\qty{t}$ simultaneously for each $(s, t) \in \qty{(\mr{OAC}, \mr{OBC}), (\mr{OAD}, \mr{OBD}), (\mr{OCD}, \mr{OABCD})}$.
Analogously, the negation of Eq.~\eqref{eq:dist_layout_req_i3_02} implies
\begin{align*}
    \neg \qty(\Pstage{k}_\mr{out} = \Qstage{k}_\mr{out} \land \Pstage{k}_\mr{AB} = \Qstage{k}_\mr{AB} \land \Pstage{k}_\mr{CD} \neq \Qstage{k}_\mr{CD}),
\end{align*}
thus $(s, t)=(\mr{OAC}, \mr{OAD}), (\mr{OBC}, \mr{OBD}), (\mr{OAB}, \mr{OABCD})$ are additionally prohibited.

All the other requirements in Eq.~\eqref{eq:dist_layout_requirements} except Eqs.~\eqref{eq:dist_layout_req_i0_12}, \eqref{eq:dist_layout_req_i2_30}, and~\eqref{eq:dist_layout_req_i3_02} are always fulfilled regardless of the way to pair the $\pi/8$-rotations.
We prove that each of these requirements has a false presequent as follows, where `Preseq.' stands for the presequent:
\begin{itemize}
    \item Eq.~\eqref{eq:dist_layout_req_i0_234}: $\neg\text{(Preseq.)} \equiv \gx \notin \condsete{23}\setminus\condsete{0} \equiv \neg\qty(\lgc{P}_\mr{out} \neq \lgc{P}_\mr{AB} = \lgc{P}_\mr{CD} = \identity)$, which holds for every $\lgc{P} \in \mathcal{Z}$.
    \item Eq.~\eqref{eq:dist_layout_req_i0_134}: $\condsete{134} = \emptyset$.
    \item Eq.~\eqref{eq:dist_layout_req_i0_123}: Counterpart of Eq.~\eqref{eq:dist_layout_req_i0_234} where $\gx$ and $\lgc{P}$ are respectively replaced with $\gz$ and $\lgc{Q}$.
    \item Eq.~\eqref{eq:dist_layout_req_i0_23}: $\neg\text{(Preseq.)} \equiv \gy \notin \condsete{23} \setminus \condsete{0} \equiv \neg\qty( \lgc{P}_\mr{ABCD} = \lgc{Q}_\mr{ABCD} \land \lgc{P}_\mr{out} \neq \lgc{Q}_\mr{out})$, which holds for every pair $(\lgc{P}, \lgc{Q}) \in \mathcal{Z}^{\times 2}$ that satisfies $\lgc{P} \neq \lgc{Q}$.
    \item Eq.~\eqref{eq:dist_layout_req_i0_14}: $\condsete{14} = \emptyset$.
    \item Eq.~\eqref{eq:dist_layout_req_i1_340}: $\neg\text{(Preseq.)} \equiv \gx \notin \condsete{03}\setminus\condsete{2} \equiv \neg\qty(\lgc{P}_\mr{AB} \neq \lgc{P}_\mr{out} = \lgc{P}_\mr{CD} = \identity)$, which holds for every $\lgc{P} \in \mathcal{Z}$.
    \item Eq.~\eqref{eq:dist_layout_req_i1_240}: Counterpart of Eq.~\eqref{eq:dist_layout_req_i1_340} where CD is replaced with AB.
    \item Eq.~\eqref{eq:dist_layout_req_i2_301}: Counterpart of Eq.~\eqref{eq:dist_layout_req_i1_340} where $\gx$ and $\lgc{P}$ are respectively replaced with $\gz$ and $\lgc{Q}$.
    \item Eq.~\eqref{eq:dist_layout_req_i3_012}: Counterpart of Eq.~\eqref{eq:dist_layout_req_i2_301} where CD is replaced with AB.
    \item Eq.~\eqref{eq:dist_layout_req_i3_402}: Same as Eq.~\eqref{eq:dist_layout_req_i1_240}.
    \item Eq.~\eqref{eq:dist_layout_req_i4_013}: Same as Eq.~\eqref{eq:dist_layout_req_i2_301}.
    \item Eq.~\eqref{eq:dist_layout_req_i4_012}: Same as Eq.~\eqref{eq:dist_layout_req_i3_012}.
\end{itemize}

We lastly argue why the requirement `(i) $\dout - \dz \leq \dm < 2\dz$' is needed as follows:
If $\dout - \dz > \dm$, we additionally have a nontrivial requirement for $i=0$:
\begin{align*}
    \condsete{24} \nsubseteq \condsete{0} \cup \condsete{1} \cup \condsete{3} \Rightarrow \distprime{v_{01}, e_2, e_4} = \dm + \dz + 1 \geq \dout.
\end{align*}
The negation of its presequent implicates $\neg\qty(\Pstage{k}_\mr{out},\Pstage{k}_\mr{CD} \neq \identity = \Pstage{k}_\mr{AB})$, which prohibits the case of $\Pstage{k} = \lgztp{OCD}$.
This contradicts Condition~\ref{cond:dist_rotation_grouping}.
If $\dm > 2\dz$, we have two additional nontrivial requirements (for $i=1,4$)
\begin{align*}
    \condsete{40} \nsubseteq \condsete{1} \cup \condsete{2} \cup \condsete{3} \Rightarrow \distprime{v_{01}, v_{23}, e_4} = 2\dz \geq \dm,  \\ 
    \condsete{01} \nsubseteq \condsete{4} \cup \condsete{2} \cup \condsete{3} \Rightarrow \distprime{e_1, v_{23}, v_{34}} = 2\dz \geq \dm.
\end{align*}
However, these two contradict each other.
They respectively give $\neg \qty(\Pstage{k}_\mr{AB}, \Pstage{k}_\mr{CD} \neq \identity = \Pstage{k}_\mr{out})$ and $\neg \qty(\Qstage{k}_\mr{AB}, \Qstage{k}_\mr{CD} \neq \identity = \Qstage{k}_\mr{out})$, thus operators such as $\lgztp{ABC}$ cannot be assigned to either $\Pstage{k}$ or $\Qstage{k}$.
\end{widetext}

\subsection{Justification of considering only clockwise CPGS operators \label{subapp:justification_clockwise_CPGS}}

We now justify that, in the above discussion, it is sufficient to consider only CPGS operators with clockwise orders by proving the following claim:
\begin{claim}
    For a given CPGS operator $R_\mr{cyc}$ in Eq.~\eqref{eq:Rcyc_def} defined by $\wbs \in \qty{\xbs, \ybs, \zbs}$ and $\qty{o_j, \tilde{o}_j}_{j=0}^{L-1}$ where $o_0$ and $\tilde{o}_0$ are neighboring vertices, there exists a CPGS operator $R'_\mr{cyc}$ with endpoints $\qty{o'_j, \tilde{o}'_j}_{j=0}^{L-1}$ that is arranged clockwise such that $\qty{o'_0, \tilde{o}'_0} = \qty{o_0, \tilde{o}_0}$ and $\wt{R'_\mr{cyc}} \leq \wt{R_\mr{cyc}}$.
    \label{claim:justification_clockwise_only}
\end{claim}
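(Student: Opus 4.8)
The plan is to prove the claim by an \emph{uncrossing} procedure that drives an arbitrary CPGS operator to a clockwise one without increasing its weight. First I would model the ancillary region as a topological disk and place every endpoint $o_j,\tilde{o}_j$ on its boundary circle at the pentagon position it denotes, choosing each genuine factor $\strop{\wbs}{o_j}{\tilde{o}_j}$ to be a minimum-weight green $\wbs$-string, so that $\wt{R_\mr{cyc}} = \sum_j \distpauli{\wbs}{o_j}{\tilde{o}_j}$ and any two distinct genuine segments meet transversally in finitely many points. With this picture, non-clockwiseness of the listed sequence $[o_0,\tilde{o}_0,o_1,\tilde{o}_1,\dots]$ is equivalent to the existence of two genuine segments whose endpoints \emph{interleave} in the cyclic order $\prec$, i.e.\ that cross inside the disk. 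The goal thus becomes: remove every such crossing while keeping the first pair $\qty{o_0,\tilde{o}_0}$ intact and the whole object a single CPGS cycle, all at non-increasing weight.

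The weight control comes from a four-point \emph{uncrossing inequality}. Suppose the geodesic segments realizing $\strop{\wbs}{o_a}{\tilde{o}_a}$ and $\strop{\wbs}{o_b}{\tilde{o}_b}$ cross at a point $x$, so that $o_a,o_b,\tilde{o}_a,\tilde{o}_b$ appear in this cyclic order. Splitting each geodesic at $x$ and recombining the halves produces a string from $o_a$ to $o_b$ and a string from $\tilde{o}_a$ to $\tilde{o}_b$ whose weights sum to exactly $\distpauli{\wbs}{o_a}{\tilde{o}_a}+\distpauli{\wbs}{o_b}{\tilde{o}_b}$; since $\distpauli{\wbs}{o_a}{o_b}$ and $\distpauli{\wbs}{\tilde{o}_a}{\tilde{o}_b}$ are the corresponding minima, I obtain $\distpauli{\wbs}{o_a}{o_b}+\distpauli{\wbs}{\tilde{o}_a}{\tilde{o}_b}\le\distpauli{\wbs}{o_a}{\tilde{o}_a}+\distpauli{\wbs}{o_b}{\tilde{o}_b}$. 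Hence replacing the crossing pair by the recombination $\qty{(o_a,o_b),(\tilde{o}_a,\tilde{o}_b)}$ does not increase $\wt{R_\mr{cyc}}$.

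Next I would show this particular recombination preserves the CPGS structure. The free links $\tilde{o}_j\sim_\wbs o_{j+1}$ thread the genuine segments into one oriented cycle; the swap $\qty{(o_a,\tilde{o}_a),(o_b,\tilde{o}_b)}\to\qty{(o_a,o_b),(\tilde{o}_a,\tilde{o}_b)}$ is exactly a 2-opt move, which reconnects the threading into a single cycle by reversing the sub-walk between $\tilde{o}_a$ and $o_b$ (the complementary non-crossing recombination would split the cycle in two and is discarded). I then verify admissibility of the new segments: if a newly formed pair happens to satisfy $\sim_\wbs$, its minimal weight is $0$ and it is absorbed into the adjoining free connection, which only lowers the segment count $L$ and the weight and still leaves a valid (shorter) CPGS operator. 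The designated first segment is safe because $o_0$ and $\tilde{o}_0$ are neighboring vertices: the only object that can lie strictly between them is the edge $e_i$ they bound, and since $e_i\sim_\wbs o_0$ and $e_i\sim_\wbs\tilde{o}_0$ one may replace such an $e_i$ by a vertex---precisely the substitution built into the primed distance $\distprime{o,o'}$---after which nothing separates $o_0$ from $\tilde{o}_0$, so the chord $(o_0,\tilde{o}_0)$ crosses no other genuine segment and is never touched.

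Finally I would close with a termination argument: the total number of transversal crossings among the genuine segments is a nonnegative integer strictly decreased by each move, so after finitely many moves the genuine segments are pairwise non-crossing, which together with the alternation forced by the free links is exactly the clockwise arrangement $o_0\prec\tilde{o}_0\prec o_1\prec\cdots$, and the accumulated inequalities give $\wt{R'_\mr{cyc}}\le\wt{R_\mr{cyc}}$. I expect the main obstacle to lie in the bookkeeping of the third step---rigorously confirming that the connectivity-preserving recombination is always compatible with the condensation data, so that each new genuine pair is an admissible endpoint pair (never a forbidden edge/adjacent-edge or edge/endpoint pairing) with the correct $\sim_\wbs$ status. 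This is the one place where the boson model rather than bare planar geometry enters, and it must be argued from how $\condsete{\cdot}$ restricts along the two boundary arcs cut off by the crossing chords.
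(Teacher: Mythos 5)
Your proposal is correct and follows essentially the same route as the paper's proof: both resolve crossings of same-type green string operators at their intersection points (a support- and hence weight-preserving recombination), argue that the segment joining the two neighboring vertices $o_0$ and $\tilde{o}_0$ can never be crossed because any crossing chord would need an endpoint on the edge between them and would force $o_0$ and $\tilde{o}_0$ to be topologically connected, and conclude that the resulting non-crossing configuration can be relabeled clockwise. The only real difference is bookkeeping after each recombination: you always pick the cycle-preserving (2-opt) reconnection and absorb trivialized segments into free links, whereas the paper works with a (reduced) schematic graph, lets the cycle split, and deletes every component except the one containing the $(o_0,\tilde{o}_0)$ link.
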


\begin{proof}
Let us consider a CPGS operator $R_\mr{cyc}$ in the claim.
$R_\mr{cyc}$ can be represented by an undirected graph
\begin{align*}
    G = \qty(V \cup E, \qty{\qty{o_j, \tilde{o}_j}}_j),
\end{align*}
referred to as the \emph{schematic graph} of $R_\mr{cyc}$.
To avoid confusion, we refer to vertices and edges of $G$ here as `nodes' and `links', respectively, and `vertices' and `edges' only mean the elements of $V$ and $E$, respectively.
$G$ can be embedded in the pentagon in a way that each node is placed on the corresponding vertex or the center of the corresponding edge.
Each link of $G$ indicates that its endpoints are connected by a \gw-string operator in $R_\mr{cyc}$.
Note that $G$ cannot have self-cycles (which correspond to trivial string operators) but may have parallel links.

Let us first assume that all the five edges do not condense \gw.
In this case, all the edges are isolated nodes in $G$ and, since no two vertices are topologically connected with respect to \gw, $\tilde{o}_j = o_{j+1}$ for each $j$.
Thus, $G$ is composed of a circuit on (at most five) vertices and some isolated nodes.
The graph $G$ embedded in the pentagon may have intersecting links, which correspond to intersecting \gw-string operators.
We can always redraw them not to intersect while keeping the supports of the string operators \cite{kesselring2022anyon}.
Note that the link in $G$ between $o_0$ and $\tilde{o}_0$ (which are adjacent vertices) cannot intersect with another link, thus it still remains in the redrawn graph.
If the redrawn graph has two or more cycles (i.e., simple circuits) including parallel links, we remove all the cycles except the one that contains a link between $o_0$ and $\tilde{o}_0$.
This process leads to a new schematic graph $G'$ and the corresponding CPGS operator $R'_\mr{cyc}$ that is not longer than $R_\mr{cyc}$.
Since $R'_\mr{cyc}$ does not have intersecting string operators, its endpoints can be relabeled to be arranged clockwise.

Let us now suppose that some edges condense \gw.
We define the \emph{reduced} schematic graph $\widetilde{G}$ where every set of nodes of $\tilde{G}$ that are topologically connected with respect to \gw is merged into a single node.
Thus, for each $j$, $o_j$ and $\tilde{o}_j$ correspond to different nodes in $\widetilde{G}$, while $\tilde{o}_j$ and $\tilde{o}_{j+1}$ correspond to the same node.
Therefore, $\widetilde{G}$ contains a circuit on at most $[5 - (\text{number of edges condensing \gw})]$ nodes and all the other nodes are isolated.
By replacing intersecting links and leaving only one cycle from $\widetilde{G}$ as done in the previous case where no edges condense \gw, we can find a CPGS operator (that is not longer than $R_\mr{cyc}$) that has a reduced schematic graph $\widetilde{G}'$ with only one cycle and no intersecting links.
Note that $\widetilde{G}'$ still contains a link between $o_0$ and $\tilde{o}_0$ because, if it intersects with another link (having an endpoint on the only edge between $o_0$ and $\tilde{o}_0$) in $\widetilde{G}$, it means that $o_0$ and $\tilde{o}_0$ are topologically connected with respect to \gw. 
By relabeling nodes appropriately (while keeping $\qty{o'_0, \tilde{o}'_0} = \qty{o_0, \tilde{o}_0}$), we can ensure $o'_j \prec \tilde{o}'_j \prec \tilde{o}'_{j+1}$ and $o'_j \prec o'_{j+1} \prec \tilde{o}'_{j+1}$, but the order between $\tilde{o}'_j$ and $o'_{j+1}$ remains ambiguous unlike the previous case as they correspond to the same node in $\widetilde{G}'$.
If they have a wrong order ($o'_j \prec o'_{j+1} \prec \tilde{o}'_j \prec \tilde{o}'_{j+1}$), the links $(o'_j, \tilde{o}'_j)$ and $(o'_{j+1}, \tilde{o}'_{j+1})$ intersect, thus we can make them non-intersect by the same technique as above, which leads to the links $(o'_j, \tilde{o}'_{j+1})$ and $(\tilde{o}'_j, o'_{j+1})$.
We then remove the link $(\tilde{o}'_j, o'_{j+1})$ that corresponds to a trivial string operator, which strictly shorten the CPGS operator.
By repeating the above process until the schematic graph does not have intersecting links, we finally obtain the desired CPGS operator $R'_\mr{cyc}$.
\end{proof}

\section{Simulations of the growing operation \label{app:growing_operation_simulations}}

In this appendix, we elaborate on the simulation method for the growing operation described in Sec.~\ref{subsec:growing_operation}, where errors are decoded using the concatenated MWPM decoder with post-selection. 
We also present detailed numerical results that are not shown in the main text.

\subsection{Methods}

We simulate a circuit implementing the growing operation shown in Fig.~\ref{fig:growing_operation}(a), followed by $\dm$ rounds of syndrome extraction. 
Each Bell state is prepared by initializing two qubits to $\ket{+}$ and $\ket{0}$, respectively, and applying a \cnot gate between them. 
Syndrome extraction is performed using the circuit in Fig.~\ref{fig:syndrome_extraction_circuit}; see Appendix~\ref{app:entangling_gate_schedule} for further details.

To calculate the $\lgz$ ($\lgx$) failure rate, we consider preparing the $\ketlgc{0}$ ($\ketlgc{+}$) state in the initial patch with distance $\dcult$ and measuring the final patch with distance $\dm$ in the $\lgz$ ($\lgx$) basis.
These logical preparation and measurement are hypothetical components required only to define a logical observable of the circuit (deterministic when there are no errors) for using \textit{Stim}. 
They are not presented in actual implementations, where the process is preceded by cultivation and followed by idling or lattice surgery.
Since we aim to quantify the effect of errors during the growing operation in isolation from errors outside of it, the logical preparation and measurement are assumed to be perfect.

To calculate the logical gap using the concatenated MWPM decoder, we slightly modify the decoder implemented in Ref.~\cite{lee2025color} by treating the observable as a detector and running the decoder twice with different detector values.
The logical gap is then determined from the difference in log-likelihood weights between these two predictions.
A decoding outcome is accepted only when the logical gap is greater than a predetermined threshold $c_\mr{gap}$.
This method is implemented in the latest version of the \textit{color-code-stim} module \cite{github:colorcodestim}.

We compute the logical error rate as $p_\mr{log} = p_\mr{fail}^Z + p_\mr{fail}^X$, where $p_\mr{fail}^P$ is the obtained $\overline{P}$ failure rate.
Note that this is a conservative estimation of the actual logical error rate, as $\lgy$ errors are overcounted.
The acceptance rate is computed as $p_\mr{acc} = p_\mr{acc}^Z p_\mr{acc}^X$, where $p_\mr{acc}^Z$ and $P_\mr{acc}^X$ are respectively the acceptance rates of the two settings, assuming that decodings for $\lgz$ and $\lgx$ failures are aborted independently.

\subsection{Results}

\begin{figure*}[!t]
	\centering
	\includegraphics[width=0.48181818\textwidth]{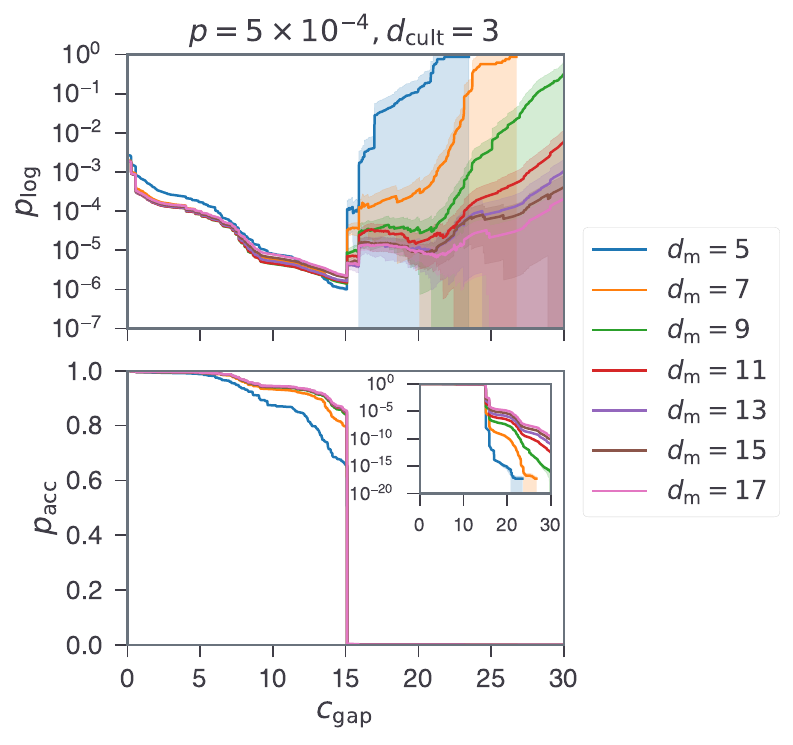}
    \includegraphics[width=0.48181818\textwidth]{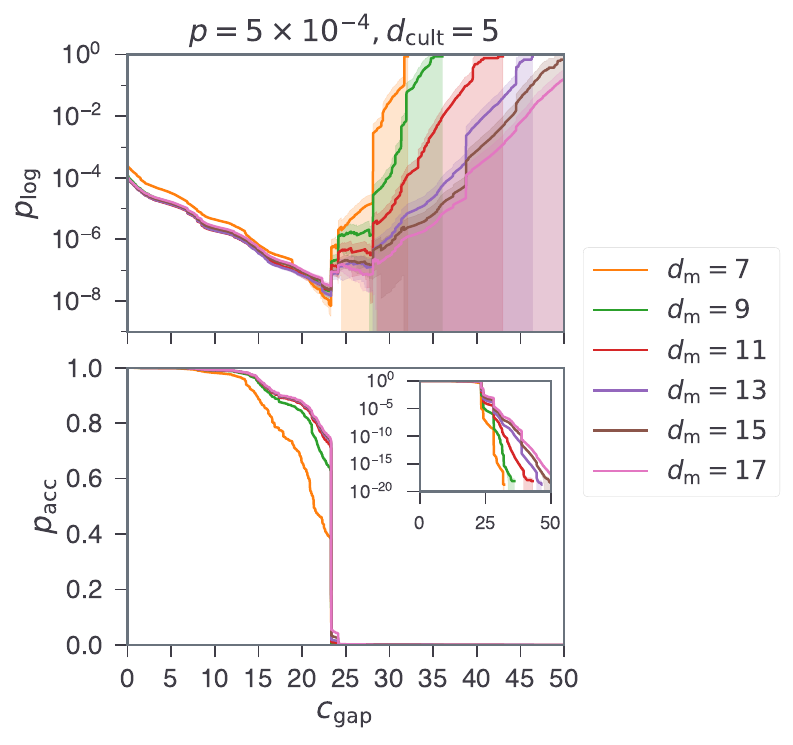}
    \includegraphics[width=0.48181818\textwidth]{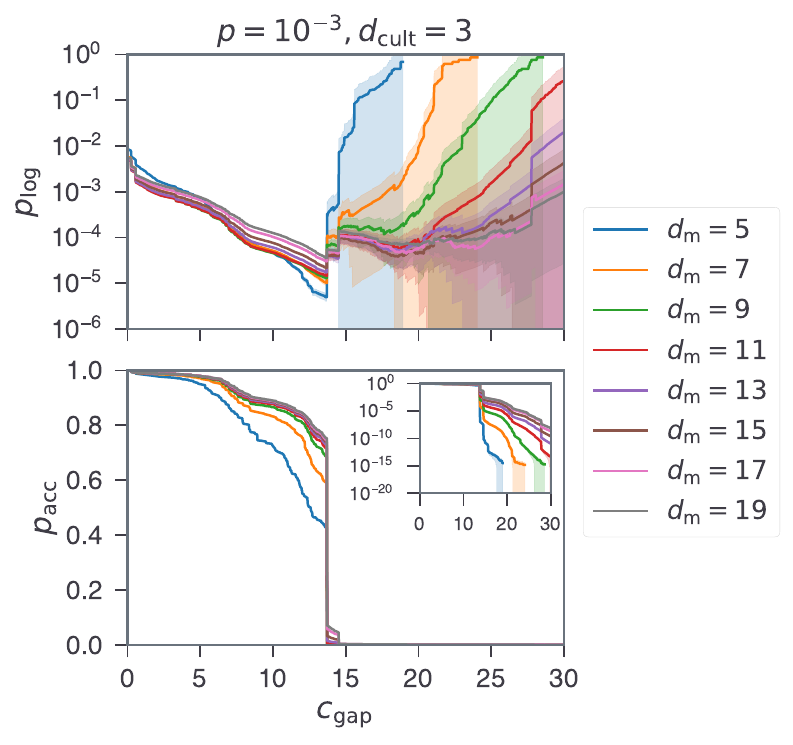}
    \includegraphics[width=0.48181818\textwidth]{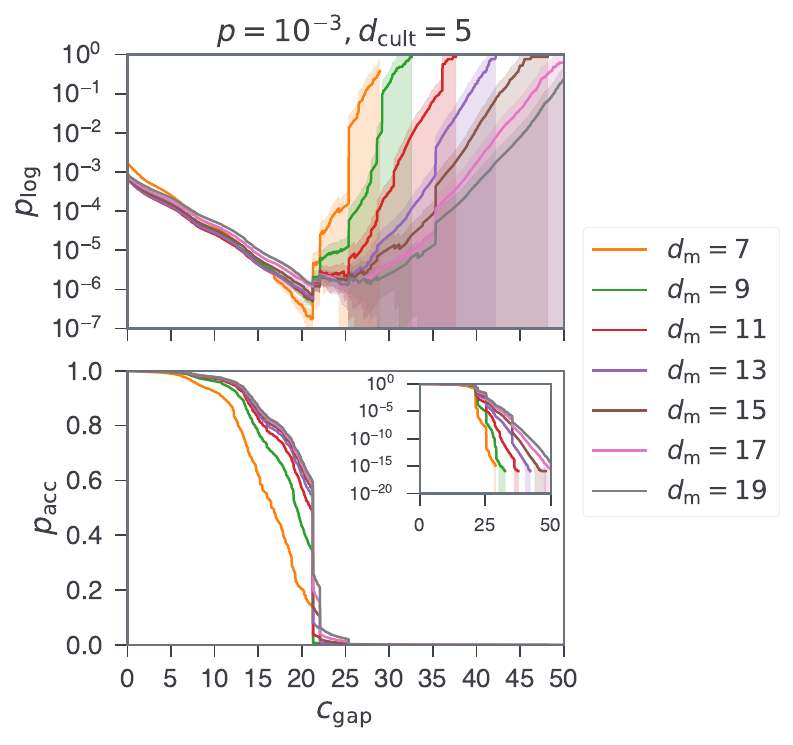}
	\caption{
        Decoder simulation results for the growing operation with post-selection.
        The logical error rate $p_\mr{log}$ and the acceptance rate $p_\mr{acc}$ are plotted against the logical gap threshold $c_\mr{gap}$ for physical noise strength $p \in \qty{5 \times 10^{-4}, 10^{-3}}$ and pre-growing code distance $\dcult \in \qty{3, 5}$, with varying post-growing code distance $\dm > \dcult$.
        Insets in the acceptance rate plots display their logarithmic-scale versions.
        The shaded regions represents 99\% confidence intervals.
    }
	\label{fig:growing_probs_vs_gap}
\end{figure*}

Now we present the results of our numerical simulations on the growing operation with varying $p$, $\dcult$, $\dm$, and $c_\mr{gap}$.
In Fig.~\ref{fig:growing_probs_vs_gap}, we plot $p_\mr{log}$ and $p_\mr{acc}$ against $c_\mr{gap}$ for $p \in \qty{5 \times 10^{-4}, 10^{-3}}$, $\dcult \in \qty{3, 5}$, and varying $\dm > \dcult$.
The achievable minimum logical error rate is 
\begin{align*}
\begin{cases}
    1.0 \times 10^{-6} & \text{for } p = 5 \times 10^{-4},~\dcult = 3, \\
    6.9 \times 10^{-9} & \text{for } p = 5 \times 10^{-4},~\dcult = 5, \\
    5.0 \times 10^{-6} & \text{for } p = 10^{-3},~\dcult = 3, \\
    1.7 \times 10^{-7} & \text{for } p = 10^{-3},~\dcult = 5.
\end{cases}
\end{align*}
Note that the logical error and acceptance rates show sudden jumps at
\begin{align}
c_\mr{gap} &\approx
\begin{cases}
    15.0 & \text{for } p = 5 \times 10^{-4},~\dcult = 3, \\
    23.3 & \text{for } p = 5 \times 10^{-4},~\dcult = 5, \\
    13.7 & \text{for } p = 10^{-3},~\dcult = 3, \\
    21.0 & \text{for } p = 10^{-3},~\dcult = 5. \\
\end{cases}
\label{eq:cgap_sudden_jumps}
\end{align}
We conjecture that this phenomenon arises because the concatenated MWPM decoder may fail to find a minimum-weight solution. 
That is, the jump locations would correspond to the maximum possible logical gap values under the hypergraph MWPM decoder, which can always identify a minimum-weight solution.
However, since the concatenated MWPM decoder may produce a higher-weight correction, the logical gap can occasionally exceed these maximum values with low probability.
We leave the verification of this conjecture for future work.

\begin{figure*}[!t]
	\centering
	\includegraphics[width=0.48181818\textwidth]{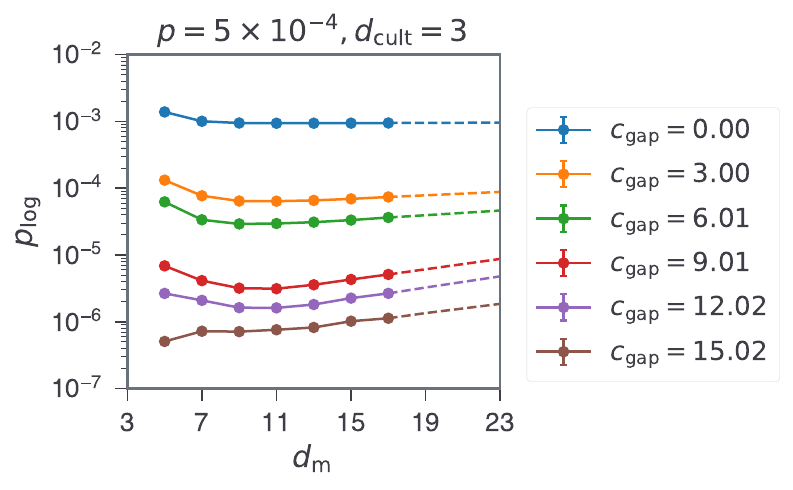}
    \includegraphics[width=0.48181818\textwidth]{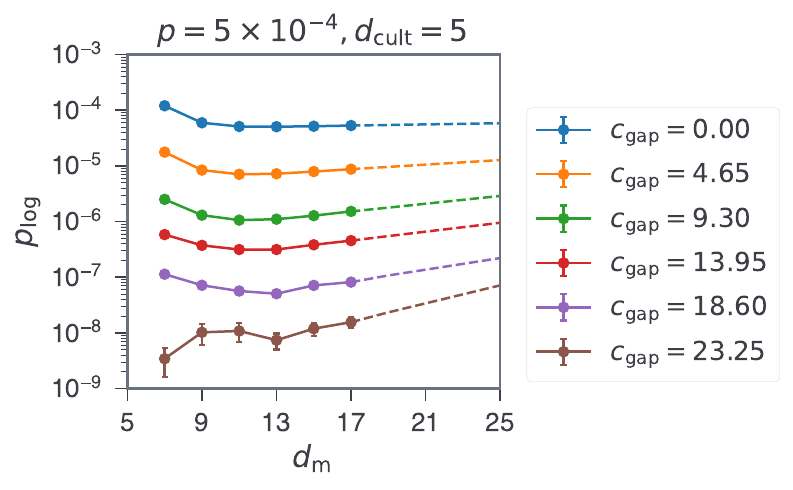}
    \includegraphics[width=0.48181818\textwidth]{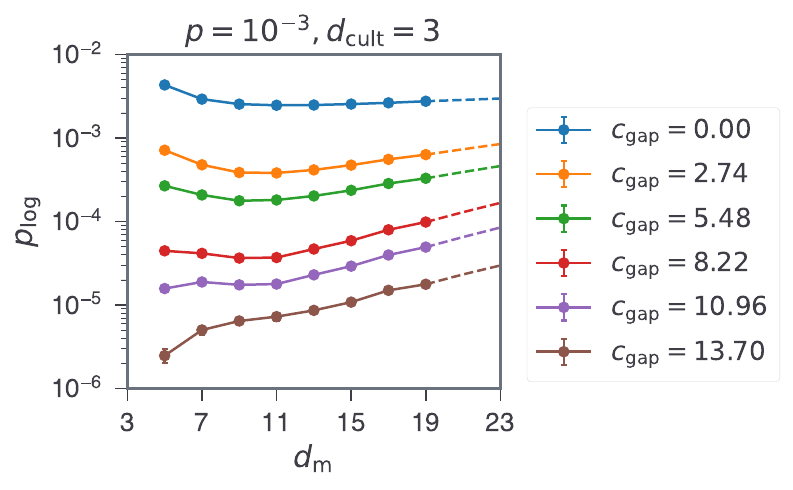}
    \includegraphics[width=0.48181818\textwidth]{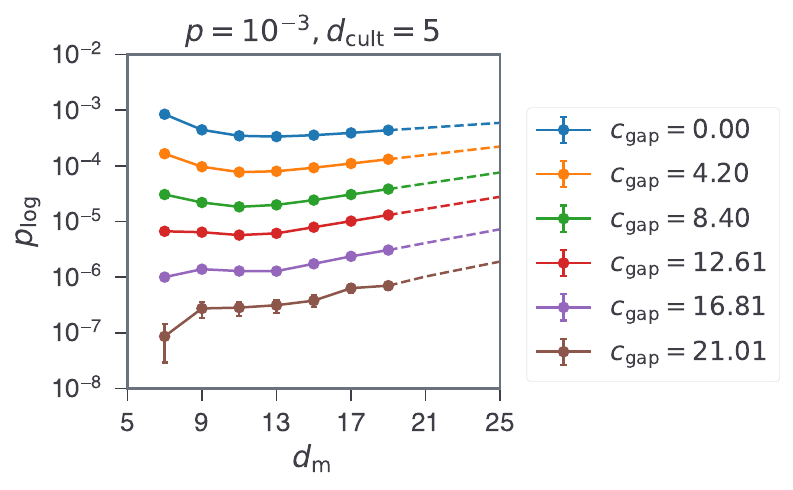}
	\caption{
        Dependency of the logical error rate $p_\mr{log}$ on $\dm$ for the growing operation with post-selection, at $p \in \qty{5 \times 10^{-4}, 10^{-3}}$ and $\dcult \in \qty{3, 5}$, with varying $c_\mr{gap}$. 
        The values of $c_\mr{gap}$ are chosen evenly spaced between zero and the points of sudden jumps given in Eq.~\eqref{eq:cgap_sudden_jumps}.
        The error bars indicate 99\% confidence intervals, while data points without error bars have confidence intervals smaller than the marker size.
        Extrapolations, shown as dashed lines, are obtained via linear fitting (on a logarithmic scale) using the last three data points.
    }
	\label{fig:growing_plog_vs_dm}
\end{figure*}

Additionally, in Fig.~\ref{fig:growing_plog_vs_dm}, we illustrate the dependence of $p_\mr{log}$ on $\dm$ with varying $c_\mr{gap}$, which is not clearly visible in Fig.~\ref{fig:growing_probs_vs_gap}.
Due to computational constraints, our simulations are limited to $\dm \leq 17$ when $p = 5 \times 10^{-4}$ and $\dm \leq 19$ when $p = 10^{-3}$. 
Beyond these limits, we extrapolate using linear regression (on a logarithmic scale) based on the last three data points.
Note that we do not extrapolate $p_\mr{acc}$; instead, we use its values at $\dm = 17$ or 19 for larger $\dm$'s.
This approach does not overestimate performance, as $p_\mr{acc}$ increases with $\dm$, as shown in Fig.~\ref{fig:growing_probs_vs_gap}.

\section{Method for numerical analysis \label{app:numerical_analysis_method}}

In this appendix, we elaborate on the method for numerical analysis, which is outlined in Sec.~\ref{subsec:analysis_method}

\subsection{Logical error rates of patches \label{subsec:patch_logical_error_rates}}

We assume that the logical error rate of multiple rounds of syndrome extraction is proportional to the number of rounds $T$.
For a given noise strength $p$, the per-round $\lgx$/$\lgy$/$\lgz$ error rate of a triangular logical patch with distance $d$ is denoted as $\plogtri{X/Y/Z}(p, d)$.
For a rectangular logical patch with code distances $\dx$ and $\dz$ in Fig.~\ref{fig:color_code_patches}(d), its per-round $\lgx_1$ error rate is denoted as $\plogrec{X_1}(p, \dx, \dz)$ and functions $\plogrec{Z_1}$, $\plogrec{X_2}$, and $\plogrec{Z_2}$ are defined analogously.
These errors are caused by Pauli-$X$ or $Z$ error strings connecting opposite boundaries.
Pauli-$Y$ error strings connecting opposite boundaries incur correlations of $\lgx$ or $\lgz$ errors between the two logical qubits, with error rates denoted as $\plogrec{X_1 X_2}$ and $\plogrec{Z_1 Z_2}$.
In addition, `diagonal' blue string operators connecting the top left corner and the bottom right corner in Fig.~\ref{fig:color_code_patches}(d) have weights of at least $\max(\dx, \dz)$ and make correlated logical errors $\lgx_1 \lgz_2$ (for \bx), $\lgz_1 \lgx_2$ (for \bz), or $\lgy_1 \lgy_2$ (for \by).
We ignore such correlations in our error analysis, which can be justified because (i) diagonal error strings are significantly rarer than other error strings for sufficiently large code distances \footnote{
    Assuming $\dx \geq \dz$, there exist $N_\mr{diag} \coloneqq \binom{l}{r}$ shortest paths for diagonal error strings, where $l \coloneqq \dx/2 - 1$ and $r \coloneqq (\dx - \dz) / 2$.
    For vertical error strings (causing $\lgx_1$ or $\lgx_2$ errors), there are approximately $N_\mr{v} \coloneqq \dz \cdot 2^{\dx/2-1}$ shortest paths.
    Comparing these two values, $N_\mr{diag} / N_\mr{v} < 0.1$ for $\dx \geq 8$ or $\dz \geq 6$, and $N_\mr{diag} / N_\mr{v} < 0.05$ for $\dx \geq 14$ or $\dz \geq 8$.
    In general, we obtain $N_\mr{diag}/N_\mr{v} < 2/[\dz \sqrt{\pi(\dx - 2)}]$ from the inequality $\binom{l}{r} \leq \binom{l}{l/2} < \sqrt{2/(\pi l)} 2^l$.
    Hence, although diagonal and vertical error strings have the same minimum weight, they differ significantly in their frequency of occurrence, assuming that circuit-level noise acts at similar levels in both directions.
} and (ii) the resulting correlated logical errors are not specifically more detrimental than other logical errors (i.e., they are detectable by the final $\lgx$ measurements of the MSD circuit alike uncorrelated $\lgz$ errors).
In addition, we denote the logical error rates of the growing operation and the subsequent $\dm$ rounds of syndrome extraction (which are jointly decoded with post-selection) for the cultivation-MSD scheme as $\pgrow{X/Y/Z}(p, \dcult, \dm, c_\mr{gap})$, where $c_\mr{gap}$ is the logical gap threshold for post-selection (see Sec.~\ref{subsec:growing_operation} and Appendix~\ref{app:growing_operation_simulations}).

Through numerical simulations, we only can compute logical failure rates of observables, not individual logical error rates.
(We say, e.g., $\lgx$ fails if a $\lgy$ or $\lgz$ error occurs.)
To extract individual logical error rates from the logical failure rates of a triangular patch, we assume that $\lgx$ and $\lgz$ errors occur with equal probability \cite{lee2025color}, while the probability of a $\lgy$ error is $\pyratio$ times this probability, where $\pyratio$ is a given non-negative number.
Namely, we assume
\begin{align*}
    \plogtri{X} = \plogtri{Z} = \frac{\plogtri{\mr{fail}}}{2(1 + \pyratio)}, \qquad \plogtri{Y} = \pyratio \plogtri{X},
\end{align*}
where $\plogtri{\mr{fail}}$ is the summation of the $\lgx$ and $\lgz$ failure rates.
Likewise, we suppose that
\begin{align*}
    \plogrec{X_1} = \plogrec{X_2} &= \frac{\plogrec{Z.\mr{fail}}}{2(1 + \pyratio)}, & \plogrec{X_1 X_2} &= \pyratio \plogrec{X_1}, \\
    \plogrec{Z_1} = \plogrec{Z_2} &= \frac{\plogrec{X.\mr{fail}}}{2(1 + \pyratio)}, & \plogrec{Z_1 Z_2} &= \pyratio \plogrec{Z_1}, \\
    \pgrow{X} = \pgrow{Z} &= \frac{\pgrow{\mr{fail}}}{2(1 + \pyratio)}, & \pgrow{Y} &= \pyratio \pgrow{X},
\end{align*}
where $\plogrec{Z(X).\mr{fail}}$ is the summation of the $\lgz_1$ ($\lgx_1$) and $\lgz_2$ ($\lgx_2$) failure rates of a rectangular patch, and $\pgrow{\mr{fail}}$ is the summation of the $\lgx$ and $\lgz$ failure rates of the growing operation.

\begin{figure}[!t]
	\centering
	\includegraphics[width=\linewidth]{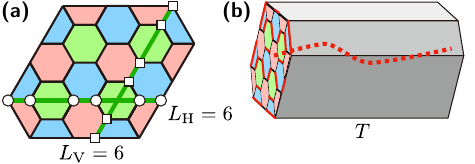}
	\caption{
        \subfig{a} Patch for analyzing timelike error strings through a stability experiment, which encodes no logical qubits. The parameters $L_\mr{H}$ and $L_\mr{V}$ determine the size of the patch, and $L_\mr{H} = L_\mr{V}=6$ in this example.
        \subfig{b} Spacetime picture of the stability experiment, which is composed of $T$ rounds of syndrome extraction surrounded by two red temporal boundaries.
        The red dotted line represents an example of a timelike error string.
 }
	\label{fig:stability_experiment}
\end{figure}

We further assume that the output state of cultivation has probabilistic Pauli noise and define $\pcult{X/Y/Z}(p, \dm)$ as its $\lgx$/$\lgy$/$\lgz$ error rate, satisfying
\begin{align*}
    \pcult{X} = \pcult{Z} &= \frac{q_\mr{cult}}{2 + \pyratio}, \\
    \pcult{Y} &= \pyratio \pcult{X},
\end{align*}
where $q_\mr{cult}$ is the output infidelity of cultivation.

In addition, we need to deal with timelike error strings that connect two red temporal boundaries of the ancillary region, such as $T$ time consecutive $Z$-type check measurement errors on a single red face.
For this, we consider a stability experiment \cite{gidney2022stability} using a patch in the shape of Fig.~\ref{fig:stability_experiment}(a), which is characterized by parameters $L_\mr{H}$ and $L_\mr{V}$ (i.e., the weights of the shortest green string operators connecting the pairs of opposite parallel boundaries horizontally and vertically).
Note that the patch does not encode logical qubits as it only has green boundaries.
In the experiment, the patch undergoes $T$ rounds of syndrome extraction of the patch surrounded by two red temporal boundaries, as shown in Fig.~\ref{fig:stability_experiment}(b).
After decoding, we check whether a timelike error occurs from the product of the outcomes of the $Z$- or $X$-type checks on all the red and blue faces at any round, which should be $+1$ if there are no errors.
We denote the timelike error rates as $\ptimelike{Z}(p, L_\mr{H}, L_\mr{V}, T)$ and $\ptimelike{X}(p, L_\mr{H}, L_\mr{V}, T)$, which respectively correspond to corrupted $Z$- and $X$-type checks.
In our analysis, we only consider the cases of $L_\mr{H} = L_\mr{V} = T$ for simulations and assume that 
\begin{align*}
    \ptimelike{Z/X}(p, L_\mr{H}, L_\mr{V}, T) = \frac{L_\mr{H} L_\mr{V}}{T^2} \ptimelike{Z/X}(p, T, T, T).
\end{align*}


\subsection{Computation of the output infidelity and success probability of MSD}

Each type of logical Pauli error that occurs during MSD can be mapped to a noise channel acted on the output and validation qubits between the last $\pi/8$-rotation and the final $\lgx$ measurements of the validation qubits.
The noise channel is of the form
\begin{align*}
    \Lambda_{\lgc{U},p_\mr{err}}: \; \lgc{\rho} \mapsto (1 - p_\mr{err})\lgc{\rho} + p_\mr{err} \lgc{U} \lgc{\rho} \lgc{U}^\dagger,
\end{align*}
where $\lgc{\rho}$ is the logical state of the output and validation qubits, $p_\mr{err}$ is the logical error rate, and $\lgc{U}$ is a product of $\pi/2$- or $(\pm \pi/4)$-rotations in bases consisting of $\lgz$'s only.
In Appendix~\ref{app:determining_noise_channels}, we present rules to determine noise channels from various possible error sources in the logical patches and the ancillary region (including timelike errors), covering both the single-level and cultivation-MSD schemes.
Following these rules, each error rate ($p_\mr{err}$) of the noise channel is expressed in terms of the logical error rate functions (defined in Appendix~\ref{subsec:patch_logical_error_rates}) and additional variables including $T_\mr{intv}$ and $T_\mr{idle}$ for the cultivation-MSD scheme.

Denoting the set of noise channels obtained by the rules as $\{\Lambda_{\lgc{U}_i, p_i}\}_{i=1}^{m}$, the unnormalized output state after the final measurements is
\begin{align*}
    \lgc{\rho}_\mr{out} = (\identity_\mr{out} \otimes \bra{\lgc{+}\lgc{+}\lgc{+}\lgc{+}}_\mr{ABCD}) \Lambda_\mr{noise} \qty(\ketbra{\lgc{\psi}_\mr{init}}),
\end{align*}
where
\begin{align*}
    \Lambda_\mr{noise} &\coloneqq \Lambda_{\lgc{U}_1,p_1} \circ \cdots \circ \Lambda_{\lgc{U}_m,p_m}, \\
    \ket{\lgc{\psi}_\mr{init}} &\coloneqq \ket{\lgc{A}_-}_\mr{out} \otimes \ket{\lgc{+}\lgc{+}\lgc{+}\lgc{+}}_\mr{ABCD}, \\
    \ket{\lgc{A}_-} &\coloneqq \frac{1}{\sqrt{2}} \qty(\ketlgc{0} + e^{-i\pi/4}\ketlgc{1}).
\end{align*}
The success probability $q_\mr{succ}$ of the scheme and the output infidelity $\infidMSD$ are then respectively given as
\begin{align*}
    q_\mr{succ} = \Tr\qty(\lgc{\rho}_\mr{out}), \quad
    \infidMSD = 1 - \frac{1}{q_\mr{succ}} \bra{\lgc{A}_-} \lgc{\rho}_\mr{out} \ket{\lgc{A}_-}
\end{align*}
In Appendix~\ref{app:MSD_performance_expressions}, we present analytic expressions of $q_\mr{succ}$ and $\infidMSD$ as functions of the physical error rate $p$, the code distances, and the logical error rates of several patches.

To calculate $q_\mr{succ}$ and $\infidMSD$ numerically, we need to explicitly specify the logical error rate functions, which vary depending on the decoder used to predict errors from syndrome outcomes.
We employ the concatenated MWPM decoder, proposed in Ref.~\cite{lee2025color}, which is suitable for our analysis since it is predicted to outperform other up-to-date matching-based decoders in terms of its logical failure rate under circuit-level noise, especially when $p$ is sufficiently small ($\leq 10^{-3}$).
Note that we use the same decoder for post-selection during the growing operation as well, thus it remains consistent.
In Appendix~\ref{app:decoder_simulations}, we elaborate on our decoder simulations for memory and stability experiments, presenting the methods and the obtained results.

\section{Selection of the entangling gate schedule \label{app:entangling_gate_schedule}}

\begin{figure*}[!t]
	\centering
	\includegraphics[width=\linewidth]{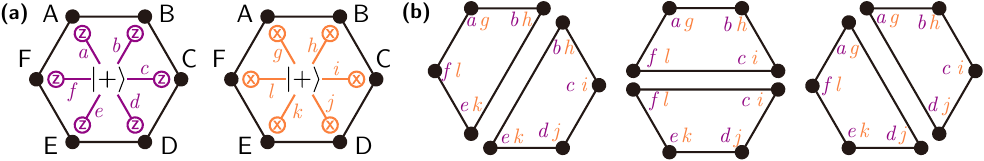}
	\caption{
    \subfig{a} Syndrome extraction circuits for measuring $Z$-type (left) and $X$-type (right) checks on a hexagonal face.
    In each circuit, the syndrome qubit at the center of the face is prepared to $\ket{+}$, undergoes multiple controlled-$Z$ (\cz) or controlled-$X$ (\cnot) gates with data qubits, and is finally measured in the $X$ basis.
    Twelve variables $a, b, \cdots, l$ are positive integers specifying the time steps that the corresponding \cnot gates are applied.
    \subfig{b} Six types of weight-4 faces that can be placed along boundaries with the corresponding time-step variables of the \cnot gates, colored in purple (orange) for the measurements of $Z$-type ($X$-type) checks.
    The syndrome extraction circuits are in the same form as \subfig{a}.
    }
	\label{fig:syndrome_extraction_circuit}
\end{figure*}

One important factor to consider when running circuit-level error simulations is the scheduling of entangling gates in the syndrome extraction circuit.
Since the numerical results in the original concatenated MWPM decoder paper~\cite{lee2025color} are based on the schedule verified and optimized only for triangular patches, we need to adjust it to work well also with rectangular patches.
More specifically, we identify 24 valid schedules with seven time steps that can properly extract check eigenvalues and, for each of them, simulate 6 rounds of syndrome extraction of a rectangular logical patch with $\dx=\dz=6$, obtaining the failure rates of four logical Pauli operators $\lgz_1$, $\lgz_2$, $\lgx_1$, and $\lgx_2$ by using the concatenated MWPM decoder.
(Here, we say that a logical Pauli operator fails if a logical error anticommuting with it occurs.)
We then select the schedule that minimizes the worst-case failure rate (i.e., largest one among the four).

To elaborate on this process, it is as follows:
Checks on a hexagonal face are measured by using the circuits in Fig.~\ref{fig:syndrome_extraction_circuit}(a), where each syndrome qubit for a $Z$-type ($X$-type) check is prepared to $\ket{+}$, undergo \cz (\cnot) gates with data qubits, and then are measured in the $X$ basis.
Six possible types of weight-4 faces that can be placed along boundaries are presented in Fig.~\ref{fig:syndrome_extraction_circuit}(b) and the corresponding checks are measured in a similar way to be consistent.
Note that the weight-4 face at the top left corner of Fig.~\ref{fig:color_code_patches}(c) is regarded to have the same form as the second face (with $b, c, d, e, h, i, j, k$) of Fig.~\ref{fig:syndrome_extraction_circuit}(b).
The entangling gate schedule is specified by 12 positive integers $\mathcal{A} = \qty[a, b, c, d, e, f; g, h, i, j, k, l]$, which contains all the integers from 1 to $\max\mathcal{A}$ (called the length of the schedule).
Each integer represents the time step at which the corresponding entangling gate in Fig.~\ref{fig:syndrome_extraction_circuit}(a) is applied.
The first (last) half of the schedule is referred to as its $Z$-type ($X$-type) part.

The following conditions need to be satisfied for the schedule to be valid \cite{beverland2021cost}:
\begin{enumerate}
    \item Each qubit can be involved in at most one \cnot gate at each time step, meaning that $(a,b,c,d,e,f)$, $(g,h,i,j,k,l)$, $(a,c,e,g,i,k)$, and $(b,d,f,h,j,l)$ are all tuples of distinct numbers.
    \item $X$- and $Z$-type check measurements should not interfere with each other. For example, the stabilizer $X$ of a $Z$-type syndrome qubit is propagated to $Z$ operators on the surrounding data qubits. We should prevent a situation that an odd number among these operators are again propagated to an $X$-type syndrome qubit. Therefore, all the following 13 numbers should be positive: 
    \begin{align}
    \begin{cases}
        (a-g)(b-h)(c-i)(d-j)(e-k)(f-l), \\
        (a-g)(b-h)(f-l)(e-k), \\
        (a-g)(b-h)(c-i)(f-l), \\
        (a-g)(b-h)(c-i)(d-j), \\
        (c-i)(d-j)(e-k)(f-l), \\
        (a-g)(d-j)(e-k)(f-l), \\
        (b-h)(c-i)(d-j)(e-k), \\
        (a-k)(b-j), \quad (b-l)(c-k), \quad (c-g)(d-l), \\
        (d-h)(e-g), \quad (e-i)(f-h), \quad (f-j)(a-i).
    \end{cases} \label{eq:13_positive_numbers}
    \end{align}
\end{enumerate}

There are $4 \times 3 \times 2 = 24$ length-7 schedules satisfying the above conditions, which are
\begin{align*}
    &[2, 1, 4, 5, 6, 3; 3, 2, 5, 6, 7, 4], \quad [2, 3, 6, 5, 4, 1; 3, 4, 7, 6, 5, 2],\\
    &[1, 2, 3, 6, 5, 4; 2, 3, 4, 7, 6, 5], \quad[1, 4, 5, 6, 3, 2; 2, 5, 6, 7, 4, 3]
\end{align*}
and their variations considering $\ang{120}$ rotation (i.e., $\mathcal{A} \rightarrow [c, d, e, f, a, b; i, j, k, l, g, h]$) and the exchange of the $X$- and $Z$-type parts (i.e., $\mathcal{A} \rightarrow [g, h, i, j, k, l; a, b, c, d, e, f]$).
To select one of these 24 schedules, we consider 6 rounds of syndrome extraction of a rectangular logical patch with distances $\dx = \dz = 6$ and compute the failure rates of four logical Pauli operators $\lgz_1$, $\lgz_2$, $\lgx_1$, and $\lgx_2$ defined in Fig.~\ref{fig:color_code_patches}(d) via Monte-Carlo simulations.
(Here, we say $\lgz_1$ fails if an $\lgx_1$ or $\lgy_1$ error occurs, and similarly for the other operators.)
We then select the schedule that minimizes the largest one among these four failure rates.
The computed logical failure rates are presented in Fig.~\ref{fig:cnot_schedule_comparison} for six schedules, which are respectively
\begin{align}
\begin{cases}
    1. &[3, 6, 5, 4, 1, 2; 4, 7, 6, 5, 2, 3], \\
    2. &[4, 5, 6, 3, 2, 1; 5, 6, 7, 4, 3, 2], \\
    3. &[5, 4, 1, 2, 3, 6; 6, 5, 2, 3, 4, 7], \\
    4. &[2, 1, 4, 5, 6, 3; 3, 2, 5, 6, 7, 4], \\
    5. &[1, 4, 5, 6, 3, 2; 2, 5, 6, 7, 4, 3], \\
    6. &[1, 2, 3, 6, 5, 4; 2, 3, 4, 7, 6, 5].
\end{cases}
\label{eq:valid_cnot_schedules}
\end{align}
Here we only consider six out of the 24 schedules since the rectangular patch is symmetric under a $\ang{180}$ rotation (i.e., $\mathcal{A} \rightarrow [d,e,f,a,b,c; j,k,l,g,h,i]$), and exchanging the $X$- and $Z$-type parts simply swaps their corresponding failure rates ($\lgx_1 \leftrightarrow \lgx_2$ and $\lgz_1 \leftrightarrow \lgz_2$).
Hence, we select the first one in Eq.~\eqref{eq:valid_cnot_schedules} as the default entangling gate schedule for our scheme, shown in Fig.~\ref{fig:selected_schedule}(a) and~(b).

\begin{figure}[!t]
	\centering
	\includegraphics[width=\linewidth]{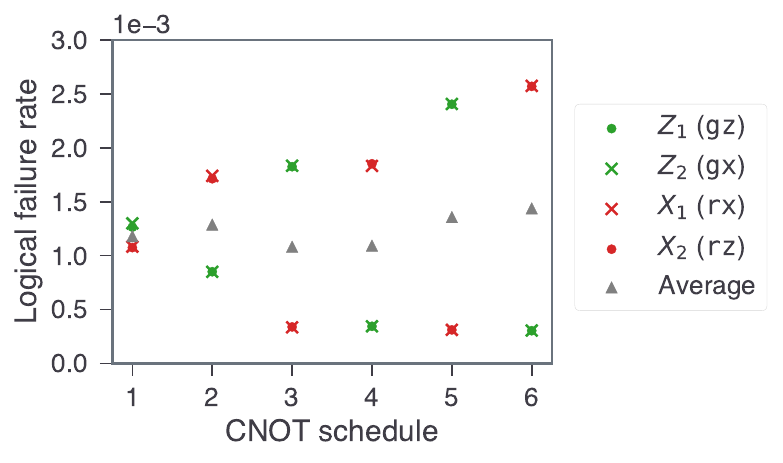}
	\caption{
        Failure rates of the four logical Pauli operators $\lgz_1$, $\lgz_2$, $\lgx_1$, and $\lgx_2$ for six different entangling gate schedules in Eq.~\eqref{eq:valid_cnot_schedules}, considering 6 rounds of syndrome extraction of a rectangular logical patch with distances $\dx = \dz = 6$ under the circuit-level noise model of strength $p = 10^{-3}$, decoded via the concatenated MWPM decoder \cite{lee2025color}.
        Note that $\lgz_1$, $\lgz_2$, $\lgx_1$, and $\lgx_2$ are respectively \gz-, \gx-, \rx-, and \rz-string operators.
        The averages of the four failure rates are also presented.
        We select schedule~1 that minimizes the largest logical failure rate among the four.
 }
	\label{fig:cnot_schedule_comparison}
\end{figure}

\begin{figure}[!t]
	\centering
	\includegraphics[width=\linewidth]{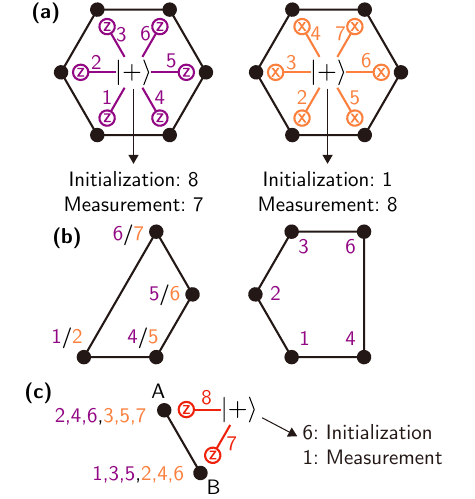}
	\caption{
        \subfig{a} Selected entangling gate schedule for measuring a $Z$-type check (left) and an $X$-type check (right).
        Syndrome qubits for $Z$-type ($X$-type) checks are initialized at time step~8 (1) and measured at time step~7 (8).
        \subfig{b} Examples of schedules for weight-4 and 5 checks.
        Note that only the schedule for the $Z$-type check is presented on the weight-5 check since the face can support only one check.
        \subfig{c} Schedule for measuring a weight-2 check $Z_\mr{A} \otimes Z_\mr{B}$ in a domain wall.
        An additional syndrome qubit is prepared at time step~6, undergo \cz gates with qubits~B and~A at time steps~7 and~8, respectively, and is measured at time step~1.
    }
	\label{fig:selected_schedule}
\end{figure}

By using one of the schedules in Eq.~\eqref{eq:valid_cnot_schedules}, a single round of syndrome extraction can be done in eight time steps, where syndrome qubits for $Z$-type ($X$-type) checks are measured at time step~7 (8) and initialized at time step~8 (1).
It is worth noting that two-body check measurements for lattice surgery in Figs.~\ref{fig:lattice_surgery_simple} and~\ref{fig:lattice_surgery_full} also can be done during these eight time steps, as exemplified in Fig.~\ref{fig:selected_schedule}(c).
This is because, for each pair of neighboring qubits, one of them is involved in entangling gates during time steps~1 to~6, while the other is involved during time steps~2 to~7, implying that two extra entangling gates for a two-body check measurement can be executed at time steps~7 and~8.
The additional syndrome qubit is initialized at time step~6 and measured at time step~1 of the next round.

In addition, we need to consider $Y$-type checks and mixed-Pauli checks as well, which may exist in domain walls.
Such checks can be measured by replacing entangling gates with corresponding controlled-Pauli gates while keeping the schedule.
For example, let us consider a Pauli-permuting domain wall ($\xbs \leftrightarrow \zbs$) crossing a face, with one side having qubits~A, B, C, and~F and the other side having qubits~D and~E in Fig.~\ref{fig:syndrome_extraction_circuit}(a).
In this case, we can replace the $Z$-type check $\sgz{f}$ with $S_1 \coloneqq Z_{\mr{A}} Z_{\mr{B}} Z_{\mr{C}} X_{\mr{D}} X_{\mr{E}} Z_{\mr{F}}$ and apply \cnot gates instead of \cz gates on qubits~D and~E when measuring $S_1$.
The $X$-type check $\sgx{f}$ is replaced with $S_2 \coloneqq X_{\mr{A}} X_{\mr{B}} X_{\mr{C}} Z_{\mr{D}} Z_{\mr{E}} X_{\mr{F}}$ and measured analogously.
Note that the schedule is still valid even if $\sgz{f}$ is replaced with $S_2$ and $\sgx{f}$ is replaced with $S_1$.
We can verify that the second condition for the schedule to be valid (i.e., check measurements should not interfere each other) holds even if entangling gates are replaced like this.
For example, in the above example, there is a risk that $S_1$ interferes with the $Z$-type check on the adjacent face containing qubits~D and~E.
However, it does not happen since $(a - e)(b - d) > 0$ for all the 24 valid length-7 schedules, although it is not one of Eq.~\eqref{eq:13_positive_numbers}.
In general, six numbers $(a-e)(b-d)$, $(b-f)(c-e)$, $(c-a)(d-f)$, $(g-k)(h-j)$, $(h-l)(i-k)$, and $(i-g)(j-l)$ are all positive, thus we do not need to worry about this problem.

It is worth noting that the selected schedule is the same as the one used in Ref.~\cite{lee2025color} rotated by \ang{60}, implying that the numerical results in Ref.~\cite{lee2025color} can be directly applied to inverted (base-up) triangular patches (including \patchout and some auxiliary patches) in our MSD layouts.
Our layouts also contain upright (base-down) triangular patches, but we assume for simplicity that their logical error rates are the same as those of inverted triangular patches having the same code distances 
As numerical evidence for justifying this assumption, we simulate 7 rounds of syndrome extraction of a distance-7 upright triangular patch at $p=10^{-3}$ based on the selected schedule.
The obtained $\lgz$ and $\lgx$ failure rates are $(7.24 \pm 0.04) \times 10^{-4}$ and $(7.22 \pm 0.04) \times 10^{-4}$ (99\% confidence intervals), respectively, which differ from the failure rates $(7.19 \pm 0.04) \times 10^{-4}$ obtained in Ref.~\cite{lee2025color} only by $\sim 0.7\%$.

\section{Decoder simulations for memory and stability experiments \label{app:decoder_simulations}}

In this appendix, we provide details of our decoder simulations for memory and stability experiments using the concatenated MWPM decoder under circuit-level noise. 
We outline the simulation methods, present the corresponding results, and discuss our reasoning behind selecting the ansatz given in Eq.~\eqref{eq:ansatz}.

\subsection{Methods for simulating memory and stability experiments \label{subsec:decoder_simulation_methods}}

For a memory experiment with a triangular or rectangular patch, we consider $4d$ rounds of syndrome extraction (following the schedule in Fig.~\ref{fig:selected_schedule}) for code distance $d$ (by setting $\dx = \dz = d$ for rectangular patches).
This process is preceded and followed by Pauli-$Z$ temporal boundaries; namely, all the physical qubits are initially prepared as $\ket{0}$ and finally measured in the $Z$ basis.
For a triangular patch, this means that the logical qubit encoded in the patch is initially prepared as $\ketlgc{0}$ and finally measured in the $\lgz$ basis.
For a rectangular patch, the two logical qubits are initially prepared as $\ketlgc{0} \otimes \ketlgc{+}$ and finally measured in the basis of $\lgc{Z} \otimes \lgc{X}$.
Each round of syndrome extraction is conducted according to the schedule in Fig.~\ref{fig:selected_schedule}(a).
Note that this setting can only detect $\lgz$ failures (for a triangular patch) or $\lgz_1$ and $\lgx_2$ failures (for a rectangular patch).
For complete analysis covering other types of failures, we consider the same setting again but with a schedule modified such that its $X$-part and $Z$-part are swapped.

For a stability experiment, we consider the patch in Fig.~\ref{fig:stability_experiment}(a) with $L_\mr{H} = L_\mr{V} = T$, which encodes no logical qubits, undergoing $T$ rounds of syndrome extraction with the same schedule.
This process is preceded and followed by red temporal boundaries, namely, all the red edges are initially prepared as $\ket{0}\ket{0} + \ket{1}\ket{1}$ and measured in the Bell basis.
If there are no errors, the product of red and blue checks with the same Pauli type should be always $+1$ whenever they are measured.
This fact can be used for determining whether a nontrivial timelike error string exists after decoding.

For simulating the above setting, we employ the \textit{color-code-stim} module \cite{lee2025color} with a slight modification to handle memory experiments with rectangular checks and stability experiments.
The module automatically generates a noisy color code circuit for given color code lattice, entangling gate schedule, and bit-flip or circuit-level noise model, with detectors (i.e., products of measurement outcomes that are deterministic when there are no errors and can be used for decoding) annotated.
It also provides features to simulate the circuit using the \textit{Stim} library \cite{gidney2021stim} and decode errors via the concatenated MWPM decoder \cite{lee2025color}, where each MWPM subroutine is implemented by the \textit{PyMatching} library \cite{higgott2022pymatching}.

\subsection{Results}

\begin{figure*}[!t]
	\centering
	\includegraphics[width=\linewidth]{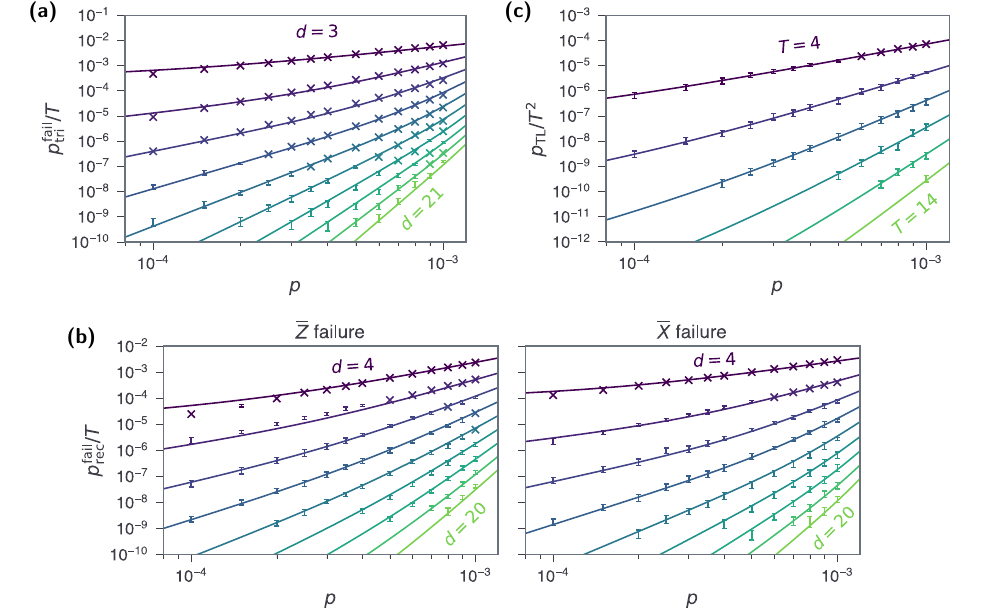}
	\caption{
        Per-round or per-area logical failure rates plotted against the physical error rate $p$ at varying $d$ (code distance) or $T$ (number of rounds), for \subfig{a} triangular patches, \subfig{b} rectangular patches, and \subfig{c} stability experiments.
        We consider $d \in \qty{3, 5, 7, \cdots, 21}$ for triangular patches, $d \in \qty{4, 6, 8, \cdots, 20}$ for rectangular patches, and $T \in \qty{4, 6, 8, \cdots, 14}$ for stability experiments.
        In \subfig{b}, the per-round failure rates of $\lgz$ and $\lgx$ (i.e., $\plogrec{Z.\mr{fail}}/T$ and $\plogrec{X.\mr{fail}}/T$) are separately presented.
        Each data point is marked as an error bar indicating its 99\% confidence interval or as an `X' symbol if its relative margin of error (i.e., the ratio of half the width of its confidence interval to its center) is less than 10\%.
        The solid lines are regression lines corresponding to individual $d$ values, based on the ansatz in Eq.~\eqref{eq:ansatz}. 
 }
	\label{fig:pfails_plot}
\end{figure*}

For a triangular patch, we denote the per-round failure rate of the $\lgx$ ($\lgz$) observable as $\plogtri{X(Z).\mr{fail}} \coloneqq \plogtri{Z(X)} + \plogtri{Y}$ and define $\plogtri{\mr{fail}} \coloneqq \plogtri{X.\mr{fail}} + \plogtri{Z.\mr{fail}}$.
Similarly, for a rectangular patch, we denote the per-round logical failure rates as $\plogrec{Z_{1(2)}.\mr{fail}} \coloneqq \plogrec{X_{1(2)}} + \plogrec{X_1 X_2}$ and $\plogrec{X_{1(2)}.\mr{fail}} \coloneqq \plogrec{Z_{1(2)}} + \plogrec{Z_1 Z_2}$, and define $\plogrec{Z.\mr{fail}} \coloneqq \plogrec{Z_1.\mr{fail}} + \plogrec{Z_2.\mr{fail}}$ and $\plogrec{X.\mr{fail}} \coloneqq \plogrec{X_1.\mr{fail}} + \plogrec{X_2.\mr{fail}}$.

In Fig.~\ref{fig:pfails_plot}, logical failure rates obtained from the memory and stability experiments are plotted for various values of $p$ and $d$ (or $T$ for the stability experiments).
Figures~\ref{fig:pfails_plot}(a) and~(b) are for triangular and rectangular patches, respectively, which show $\plogtri{\mr{fail}}/T$ for triangular patches and $\plogrec{Z.\mr{fail}}/T$ and $\plogrec{X.\mr{fail}}/T$ for rectangular patches, where $T=4d$.
Figure~\ref{fig:pfails_plot}(c) is for stability experiments, presenting the per-area failure rate $\ptimelike{}/T^2 = (\ptimelike{X} + \ptimelike{Z})/T^2$, where $\ptimelike{X/Z}$ is defined in Appendix~\ref{subsec:patch_logical_error_rates}.

\subsection{Ansatz selection}

We now select the ansatz by which these per-round and per-area failure rates can be approximated, which is a function of $p$ and $d$ (denoting $d \coloneqq T$ for stability experiments).
A representative ansatz commonly used in the literature \cite{fowler2019low,litinski2019game,litinski2019magic,lee2025color} is
\begin{align*}
    f(p, d) = \alpha \qty(\frac{p}{p_\mr{th}})^{\beta d + \eta},
\end{align*}
which depends on four parameters $p_\mr{th}$, $\alpha$, $\beta$, and $\eta$.
Note that the per-round logical error rate of a surface code under circuit-level noise can be approximated by this ansatz with $(p_\mr{th}, \alpha, \beta, \eta) \approx (0.01, 0.1, 0.5, 0.5)$ \cite{fowler2019low}.
However, this ansatz is asymptotically valid for sufficiently small values of $p$, where higher-order terms of $p$ are negligible.
The ansatz cannot approximate the data in Fig.~\ref{fig:pfails_plot} sufficiently well, thus we add one higher-order term in the ansatz.
The modified ansatz can be in the form of
\begin{align}
    f(p, d) = \alpha \qty(\frac{p}{p_\mr{th}})^{\beta d + \eta} \qty[1 + \epsilon \qty(\frac{p}{p_\mr{th}})^{g(d)}],
    \label{eq:modified_ansatz}
\end{align}
where $\epsilon$ is an additional parameter and $g$ is a function of $d$ such that $g(d) > 0$ for every $d$.
Since the form of the function $g$ is unknown, we choose it among the seven candidates
\begin{gather*}
    g(d) = \zeta, \quad g(d) = \zeta d, \quad g(d) = \zeta_0 + \zeta_1 d, \\
    g(d) = \zeta_0 + \zeta_1 d + \zeta_2 d^2, \quad g(d) = d^\lambda, \\
    g(d) = \zeta d^\lambda, \quad g(d) = \zeta_0 + \zeta_1 d^\lambda,
\end{gather*}
which respectively have at most three additional parameters.
Note that having more parameters in an ansatz does not always lead to better performance due to the risk of overfitting, which can reduce the model's ability to generalize to unseen data.
To prevent overfitting and ensure generalizability, we use leave-one-out cross-validation (LOOCV) \cite{efron1983leisurely}, a model validation technique where the model is trained on the dataset excluding a single data point and tested on that excluded data point, iteratively for all data points. 
We then calculate the root mean square deviation (RMSD) of the tested data points as the LOOCV score, where a lower score indicates better performance.
Here we note that the regressions are done through the least-squares method in the logarithmic scale with a transformed ansatz
\begin{multline*}
    \log f(p, d) = \log\alpha + (\beta d + \eta)(\log p - \log p_\mr{th}) \\
    + \log\qty[1 + \epsilon e^{(\log p - \log p_\mr{th})g(d)}],
\end{multline*}
using the function \texttt{curve\_fit} in the Python library \textit{SciPy} \cite{virtanen2020scipy}.
The LOOCV score is computed based on this logarithmic scale as well.

Table~\ref{table:ansatz_selection} presents the computed LOOCV scores for these seven candidates, all of which are significantly better than the ansatz without a higher-order term.
The candidate $g(d) = \zeta d^\lambda$ shows the best overall performance, thus we select this as our ansatz.
Note that $g(d) = d^\lambda$ performs better for the $\lgz$ failure rate of rectangular patches, but we choose $g(d) = \zeta d^\lambda$ for consistency.

\begin{table*}[t!]
    \centering
    \begin{ruledtabular}
    \begin{tabular}{ccccc}
        \multirow{3}{*}{Ansatz} & \multicolumn{4}{c}{LOOCV score} \\
         & \multirow{2}{*}{Triangular patches} & \multicolumn{2}{c}{Rectangular patches} & \multirow{2}{*}{Stability experiments} \\
         &  & $\lgz$ failure & $\lgx$ failure &  \\ \hline
        No higher-order term ($\epsilon = 0$) & 0.291 & 0.217 & 0.254 & 0.109 \\
        $g(d) = \zeta$ & 0.285 & 0.189 & 0.231 & 0.108 \\
        $g(d) = \zeta d$ & 0.129 & 0.195 & 0.138 & 0.0922 \\
        $g(d) = \zeta_0 + \zeta_1 d$ & 0.129 & 0.153 & 0.133 & 0.0905 \\
        $g(d) = \zeta_0 + \zeta_1 d + \zeta_2 d^2$ & 0.130 & 0.150 & 0.135 & 0.0905 \\
        $g(d) = d^\lambda$ & 0.162 & 0.125 & 0.159 & 0.0970 \\
        ($\bigast$) $g(d) = \zeta d^\lambda$ & \textbf{0.129} & \textbf{0.145} & \textbf{0.134} & \textbf{0.0887} \\
        $g(d) = \zeta_0 + \zeta_1 d^\lambda$ & 0.129 & 0.164 & 0.140 & 0.0970
    \end{tabular}
    \end{ruledtabular}
    \caption{
        Leave-one-out cross-validation (LOOCV) scores of several different ansatz candidates for logical failure rates, where $g$ is the function in Eq.~\eqref{eq:modified_ansatz}.
        A lower LOOCV score indicates better generalizability.
        We select the ansatz with $g(d) = \zeta d^\lambda$, highlighted by ($\bigast$).
    }
    \label{table:ansatz_selection}
\end{table*}

\subsection{Ansatz parameter estimates \label{subsec:ansatz_parameter_estimates}}

By fitting the data into the selected ansatz, we obtain the regression lines in Fig.~\ref{fig:pfails_plot}.
The estimated values of the ansatz parameters are as follows.

\begin{itemize}
    \item Triangular patches [Fig.~\ref{fig:pfails_plot}(a)]:
    \begin{gather*}
        p_\mr{th} = 2.41 \times 10^{-3}, \quad \alpha = 6.19 \times 10^{-4}, \quad \beta = 0.537, \\
        \eta = -1.45, \quad \epsilon = 27.2, \quad \zeta = 0.404, \quad \lambda = 0.933.
    \end{gather*}
    \item Rectangular patches, $\lgz$ failure [Fig.~\ref{fig:pfails_plot}(b), left]:
    \begin{gather*}
        p_\mr{th} = 4.17 \times 10^{-3}, \quad \alpha = 5.68 \times 10^{-4}, \quad \beta = 0.439, \\
        \eta = -1.04, \quad \epsilon = 88.1, \quad \zeta = 0.927, \quad \lambda = 0.332,
    \end{gather*}
    \item Rectangular patches, $\lgx$ failure [Fig.~\ref{fig:pfails_plot}(b), right]:
    \begin{gather*}
        p_\mr{th} = 3.07 \times 10^{-3}, \quad \alpha = 2.07 \times 10^{-4}, \quad \beta = 0.553, \\
        \eta = -2.05, \quad \epsilon = 73.1, \quad \zeta = 0.515, \quad \lambda = 0.742.
    \end{gather*}
    \item Stability experiments [Fig.~\ref{fig:pfails_plot}(c)]:
    \begin{gather*}
        p_\mr{th} = 6.24 \times 10^{-3}, \quad \alpha = 6.91 \times 10^{-6}, \quad \beta = 0.601, \\
        \eta = -1.61, \quad \epsilon = 543, \quad \zeta = 0.800, \quad \lambda = 0.389.
    \end{gather*}
\end{itemize}

\begin{widetext}

\section{Rules to determine noise channels from error sources during MSD \label{app:determining_noise_channels}}

In this appendix, we present the rules to determine noise channels from various possible error sources during MSD, as an extension of the discussion in Sec.~\ref{subsec:analysis_method}.
We here denote the logical-level support of a logical unitary operator $\lgc{U}$ as $\logsupp \lgc{U}$.

\subsection{Memory errors of logical patches \label{subsec:memory_errors}}

First, each logical patch suffers memory errors during or between lattice surgery.
We include initialization and measurement errors of logical qubits (except for faulty T-measurements) into this category for convenience.
Importantly, during the merging operation of each stage for the single-level MSD scheme, we ignore $\lgx$ and $\lgy$ errors of the qubits involving the stage.
This is because new logical operators replacing such $\lgx$ and $\lgy$ operators during the merging operation are represented by string-net operators having weights strictly larger than the code distances, as exemplified in Fig.~\ref{fig:lattice_surgery_full}(d).
Likewise, in the cultivation-MSD scheme, $\lgx$ and $\lgz$ errors can be ignored during merging operations.

For each stage~$k$ including the interval before the next merging operation (or including the final $\lgx$ measurements if $k=8$), there exist noise channels $\Lambda_{\lgc{U}, p_\mr{err}}$ characterized as follows:
\begin{enumerate}
    \item $\lgz$ error on each active qubit $\lgq{} \in \qty{\lgq{out}, \lgq{A}, \lgq{B}, \lgq{C}, \lgq{D}}$:
    \begin{align*}
        \lgc{U} &= \rot{\qty(\lgz_\qty{\lgq{}})}{\pi/2}, \\
        p_\mr{err} &= \begin{cases}
            \qty(\dm + 1)p^Z & \text{for the single-level scheme with any $k$ and the cultivation-MSD scheme with $k = 8$}, \\
            T_\mr{intv} p^Z & \text{for the cultivation-MSD scheme with $k < 8$},
        \end{cases}
    \end{align*}
    where $T_\mr{intv}$ is the average number of rounds between successive stages and $p^Z$ is the corresponding $\lgz$ error rate $\plogtri{Z}(p, \dout)$, $\plogrec{Z_1}(p, \dx, \dz)$ (for qubits~A and~C), or $\plogrec{Z_2}(p, \dx, \dz)$ (for qubits~B and~D).
    
    \item $\lgx$ error on each active qubit $\lgq{} \in \qty{\lgq{out}, \lgq{A}, \lgq{B}, \lgq{C}, \lgq{D}}$:
    \begin{align*}
        \lgc{U} &= \prod_{\substack{j \leq k \\ \lgq{} \in \logsupp\Pstage{j}}} \rot{\Pstage{j}}{-\pi/4} \prod_{\substack{j \leq k \\ \lgq{} \in \logsupp\Qstage{j}}} \rot{\Qstage{j}}{-\pi/4} \\
        p_\mr{err} &= \begin{cases}
            \qty(\dm + 1 - \dm\lambda_{\lgq{}}^{(k)})p^X & \text{for the single-level scheme with any $k$} \\
            & \qquad \text{and the cultivation-MSD scheme with $k = 8$}, \\
            \qty[T_\mr{intv} - \dm\lambda_{\lgq{}}^{(k)}]p^X & \text{for the cultivation-MSD scheme with $k < 8$},
        \end{cases}
    \end{align*}
    where $p^X$ is the corresponding $\lgx$ error rate $\plogtri{X}(p, \dout)$, $\plogrec{X_1}(p, \dx, \dz)$, or $\plogrec{X_2}(p, \dx, \dz)$, and
    \begin{align*}
        \lambda_{\lgq{}}^{(k)} = \begin{cases}
            1 & \text{if } \lgq{} \in \logsupp \Pstage{k} \cup \logsupp \Qstage{k}, \\
            0 & \text{otherwise}.
        \end{cases}
    \end{align*}

    \item $\lgy$ error on the output qubit $\lgq{out}$ when it is active:
    \begin{align*}
        \lgc{U} &= \rot{\qty(\lgztp{O})}{\pi/2} \prod_{\substack{j \leq k \\ \lgq{out} \in \logsupp\Pstage{j}}} \rot{\Pstage{j}}{-\pi/4} \prod_{\substack{j \leq k \\ \lgq{out} \in \logsupp\Qstage{j}}} \rot{\Qstage{j}}{-\pi/4} \\
        p_\mr{err} &= \begin{cases}
            \qty(\dm + 1 - \dm\lambda_{\lgq{out}}^{(k)})\plogtri{Y}(p, \dout) & \text{for the single-level scheme with any $k$} \\
            & \qquad \text{and the cultivation-MSD scheme with $k = 8$}, \\
            \qty[T_\mr{intv} - \dm\lambda_{\lgq{out}}^{(k)}]\plogtri{Y}(p, \dout) & \text{for the cultivation-MSD scheme with $k < 8$}.
        \end{cases}
    \end{align*}

    \item Correlated $\lgz$ errors on each pair $(\lgq{1}, \lgq{2}) \in \{ (\lgq{A}, \lgq{B}), (\lgq{C}, \lgq{D}) \}$, caused by Pauli-$Y$ error strings:
    \begin{align*}
        \lgc{U} &= \rot{\qty(\lgz_\qty{\lgq{1}})}{\pi/2} \rot{\qty(\lgz_\qty{\lgq{2}})}{\pi/2}, \\
        p_\mr{err} &= \begin{cases}
            \qty(\dm + 1)\plogtri{Z_1 Z_2}(p, \dx, \dz) & \text{for the single-level scheme with any $k$} \\
            & \qquad \text{and the cultivation-MSD scheme with $k = 8$}, \\
            T_\mr{intv} \plogtri{Z_1 Z_2}(p, \dx, \dz) & \text{for the cultivation-MSD scheme with $k < 8$}.
        \end{cases}
    \end{align*}

    \item Correlated $\lgx$ errors on each pair $(\lgq{1}, \lgq{2}) \in \{ (\lgq{A}, \lgq{B}), (\lgq{C}, \lgq{D}) \}$, caused by Pauli-$Y$ error strings:
    \begin{align*}
        \lgc{U} &= \prod_{\substack{j \leq k \\ \lgq{1} \in \logsupp\Pstage{j} \oplus \lgq{2} \in \logsupp\Pstage{j} }} \rot{\Pstage{j}}{-\pi/4} \prod_{\substack{j \leq k \\ \lgq{1} \in \logsupp\Qstage{j} \oplus \lgq{2} \in \logsupp\Qstage{j}}} \rot{\Qstage{j}}{-\pi/4} \\
        p_\mr{err} &= \begin{cases}
            \qty(\dm + 1 - \dm\lambda_{\lgq{}}^{(k)})\plogtri{X_1 X_2}(p, \dx, \dz) & \text{for the single-level scheme with any $k$} \\
            & \qquad \text{and the cultivation-MSD scheme with $k = 8$}, \\
            \qty[T_\mr{intv} - \dm\lambda_{\lgq{}}^{(k)}]\plogtri{X_1 X_2}(p, \dx, \dz) & \text{for the cultivation-MSD scheme with $k < 8$},
        \end{cases}
    \end{align*}
    where `$\oplus$' denotes the logical XOR operation.
    
    \item (Only for the single-level scheme) $\lgc{W} \in \qty{\lgx, \lgy, \lgz}$ error on each of qubits~\talpha and~\tbeta:
    \begin{align*}
        \lgc{U} = \begin{cases}
            \rot{\Pstage{k}}{\theta} & \text{for qubit~\talpha}, \\
            \rot{\Qstage{k}}{\theta} & \text{for qubit~\tbeta},
        \end{cases} \qquad
        p_\mr{err} = \begin{cases}
            \dm\plogtri{Z}(p, \dm') & \text{if } W=Z, \\
            \plogtri{W}(p, \dm') & \text{otherwise},
        \end{cases}
    \end{align*}
    where
    \begin{align*}
        \theta = \begin{cases}
            -\pi/4  & \text{if } W = X, \\
            \pi/4 & \text{if } W = Y, \\
            \pi/2 & \text{if } W = Z, \\
        \end{cases} \qquad
        \dm' = \begin{cases}
            \dm & \text{if } k \geq 3, \\
            \dz - 1 & \text{otherwise}.
        \end{cases}
    \end{align*}
    
    \item (Only for the cultivation-MSD scheme) On each of qubits~\talpha and~\tbeta, $\lgz$ and $\lgx$ errors after the lattice surgery and $\lgy$ errors after the magic state is prepared (including the growing operation if $\dm > \dcult$ and the idling operation):
    \begin{align*}
        \lgc{U} &= \begin{cases}
            \rot{\Pstage{k}}{\pi/2} & \text{for qubit~\talpha}, \\
            \rot{\Qstage{k}}{\pi/2} & \text{for qubit~\tbeta},
        \end{cases} \\
        p_\mr{err} &= \plogtri{Z}(p, \dm)/2 + \plogtri{X}(p, \dm)/2 + (\dm + 1)\plogtri{Y}(p, \dm) + (1 - \delta_{\dm, \dcult}) \plogtri{Y}(p, \dcult) + T_\mr{idle} \plogtri{Y}(p, \dm),
    \end{align*}
    where $T_\mr{idle}$ is the average number of rounds that the auxiliary patches idle before they are consumed.

    \item (Only for the cultivation-MSD scheme) $\lgx$ and $\lgz$ errors during the growing operation (if $\dm > \dcult$) and the idling operation on each of the two qubits~\talpha and~\tbeta: Two additional noise channels characterized by $(\lgc{U}_+, p_{\mr{err}})$ and $(\lgc{U}_-, p_{\mr{err}})$, where
    \begin{align*}
        \lgc{U}_\pm &= \begin{cases}
            \rot{\Pstage{k}}{\pm\pi/4} & \text{for qubit~\talpha},\\
            \rot{\Qstage{k}}{\pm\pi/4} & \text{for qubit~\tbeta},\\
        \end{cases} \\
        p_\mr{err} &= \frac{1 - \delta_{\dm, \dcult}}{2} \qty[\plogtri{X}(p, \dcult) + \plogtri{Z}(p, \dcult)] + \frac{T_\mr{idle}}{2} \qty[\plogtri{X}(p, \dm) + \plogtri{Z}(p, \dm)].
    \end{align*}
    A $\lgx$ or $\lgz$ error flips the outcome of the lattice surgery, which results in a $\rot{\Pstage{k}}{\pi/4}$ or $\rot{\Pstage{k}}{-\pi/4}$ error depending on the measurement outcome.
\end{enumerate}

\subsection{Errors in the ancillary region \label{subsec:errors_ancillary_region}}

During the merging operation of stage~$k$, the ancillary region may have spacelike and timelike error strings, which cause logical errors.
For simpler calculations, we assume that irregular checks in domain walls have negligible effects on the fault tolerance of the color code lattice.

We first consider spacelike error strings when $k \geq 3$.
Certain Pauli-$X$ ($Z$) error strings in the ancillary region may incur $\lgz$ errors on the logical qubits involved in $\PstageLS{k}$ ($\QstageLS{k}$).
It is guaranteed that their weights are not less than the corresponding code distances, as shown in Appendix~\ref{app:layout_fault_tolerance}.
Hence, considering the case where the output qubit is involved in $\PstageLS{k}$ as an example, the output qubit has an additional $\lgz$ error rate, which can be estimated from the $\lgz_2$ error rate of an imaginary rectangular patch with distances $\dx' = \dV + 6$ and $\dz' = \dout + 1$, where
\begin{align*}
    \dH &= \max(2\dz, \dout + 1) + (\Nm - N_\mr{m.side})(\dm + 1), \\
    \dV &= \max[\dz, N_\mr{m.side} (\dm + 1)]
\end{align*}
are the horizontal and vertical dimensions of the ancillary patch (defining $\Nm = N_\mr{m.side} = 1$ for the single-level MSD scheme).
Here $\dx'$ has a constant term $+6$ for covering interface regions in addition to the ancillary patch.
In the following, we list the estimated additional $\lgz$ error rates of the logical qubits made by spacelike errors in the ancillary region during stage~$k \geq 3$:
\begin{itemize}
    \item (Only for $k \geq 3$) Additional $\lgz$ error rate on each qubit in $\logsupp{\Pstage{k}} \subseteq \qty{\lgq{out}, \lgq{A}, \lgq{B}, \lgq{C}, \lgq{D}}$, which originates from horizontal Pauli-$Z$ error strings in the ancillary region:
    \begin{align*}
        p_\mr{err.add} = \dm \plogrec{Z_2} (p, \dV + 6, \dz'),
    \end{align*}
    where $\dz' = \dout + 1$ for the output qubit and $\dz' = \dz$ for the validation qubits.
    \item (Only for $k \geq 3$) Additional $\lgz$ error rate on each qubit in $\logsupp{\Qstage{k}} \subseteq \qty{\lgq{out}, \lgq{A}, \lgq{B}, \lgq{C}, \lgq{D}}$, which originates from horizontal Pauli-$X$ error strings in the ancillary region:
    \begin{align*}
        p_\mr{err.add} = \dm \plogrec{Z_1} (p, \dV + 6, \dz'),
    \end{align*}
    where $\dz' = \dout + 1$ for the output qubit and $\dz' = \dz$ for the validation qubits.
    \item (Only for $k \geq 3$) Additional $\lgz$ error rate on each qubit in $\logsupp{\Pstage{k}} \triangle \logsupp{\Qstage{k}} \subseteq \qty{\lgq{out}, \lgq{A}, \lgq{B}, \lgq{C}, \lgq{D}}$, where $A \triangle B \coloneqq A \cup B \setminus (A \cap B)$, which originates from horizontal Pauli-$Y$ error strings in the ancillary region:
    \begin{align*}
        p_\mr{err.add} = \dm \plogrec{Z_1 Z_2} (p, \dV + 6, \dz'),
    \end{align*}
    where $\dz' = \dout + 1$ for the output qubit and $\dz' = \dz$ for the validation qubits.
    \item (Only for the single-level scheme with $k \geq 3$) Additional $\lgz$ error rate on each of qubits~\talpha and~\tbeta, which originates from vertical Pauli-$Z$ and $Y$ error strings (for qubit~\talpha) or vertical Pauli-$X$ and $Y$ error strings (for qubit~\tbeta) in the ancillary region:
    \begin{align*}
        p_\mr{err.add} = \dm \plogrec{X_i} (p, \dm + 1, \dH + 6) + \dm \plogrec{X_1 X_2} (p, \dm + 1, \dH + 6),
    \end{align*}
    where $i=1$ for qubit~\talpha and $i=2$ for qubit~\tbeta.
    \item (Only for the cultivation-MSD scheme with $k \geq 3$) Additional $\lgy$ error rate on each of qubits~\talpha and~\tbeta (item~7 of Sec.~\ref{subsec:memory_errors}), which originates from vertical Pauli-$Z$ and $Y$ error strings (for qubit~\talpha) or vertical Pauli-$X$ and $Y$ error strings (for qubit~\tbeta) in the ancillary region:
    \begin{align*}
        p_\mr{err.add} ={} &\frac{N_\mr{m.side}}{\Nm} \dm \qty[\plogrec{X_i} (p, \dm + 1, \dH + 6) + \plogrec{X_1 X_2} (p, \dm + 1, \dH + 6)] \\
        &+ \frac{\Nm - N_\mr{m.side}}{\Nm} \dm \qty[\plogrec{Z_j} (p, \dV + 6, \dm + 1) + \plogrec{Z_1 Z_2} (p, \dV + 6, \dm + 1)],
    \end{align*}
    where $(i, j) = (1, 2)$ for qubit~\talpha and $(i, j) = (2, 1)$ for qubit~\tbeta.
    This is the average of error rates for the $2\Nm$ auxiliary patches.
\end{itemize}

When $k < 3$, the single-level MSD scheme employs the simple layout as shown in Figs.~\ref{fig:lattice_surgery_simple}(a) and~(b).
As in the case of $k \geq 3$, we consider an imaginary rectangular patch covering the thin ancillary region of the layout for estimating additional logical error rates from spacelike error strings.
On the other hand, the cultivation-MSD scheme uses the full layout in Fig.~\ref{fig:msd_layout_twolevel} even for $k < 3$, and the ancillary region is not guaranteed to be distance-preserving unlike the cases of $k \geq 3$.
For example, stage~1 involves the measurement of $\PstageLS{1} = \lgz_\mr{A} \otimes \lgy_\mralpha$, thus a $\lgz$ error on qubit~A is equivalent to a $\lgy$ error on qubit~\talpha, meaning that there may exist \gx-string operators with weight $\min(\dz, \dm + 1)$ in the ancillary region that cause a $\lgz$ error on qubit~A.
Hence, we instead consider an imaginary rectangular patch with distances $\dx'=\dV$ and $\dz' = \min(\dz, \dm + 1)$.
We thus obtain the following additional error rates for $k < 3$:
\begin{itemize}
    \item (Only for the single-level scheme with $k < 3$) Additional $\lgz$ error rate on each of qubits~A, C (if $k=1$) or qubits~B, D (if $k=2$):
    \begin{align*}
        p_\mr{err.add} = \dm\max\qty[\plogrec{Z_1} (p, 4, \dz), \plogrec{Z_2} (p, 4, \dz)] + \dm\plogrec{Z_1 Z_2}(p, 4, \dz).
    \end{align*}
    Note that we take the maximum between $\plogrec{Z_1}$ and $\plogrec{Z_2}$ due to ambiguity from a Pauli-permuting domain wall.
    \item (Only for the cultivation-MSD scheme with $k < 3$) Additional $\lgz$ error rate on each of qubits~A, C (if $k=1$) or qubits~B, D (if $k=2$):
    \begin{align*}
        p_\mr{err.add} = \dm\plogrec{Z_i}(p, \dV + 6, \min(\dz, \dm + 1)) + \dm\plogrec{Z_1 Z_2}(p, \dV + 6, \min(\dz, \dm + 1))
    \end{align*}
    where $i=2$ for qubits~A and~B and $i=1$ for qubits~C and~D.
    Note that $p^{(k)}$ involves qubits~A or~B and $q^{(k)}$ involves qubits~C or~D.
\end{itemize}

Next, the ancillary region may also have timelike error strings.
A timelike error string that corrupts red $X$-type checks (e.g., $\dm$ consecutive measurement errors of a red $X$-type check) flips the measurement outcome of $\Pstage{k}$, as its value is determined from the product of $X$-type check outcomes in the ancillary region; see Fig.~\ref{fig:lattice_surgery_full}(a).
In the single-level scheme, this is equivalent to a $\lgx$ error on qubit~\talpha after the lattice surgery.
In the cultivation-MSD scheme, this makes a $\rot{\Pstage{k}}{\pi/4}$ or $\rot{\Pstage{k}}{-\pi/4}$ error on the output and validation qubit, depending on the measurement outcome of $\Pstage{k} \otimes \lgy_\mralpha$, which is the same effect as a $\lgx$ error during the growing operation (item~8 in Sec.~\ref{subsec:memory_errors}).
Considering timelike error strings corrupting $Z$-type checks as well, we obtain the following noise channels for each stage $k$:
\begin{itemize}
    \item (Only for the single-level scheme) Additional $\lgx$ error rate on each of qubits~\talpha and~\tbeta:
    \begin{align*}
        p_\mr{err.add} = \begin{cases}
            \ptimelike{X}(p, \dH + 6, \dV + 6, \dm) & \text{for qubit~\talpha when $k > 3$}, \\
            \ptimelike{Z}(p, \dH + 6, \dV + 6, \dm) & \text{for qubit~\tbeta when $k > 3$}, \\
            \max\qty[\ptimelike{X}(p, \dz, 4, \dm), \ptimelike{Z}(p, \dz, 4, \dm)] & \text{when $k < 3$},
        \end{cases}
    \end{align*}
    \item (Only for the cultivation-MSD scheme) Additional error rate on $p_\mr{err}$ of item~8 in Sec.~\ref{subsec:memory_errors}:
    \begin{align*}
        p_\mr{err.add} = \begin{cases}
            \ptimelike{X}(p, \dH + 6, \dV + 6, \dm)/2 & \text{for qubit~\talpha}, \\
            \ptimelike{Z}(p, \dH + 6, \dV + 6, \dm)/2 & \text{for qubit~\tbeta}, \\
        \end{cases}
    \end{align*}
\end{itemize}

\subsection{Errors from non-Clifford components \label{subsec:errors_from_nonclifford_components}}

The non-Clifford components of our MSD schemes comprise the faulty T-measurement for the single-level scheme and cultivation for the cultivation-MSD scheme.
We now explore how errors in these components affect the output logical states.

For the single-level scheme, based on the argument in Sec.~\ref{subsec:faulty_T_measurement}, we handle errors caused by faulty T-measurements as follows:
\begin{itemize}
    \item (Only for the single-level scheme) For each stage $k$, each of qubits~\talpha and~\tbeta additionally has a $\lgx$ error rate of $2p/3$, a $\lgy$ error rate of $2p/3$, and a $\lgz$ error rate of $5p/3$.
\end{itemize}

For the cultivation-MSD scheme, a $\lgx$ or $\lgz$ error on a magic state has the same effect as the memory error during the growing operation, which is covered in item~8 of Sec.~\ref{subsec:memory_errors}.
A $\lgy$ error can be handled just as an additional $\lgy$ error.
Hence, the effects of errors from cultivation are as follows for each stage $k$:
\begin{itemize}
    \item (Only for the cultivation-MSD scheme) Additional error rate on $p_\mr{err}$ of item~8 in Sec.~\ref{subsec:memory_errors}:
    \begin{align*}
        p_\mr{err.add} = \frac{1}{2}\qty[\pcult{X} + \pcult{Z}]
    \end{align*}
    \item (Only for the cultivation-MSD scheme) Additional $\lgy$ error rate on each of qubits~\talpha and~\tbeta:
    \begin{align*}
        p_\mr{err.add} = \pcult{Y}
    \end{align*}
\end{itemize}

\section{Analytic expressions of the success probability and output infidelity of MSD \label{app:MSD_performance_expressions}}

In this appendix, we present the analytic expressions of the success probability $q_\mr{succ}$ and the output infidelity $\infidMSD$ of the single-level and cultivation-MSD schemes, as functions of code distances ($\dout, \dx, \dz, \dm$), physical noise strength $p$, and the logical error rates of several patches.
The expressions are obtained via the method in Sec.~\ref{subsec:analysis_method} and Appendix~\ref{app:determining_noise_channels}.
For simplicity, we assume that, for a parameter $\pyratio \geq 0$,
\begin{align*}
    \plogtri{}(p, d) &\coloneqq \plogtri{X}(p, d) = \plogtri{Z}(p, d) = \plogtri{}(p, d) / \pyratio, \\
    \plogrec{X}(p, \dx, \dz) &\coloneqq \plogrec{X_1}(p, \dx, \dz) = \plogrec{X_2}(p, \dx, \dz) = \plogrec{X_1 X_2}(p, \dx, \dz) / \pyratio, \\
     \plogrec{Z}(p, \dx, \dz) &\coloneqq \plogrec{Z_1}(p, \dx, \dz) = \plogrec{Z_2}(p, \dx, \dz) = \plogrec{Z_1 Z_2}(p, \dx, \dz) / \pyratio, \\
     \pcult{} &\coloneqq \pcult{X} = \pcult{Z} = \pcult{Y} / \pyratio, \\
     \pgrow{}(p, \dcult, \dm, c_\mr{gap}) &\coloneqq \pgrow{X}(p, \dcult, \dm, c_\mr{gap}) = \pgrow{Z}(p, \dcult, \dm, c_\mr{gap}) = \pgrow{Y}(p, \dcult, \dm, c_\mr{gap}) / \pyratio, \\
     L_\mr{H} L_\mr{V} \ptimelike{} (p, T) &\coloneqq \ptimelike{X}(p, L_\mr{H}, L_\mr{V}, T) = \ptimelike{Z}(p, L_\mr{H}, L_\mr{V}, T),
\end{align*}
where the last line implies that $\ptimelike{X}(p, L_\mr{H}, L_\mr{V}, T) = \ptimelike{Z}(p, L_\mr{H}, L_\mr{V}, T) \propto L_\mr{H} L_\mr{V} \propto \text{(Area of the patch)}$.

\subsection{Single-level MSD scheme \label{subsec:analytic_expression_sng}}

We define
\begin{align*}
    p_\mr{out} &\coloneqq \plogtri{}(p, \dout), & p_\mr{anc.out} &\coloneqq \plogrec{Z}(p, \dV + 6, \dout + 1), \\
    p_\mr{rec.X} &\coloneqq \plogrec{X}(p, \dx, \dz), & p_\mr{anc.rec} &\coloneqq \dm(1 + \pyratio)\plogrec{Z}(p, \dV + 6, \dz), \\
    p_\mr{rec.Z} &\coloneqq \pyratio (\dm + 1) \plogrec{Z}(p, \dx, \dz), & p_\mr{anc.rec.pre} &\coloneqq \dm(1 + \pyratio)\plogrec{Z}(p, 4, \dz),\\
    p_\mr{m} &\coloneqq (2\dm + 1 + \pyratio) \plogtri{}(p, \dm), & p_\mr{anc.m} &\coloneqq \dm(1 + \pyratio)\plogrec{X}(p, \dm + 1, \dH + 6), \\
    p_\mr{timelike} &\coloneqq (\dH + 6)(\dV + 6)\ptimelike{}(p, \dm), & &
\end{align*}
where $\dH \coloneqq \max\qty(\dout+1, 2\dz)$ and $\dV \coloneqq \max\qty(\dm+1, \dz)$.
We present terms of $q_\mr{succ}$ and $\infidMSD$ up to $O(p^3)$, $O(p^2\lgc{p})$, and $O(\lgc{p}\lgc{p}')$, where $\lgc{p}$ and $\lgc{p}'$ are any of the symbols defined above.
In addition, we show only the lowest-order term among non-constant terms having comparable orders (e.g., if the expression has a term involving $p p_\mr{rec.X}$, terms involving $p^2 p_\mr{rec.X}$ are omitted).
The obtained expressions of $\infidMSD$ and $q_\mr{succ}$ are as follows:
\begin{align*}
    \infidMSD \approx {}& 35 \qty(\frac{7}{3}p)^{3}
    + \frac{16 \dm + 19 + \pyratio}{4} p_\mr{out}
    + \dm \qty(7 + \pyratio) p_\mr{anc.out} \\
    &+ \frac{7}{2} \qty(\frac{7}{3}p)^{2} \qty(15  p_\mr{m} + 22 p_\mr{anc.m} + 8 p_\mr{anc.rec.pre} + \frac{11\qty(\dH + 6) \qty(\dV + 6) + 16 d_\mr{Z}}{\qty(\dH + 6) \qty(\dV + 6)} p_\mr{timelike}) \\
    &+ \qty(8p_\mr{rec.Z} + 4p_\mr{anc.rec} + \frac{27 + 19\pyratio}{8} p_\mr{rec.X})\qty(\frac{14}{3}p + p_\mr{m} + 2p_\mr{anc.m} + p_\mr{timelike}) \\
    &+ 4 p_\mr{rec.X} p_\mr{rec.Z}
    + 2 p_\mr{rec.X} p_\mr{anc.rec}
    + \frac{31 + 68 \pyratio}{8} p_\mr{rec.X}^{2}, \\
    q_\mr{succ} \approx {}&  1 - 35p
    - \frac{15}{2} p_\mr{m}
    - 11 p_\mr{anc.m}
    - 4 p_\mr{anc.rec.pre}
    - \frac{11\qty(\dH + 6) \qty(\dV + 6) + 16 d_\mr{Z}}{2 \qty(\dH + 6) \qty(\dV + 6)} p_\mr{timelike}
    - \frac{16 \qty(2 + \pyratio)}{\pyratio} p_\mr{rec.Z} \\
    &- \frac{4 \qty(5 + 3 \pyratio)}{1 + \pyratio} p_\mr{anc.rec}
    - \frac{4 \dm + 71 + 35 \pyratio}{4} p_\mr{rec.X}
    - 2 \qty(1 + \pyratio) p_\mr{out}.
\end{align*}
It is worth noting that $q_\mr{dist}$ does not contain a term of $O(p_\mr{rec.X})$, which is consistent with the argument in Sec.~\ref{subsec:distillation_circuit} showing that the MSD circuit is tolerant to a single-location $\lgx$ error on one of the validation qubits.

\subsection{Cultivation-MSD scheme \label{subsec:analytic_expression_combined_scheme}}

\begin{figure*}[!t]
	\centering
	\includegraphics[width=0.8\linewidth]{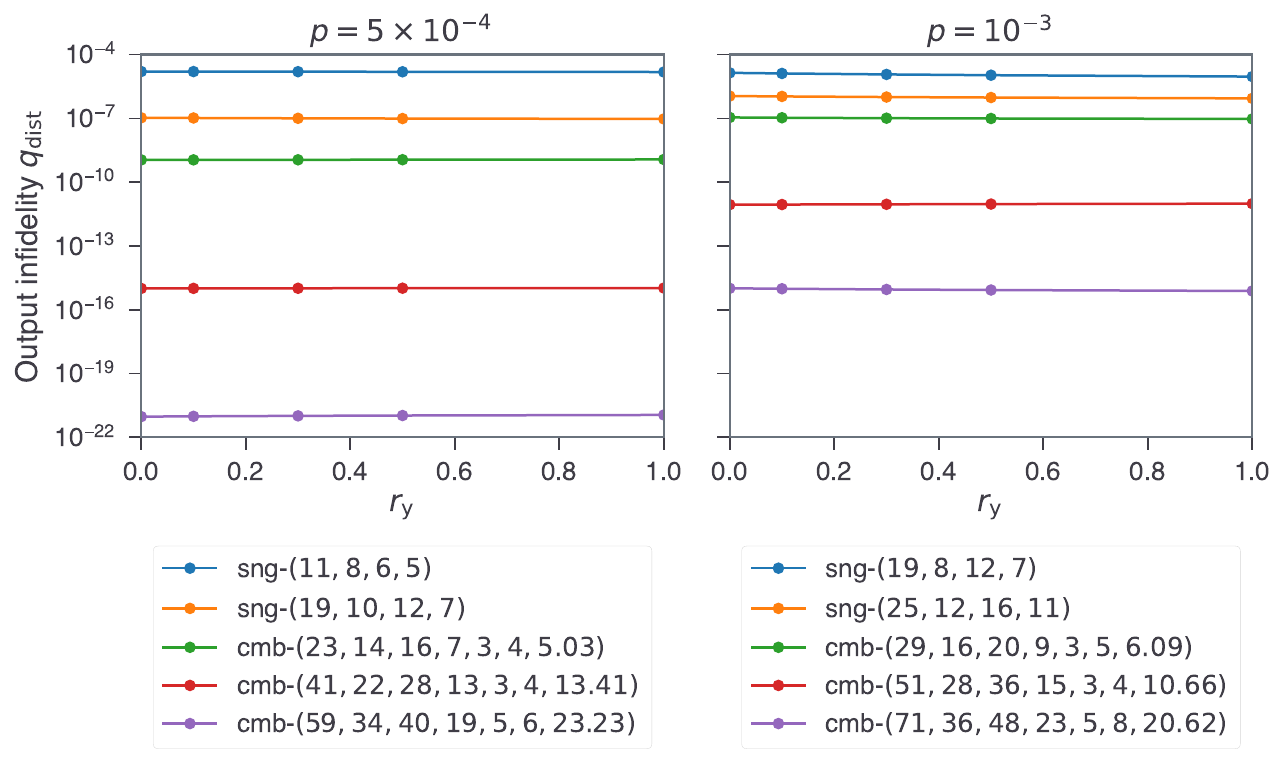}
	\caption{
        Dependency of the output infidelity on $\pyratio$ at $p \in \qty{5 \times 10^{-4}, 10^{-3}}$ for several variants of the single-level and cultivation-MSD schemes, denoted as `sng-$(\dout, \dx, \dz, \dm)$' and `cmb-$(\dout, \dx, \dz, \dm, \dcult, \Nm, c_\mr{gap})$', respectively.
 }
	\label{fig:pyratio_dependency}
\end{figure*}

We define
\begin{align*}
    p_\mr{out} &\coloneqq \plogtri{}(p, \dout), \\
    p_\mr{rec.X} &\coloneqq (T_\mr{intv} - \dm) \plogrec{X}(p, \dx, \dz), \\
    p_\mr{rec.Z} &\coloneqq \pyratio T_\mr{intv} \plogrec{Z}(p, \dx, \dz), \\
    p_\mr{m} &\coloneqq \qty[(1 + \pyratio)(T_\mr{idle} + \dm) + 1] \plogtri{}(p, \dm), \\
    p_\mr{timelike} &\coloneqq (\dH + 6)(\dV + 6) \ptimelike{}(p, \dm), \\
    p_\mr{magic} &\coloneqq (1 + \pyratio) \qty[\pcult{} + \qty(1 - \delta_{\dm, \dcult}) \pgrow{} (p, \dcult, \dm, c_\mr{gap})], \\
    p_\mr{anc.out} &\coloneqq \plogrec{Z}(p, \dV + 6, \dout + 1), \\
    p_\mr{anc.rec} &\coloneqq \dm(1 + \pyratio)\plogrec{Z}(p, \dV + 6, \dz), \\
    p_\mr{anc.rec.pre} &\coloneqq \dm(1 + \pyratio)\plogrec{Z}(p, \dm + 1, \dV + 6), \\
    p_\mr{anc.m} &\coloneqq \dm(1 + \pyratio) \plogrec{X}(p, \dm + 1, \dH + 6),
\end{align*}
where $\dH$ and $\dV$ are defined in Eq.~\eqref{eq:combined_scheme_anc_region_dim}, $T_\mr{intv}$ is the average number of rounds between successive stages, $T_\mr{idle}$ is the average number of rounds that auxiliary patches idle before they are consumed.
Note that $T_\mr{intv}$ and $T_\mr{idle}$ can be estimated by simulating the procedure of the cultivation-MSD scheme, as described in Sec.~\ref{subsec:combined_scheme} and implemented Ref.~\cite{github:msd-magic-state-prep-cycle-simulation}.
We compute terms of $q_\mr{succ}$ and $\infidMSD$ up to $O(\lgc{p}\lgc{p}'\lgc{p}'')$, where $\lgc{p}$, $\lgc{p}'$, and $\lgc{p}''$ are any of the symbols defined above and show only the lowest-order term (except a constant) among terms having comparable orders.
The obtained expressions of $\infidMSD$ and $q_\mr{succ}$ are then as follows:
\begin{align*}
    \infidMSD \approx{}& 35 \qty(p_\mr{magic} + p_\mr{m} + p_\mr{anc.m} + \frac{1}{2} p_\mr{timelike})^3 \\
    &+ 28p_\mr{anc.rec.pre} \qty(p_\mr{anc.m} + p_\mr{magic} + p_\mr{m} + \frac{1}{2} p_\mr{timelike})^2 \\
    &+ \qty(\frac{27 + 19 \pyratio}{4} p_\mr{rec.X} + 16 p_\mr{rec.Z} + 8p_\mr{anc.rec} + 6 p_\mr{anc.rec.pre}^{2}) \qty(p_\mr{magic} + p_\mr{m} + p_\mr{anc.m} + \frac{1}{2} p_\mr{timelike}) \\
    &+ \frac{19 T_\mr{intv} - 3 \dm + (T_\mr{intv} - \dm) \pyratio}{4} p_\mr{out} 
    + \dm \qty(7 + \pyratio) p_\mr{anc.out} \\
    &+ \frac{31 + 68 \pyratio}{8} p_\mr{rec.X}^{2} + \qty[\qty(\frac{5}{2} + \pyratio) p_\mr{anc.rec.pre}^2 + 2p_\mr{anc.rec} + 4p_\mr{rec.Z}] p_\mr{rec.X}, \\
    q_\mr{succ} \approx{}& 1 - 15 \qty(p_\mr{magic} + p_\mr{m} + p_\mr{anc.m} + \frac{1}{2} p_\mr{timelike})
    - \qty(\frac{71 + 35 \pyratio}{4} + \frac{\dm}{T_\mr{intv} - \dm}) p_\mr{rec.X}
    - \frac{16 \qty(2 + \pyratio)}{\pyratio} p_\mr{rec.Z} \\
    &- \frac{20 + 12 \pyratio}{1 + \pyratio} p_\mr{anc.rec}
    -4 p_\mr{anc.rec.pre}.
\end{align*}
\end{widetext}

\section{Dependency analysis of MSD performance on $\pyratio$ \label{app:pyratio_dependency}}

In Fig.~\ref{fig:pyratio_dependency}, we plot the output infidelities $q_\mr{dist}$ of our single-level and cultivation-MSD schemes against $\pyratio$ for several combinations of parameters at $p \in \qty{5 \times 10^{-4}, 10^{-3}}$, showing that $q_\mr{dist}$ does not significantly depend on $\pyratio$.
Even if we compare the two extreme cases of $\pyratio = 0$ and $\pyratio = 1$, the difference in infidelity is at most only a factor of $\sim 1.5$.

\section{Integrated cultivation + growing simulations \label{app:integrated_cult_growing_simulations}}

When analyzing the cultivation-MSD scheme in Sec.~\ref{sec:performance_analysis}, we handled the cultivation and growing operations as two separate modules.
In other words, we assumed that the cultivation process outputs a magic state with a given infidelity and success probability, which is inputted to the subsequent growing operation.
We simulated each operation independently and combined the results.
This was for keeping the modularity of our scheme by treating the cultivation part as a `black box'.
However, in an actual process, the `interface' between these two modules may have non-negligible effects on the overall performance of the scheme, thus simulating the two modules independently may give inaccurate results.

In this appendix, we analyze such effects in the case of using the scheme in Ref.~\cite{gidney2024magic} for cultivation.
We combine the cultivation and growing circuits into a single circuit and simulate it jointly.
Note that, as done in Ref.~\cite{gidney2024magic}, we replace $T$ with $S$ in the cultivation circuit for its efficient simulation using \textit{Stim} \cite{gidney2021stim} and track the value of $\lgy$.
Consequently, the cultivation logical error rate is estimated by doubling the obtained logical error rate of the modified circuit, following the same assumption in Ref.~\cite{gidney2024magic}.

The main difference from our original analysis arises from detectors connecting the cultivation and growing parts.
Originally, the last layer of detectors in the cultivation circuit for its simulation connects the final logical check measurements and the following virtual perfect syndrome measurements after the cultivation.
Similarly, the first layer of detectors (involving the initial patch) in the growing circuit connects the first syndrome measurements and the virtual perfect logical preparation layer before the growth.
By combining the two circuits, these two layers are merged naturally into a single layer of detectors.
Importantly, the values of these detectors are now used for post-selection of cultivation, not for decoding the growing operation.
Namely, we abort cultivation if any of these detectors gives $-1$, which is more probable than in the original scenario (assuming perfect syndrome measurements after the cultivation), implying that the success probability of cultivation may be reduced by this modification.

\begin{table}[bt!]
\centering
\begin{ruledtabular}
\begin{tabular}{cccc}
$p$ & $\dcult$ & $q_\mr{cult}^\mr{succ}$ (original) & $q_\mr{cult}^\mr{succ}$ (integrated) \\ \hline
$5 \times 10^{-4}$ & 3 & 0.83 & 0.80 \\
$5 \times 10^{-4}$ & 5 & 0.35 & 0.35 \\
$10^{-3}$ & 3 & 0.65 & 0.64 \\
$10^{-3}$ & 5 & 0.15 & 0.12
\end{tabular}
\end{ruledtabular}
\caption{Success rates $q_\mr{cult}^\mr{succ}$ of cultivation for the original circuit \cite{gidney2024magic} and the integrated cultivation + growing circuit.}
\label{table:cultivation_succ_rate_org_int_comparison}
\end{table}

We now present the results obtained from our simulations.
First, Table~\ref{table:cultivation_succ_rate_org_int_comparison} shows the success rates of the cultivation module within the integrated cultivation + growing circuit, alongside those from the original circuit for comparison.
This aligns with our expectation that integration would decrease the success rate, although the observed differences are not significant ($< 4\%$ difference) except in the case of $(p, \dcult) = (10^{-3}, 5)$.

An end-to-end resource cost analysis is presented in Figs.~\ref{fig:integrated_cult_growing_time_cost} and~\ref{fig:integrated_cult_growing_spacetime_cost}, showing the logical error rate as a function of the effective time cost and the spacetime cost, respectively, for $p \in \qty{5 \times 10^{-4}, 10^{-3}}$.
For comparison, these figures also contain the curves (dashed lines) obtained by combining results from our original simulations.

\begin{figure*}[!t]
	\centering
	\includegraphics[width=\textwidth]{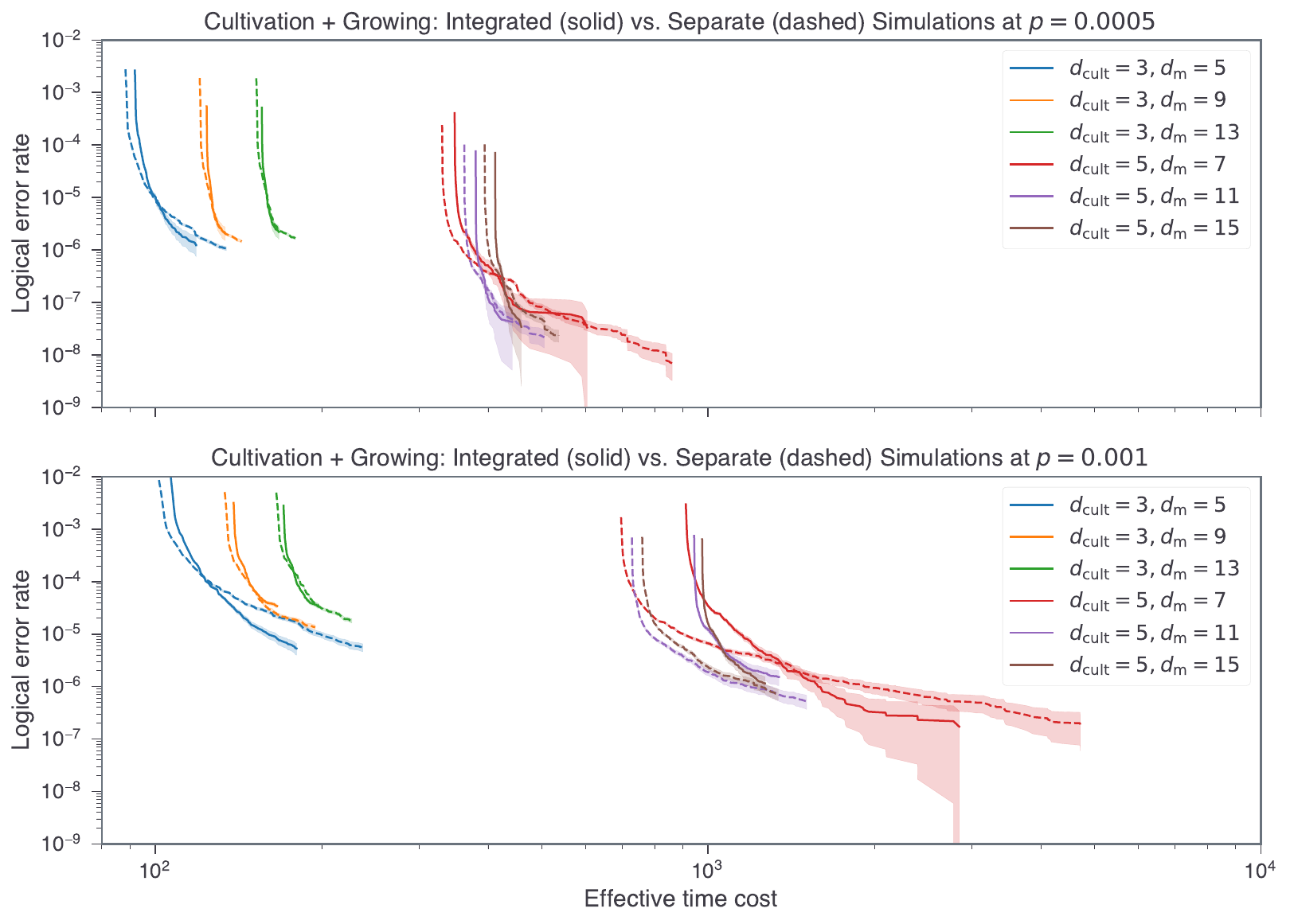}
	\caption{Logical error rate of the cultivation + growing circuit, plotted against the effective time cost for $p \in \qty{5 \times 10^{-4}, 10^{-3}}$ and various combinations of $\dcult$ and $\dm$. For comparison, integrated scenarios (where the cultivation and growing modules are simulated jointly) and separate scenarios (where the two modules are simulated separately and the results are combined straightforwardly) are represented by solid and dashed lines.}
	\label{fig:integrated_cult_growing_time_cost}
\end{figure*}

\begin{figure*}[!t]
	\centering
	\includegraphics[width=\textwidth]{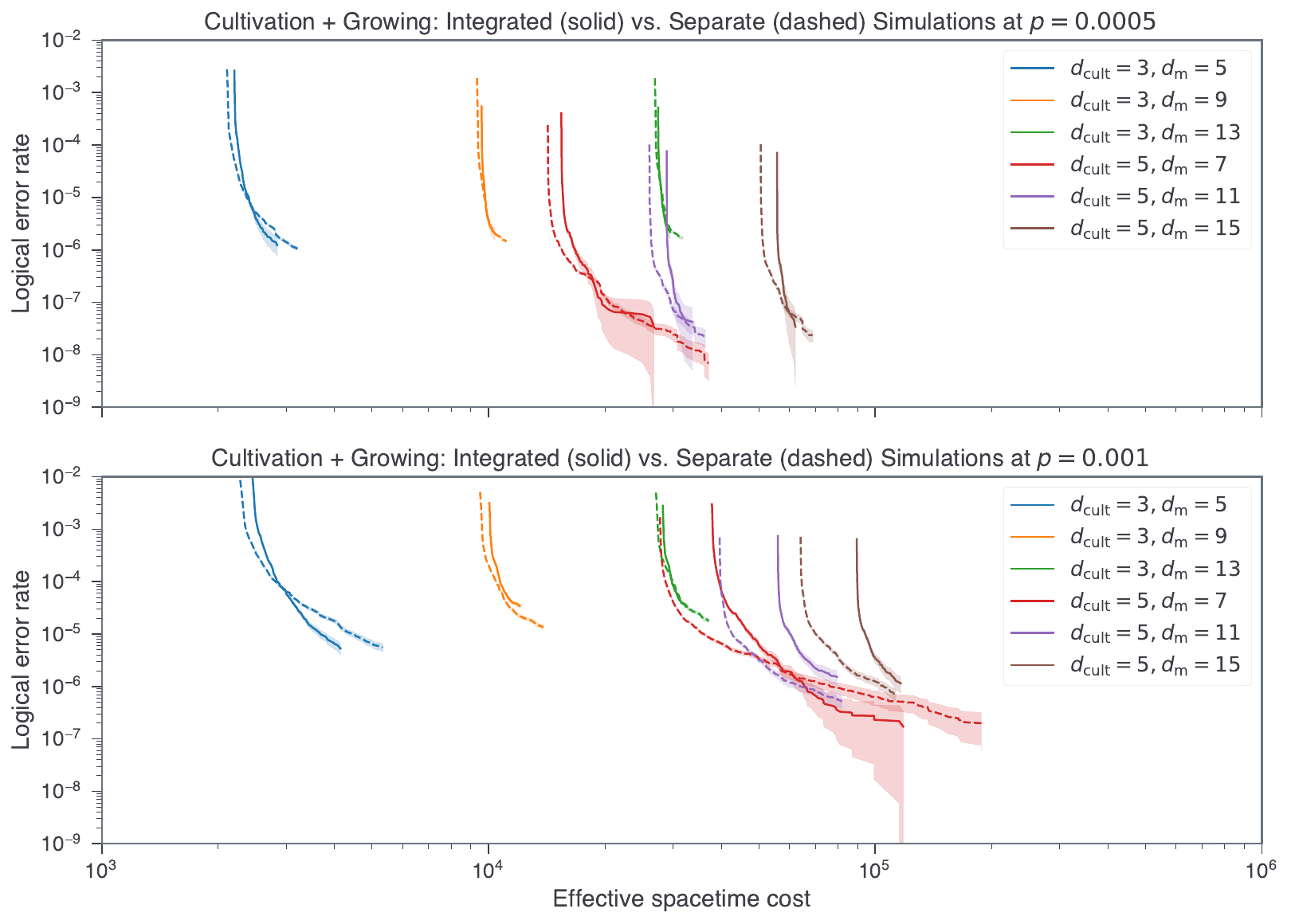}
	\caption{Logical error rate of the cultivation + growing circuit, plotted against the effective spacetime cost for $p \in \qty{5 \times 10^{-4}, 10^{-3}}$ and various combinations of $\dcult$ and $\dm$. For comparison, integrated and separate scenarios are represented by solid and dashed lines.}
	\label{fig:integrated_cult_growing_spacetime_cost}
\end{figure*}

The time cost analyzed in Fig.~\ref{fig:integrated_cult_growing_time_cost} is the one that directly affects the cost of our cultivation-MSD scheme, as it affects $T_\mr{m}$ (average number of rounds between successive stages).
The discrepancy in the time costs between the integrated and separate scenarios is particularly notable for $(p, \dcult) = (10^{-3}, 5)$, where they differ by at most about 1.7 times, implying that the spacetime cost of the cultivation-MSD scheme can be increased by a similar factor.
Additionally, the minimum achievable logical infidelity of the cultivation-MSD scheme also can be affected.
Although the large confidence intervals for the logical error rates in Fig.~\ref{fig:integrated_cult_growing_time_cost} make precise estimation difficult, we conjecture that the minimum infidelity could increase by approximately 3–30 times for both $p = 5 \times 10^{-4}$ and $10^{-3}$. This conjecture is based on the observation that the minimum logical error rates for the integrated scenario are roughly 1.5–3 times higher than those for the separated scenario when $\dcult = 5$ and $\dm \in \qty{11, 15}$.

The spacetime cost analyzed in Fig.~\ref{fig:integrated_cult_growing_spacetime_cost} is not directly related to our resource analysis for MSD, but we place the figure here for readers who are interested in a `color-code-only' cultivation scheme that does not involve conversion to a surface code unlike the scheme in Ref.~\cite{gidney2024magic}.

The codes we have used for the above simulations are available in the \textit{color-code-stim} package \cite{github:colorcodestim}.

\bibliography{bibliography}

\end{document}